    \theoremstyle{acmplain}
    \newtheorem{obs}[theorem]{Observation}
    \theoremstyle{acmdefinition}
    \newtheorem{remark}[theorem]{Remark}}
\DeclareMathAlphabet{\mathbbm}{U}{bbold}{m}{n}
\definecolor{blue}{HTML}{1E88E5}
\definecolor{yellow}{HTML}{FFC107}
\let\p@ragraph\paragraph
\renewcommand{\paragraph}[1]{\p@ragraph{#1.}}
\DeclarePairedDelimiter\abs{\lvert}{\rvert}
\let\geom\abs
\let\subset\subseteq
\DeclareMathOperator{\height}{ht}
\DeclareMathOperator{\ar}{ar}
\DeclareMathOperator{\CSP}{CSP}
\DeclareMathOperator{\PCSP}{PCSP}
\DeclareMathOperator{\pol}{pol}
\DeclareMathOperator{\spol}{spol}
\DeclareMathOperator{\Hom}{Hom}
\DeclareMathOperator{\mhom}{mhom}
\DeclareMathOperator{\hp@l}{hpol}
\newcommand{\hpol}[2][\relax]{\ifx #1\relax \hp@l(#2) \else \hp@l^{(#1)}(#2) \fi}
\DeclareMathOperator{\Ann}{Ann}
\DeclareMathOperator{\coker}{coker}
\DeclareMathOperator{\im}{im}
\DeclareMathOperator{\@d}{id}
\newcommand{\id}{{\@d}}
\let\bd\partial
\let\rel\mathbf
\let\minion\mathscr
\let\frk\mathfrak
\newcommand{\ssX}{{X}}
\newcommand{\ssY}{{Y}}
\newcommand{\eml}{Y}
\newcommand{\maptosigma}{t}
\newcommand{\NP}{\textsf{NP}}
\newcommand{\Ptime}{\textsf{P}}
 \let\R\reals
\newcommand\ints{{\mathbb Z}}  \let\Z\ints
\newcommand{\codd}{C_\ell}
\newcommand{\Ztwo}{{\ints_2}}
\newcommand{\zplus}{\mathfrak{Z}_+}
\newcommand{\zminus}{\mathfrak{Z}_-}
\newcommand{\affine}{\minion Z}
\def\imod#1{\allowbreak\mkern10mu({\operator@font mod}\,\,#1)}
\newcommand{\slice}{s}
\DeclareMathOperator{\diag}{diag}
\newcommand{\blueb}{{\color{blue} \bullet}}
\newcommand{\yellowb}{{\color{yellow} \bullet}}
\newcommand{\grLambda}{{{\mathbb Z}[{\mathbb Z}_2]}}
\newcommand{\gromega}{{\nu}}
\let\epsilon\varepsilon
\title{Hardness of 4-Colouring $G$-Colourable Graphs}
\author{Sergey Avvakumov}
\email{savvakumov@tauex.tau.ac.il}
\affiliation{%
  \institution{Tel Aviv University}
  \city{Tel Aviv}
  \country{Israel}
}
\author{Marek Filakovský}
\email{filakovsky@fi.muni.cz}
\affiliation{%
  \institution{Masaryk University}
  \city{Brno}
  \country{Czech Republic}
}
\author{Jakub Opršal}
\email{j.oprsal@bham.ac.uk}
\affiliation{%
  \institution{University of Birmingham}
  \city{Birmingham}
  \country{UK}
}
\author{Gianluca Tasinato}
\email{gianluca.tasinato@ist.ac.at}
\affiliation{%
  \institution{ISTA}
  \city{Klosterneuburg}
  \country{Austria}
}
\author{Uli Wagner}
\email{uli@ist.ac.at}
\affiliation{%
  \institution{ISTA}
  \city{Klosterneuburg}
  \country{Austria}
}
\begin{abstract}
 We study the complexity of a class of promise graph homomorphism problems. For a fixed graph $H$, the $H$-colouring problem is to decide whether a given graph has a homomorphism to $H$. By a result of Hell and Nešetřil, this problem is NP-hard for any non-bipartite loop-less graph $H$. Brakensiek and Guruswami [SODA 2018] conjectured the hardness extends to promise graph homomorphism problems as follows: fix a pair of non-bipartite loop-less graphs $G$, $H$ such that there is a homomorphism from $G$ to $H$, it is NP-hard to distinguish between graphs that are $G$-colourable and those that are not $H$-colourable. We confirm this conjecture in the cases when both $G$ and $H$ are 4-colourable. This is a common generalisation of previous results of Khanna, Linial, and Safra [Comb. 20(3): 393-415 (2000)] and of Krokhin and Opršal [FOCS 2019]. The result is obtained by combining the algebraic approach to promise constraint satisfaction with methods of topological combinatorics and equivariant obstruction theory.
\end{abstract}
\keywords{graph colouring, constraint satisfaction problems, promise problems, topological methods, equivariant cohomology}
\begin{document}

\maketitle

\begin{acks}
  This research was supported by the Austrian Science Fund (FWF project P31312-N35) and by project MSCAfellow5\_MUNI (CZ.\penalty20 02.\penalty20 01.\penalty20 01/\penalty20 00/22\_010/0003229) financed by the Ministry of Education, Youth and Sports of the Czech Republic.
  This project has also received funding from the European Union's Horizon 2020 research and innovation programme under the Marie Skłodowska-Curie Grant Agreement No 101034413.
\end{acks}

\section{Introduction}

Deciding whether a given finite graph is $3$-colourable (or, in general, $k$-colourable, for a fixed $k\geq 3$) was one of the first problems shown to be \NP-complete (\citet{Karp1972}).
Since then, the complexity of \emph{approximating} the chromatic number of a graph has been studied extensively; in particular, it is known that 
the chromatic number of an $n$-vertex graph cannot be approximated in polynomial time within a factor of $n^{1-\varepsilon}$, for any fixed $\varepsilon>0$, unless $\Ptime=\NP$ (\citet{Zuckerman}).

However, this inapproximability result only applies to graphs whose chromatic number grows with the number of vertices; by contrast, the case of graphs with bounded chromatic number is much less well understood. For instance, given an input graph that is promised to be $3$-colourable, what is the complexity of finding a colouring of $G$ with some larger number $k>3$ of colours? Khanna, Linial, and Safra~\cite{KLS00} proved that this is \NP-hard for $k=4$ (see also \cite{GuruswamiK04,BrakensiekG16}), and only quite recently Bulín, Krokhin, and Opršal \cite{BKO19} showed \NP-hardness for $k=5$. On the other hand, the currently best polynomial-time algorithm for colouring $3$-colourable graphs, due to Kawarabayashi, Thorup, and Yoneda~\cite{KawarabayashiTY24}, uses $k=\tilde O(n^{0.19747})$ colours, where $n$ is the number of vertices of the input graph.

In general, it is believed that colouring $c$-colourable graphs with $k$ colours is \NP-hard for all constants $k \geq c \geq 3$. However, the best results known to date (apart from the above) are \NP-hardness for $c=4$ and $k=7$ (Bulín, Krokhin, and Opršal \cite{BKO19}), and for $c\ge 5$ and $k=\binom c{\lfloor c/2\rfloor}-1$ (Wrochna and Živný \cite{WZ20}). Moreover, conditional hardness results --- assuming different variants of Khot's \emph{Unique Games Conjecture} --- have been obtained for all $k\geq c \geq 3$ by Dinur, Mossel, and Regev \cite{DMR09}, Guruswami and Sandeep \cite{GS20}, and Braverman, Khot, Lifshitz, and Minzer~\cite{BKLM22}.

In the present paper, we study a generalisation of this question. A \emph{graph homomorphism} $f\colon G\to H$ between two graphs is a map $f\colon V(G)\to V(H)$ between the vertex sets that preserves edges, i.e., $(u,v)\in E(G)$ implies $(f(u),f(v))\in E(H)$; we write $G \to H$ if such a homomorphism exists. Throughout this paper, we assume all graphs to be finite and undirected and we treat them as symmetric binary relational structures, i.e., we view the edge set $E(G)$ as a subset of $V(G) \times V(G)$ that satisfies $(u,v)\in E(G)$ if and only if $(v,u)\in E(G)$, and we allow loops, i.e., edges of the form $(v,v)$. A graph homomorphism $f\colon G\to H$ is also called an $H$-\emph{colouring} of $G$ since a $k$-colouring of $G$ is the same as a homomorphism from $G$ to the complete (loopless) graph $K_k$ on $k$ vertices.
Vastly generalising the fact that $k$-colouring is \NP-hard if $k\geq 3$, and in \Ptime{} if $k \le 2$, Hell and Nešetřil \cite{HellN90} proved the following dichotomy: For every fixed graph $H$, deciding whether a given input graph admits an $H$-colouring is \NP-complete, unless $H$ is bipartite or has a loop, in which case the problem is in \Ptime{}. 
Analogously to approximate graph colouring, it is natural to consider the complexity of the following \emph{promise graph homomorphism problem}: Fix two graphs $G$ and $H$ such that $G \to H$. What is the complexity of $H$-colouring graphs that are promised to be $G$-colourable? More precisely, we consider the decision version of this problem, denoted by $\PCSP(G, H)$: Given an input graph $I$, output \textsf{YES} if $I\to G$ and \textsf{NO} if $I\not\to H$ (no output is required if neither is the case). Brakensiek and Guruswami conjectured \cite[Conjecture 1.2]{BrakensiekG21} that $\PCSP(G, H)$ is \NP-hard for all non-bipartite, loopless graphs $G$ and $H$ (i.e., unless the problem is guaranteed to lie in \Ptime{} by the Hell–Nešetřil dichotomy). This problem fits into the much broader framework of \emph{promise constraint satisfaction problems} (\emph{PCSP}s), from which the notation is adopted; see the survey \cite{KO22} for more background.

As a first step towards the Brakensiek--Guruswami conjecture, Krokhin and Opršal \cite{KO19} showed that $\PCSP(G,K_3)$ is \NP-hard for every $3$-colourable non-bipartite graph $G$. Their proof was based on ideas from algebraic topology; this topological intuition was formalised by Wrochna and Živný~\cite{WZ20} (and in the joint journal version \cite{KOWZ23}). We extend this to $4$-colouring:

\begin{theorem} \label{thm:main}
Let $G$ be a non-bipartite 4-colourable graph. Then $\PCSP(G, K_4)$ is \NP-hard.
\end{theorem}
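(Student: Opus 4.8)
The plan is to follow the algebraic approach to PCSPs combined with topological obstruction theory, mirroring and extending the Krokhin--Opršal argument for $K_3$. The starting point is the general reduction machinery: to prove $\NP$-hardness of $\PCSP(G, K_4)$ it suffices to exhibit a source problem known to be $\NP$-hard (for instance $(2k+1)$-colouring, or better, a suitable Boolean PCSP such as $(2+\epsilon)$-approximate hypergraph colouring, or Gap-Label-Cover / $3$-LIN) together with a gadget reduction to $\PCSP(G, K_4)$. By the algebraic characterisation this amounts to analysing the \emph{polymorphism minion} $\minion M = \pol(G, K_4)$ and showing it is ``topologically trivial'' in a precise sense — concretely, that it admits no nontrivial minion homomorphism (or only ones collapsing under the obstruction), so that the corresponding promise problem cannot have bounded width / cannot be solved by the basic LP or affine relaxations, and in fact is $\NP$-hard.

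The key steps, in order: First I would set up the topological invariant attached to a polymorphism $f\colon G^n \to K_4$. Since $K_4$ is the boundary of the $3$-simplex, its ``topological realisation'' (the standard complex, or the box complex / hom complex $\Hom(K_2, K_4)$ or the appropriate subdivided version) is a $2$-sphere $S^2$ with a free $\Z_2$-action; similarly $G$ non-bipartite means its box complex has nontrivial $\Z_2$-equivariant topology (at least connected, with a free action that does not extend). A polymorphism $G^n \to K_4$ induces, via these functorial complexes, a $\Z_2$-equivariant map from an $n$-fold product-type complex into $S^2$. The crucial combinatorial-topological input is a \emph{classification of such equivariant maps up to the relevant notion of homotopy/coordinate-structure}: using equivariant obstruction theory with coefficients in $\pi_2(S^2) = \Z$ carrying the sign $\Z_2$-action (so really the local system $\Z$ with the antipodal twist, i.e.\ the ``$\pm\Z$'' coefficients that appear in the $\grLambda = \Z[\Z_2]$-module bookkeeping visible in the macros), one extracts from each polymorphism a well-defined ``coordinate'' — an element of $H^2$ of a product of circles/spheres with twisted coefficients — and shows this assignment is a minion homomorphism to a known hard minion (plausibly the affine minion $\affine$ over $\Z_2$, or an odd-selection / parity-type minion, matching the $K_3$ case where one gets $\Z_2$-affine structure). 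Second, I would verify the minion-homomorphism axioms: that the obstruction class is natural under the minor operations $f \mapsto f(x_{\pi(1)}, \dots, x_{\pi(n)})$, which follows from functoriality of the box complex and naturality of obstruction classes. Third, invoke the hardness theorem for the target minion: if $\pol(G,K_4)$ maps onto the affine minion over $\Z_2$ (or another minion without the required symmetries — e.g.\ one satisfying no nontrivial ``loop condition'' / no Olšák or cyclic term of the right arity), then $\PCSP(G,K_4)$ is $\NP$-hard by the Barto--Bulín--Krokhin--Opršal criterion.

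The main obstacle — and the reason this goes beyond $K_3$ — is the following. For $K_3$ the relevant complex is a circle $S^1$, and equivariant maps out of products of circles are classified cleanly by a single ``winding number''/degree living in $H^1$ with $\Z$ coefficients, giving directly a $\Z$- (hence $\Z_2$-) valued linear invariant; the algebra is essentially one-dimensional. For $K_4$ the target is $S^2$, which is simply connected, so first cohomology vanishes and the obstruction lives in $H^2$ with \emph{twisted} integer coefficients; one must control not just a primary obstruction but the interaction of the $\Z_2$-action on $\pi_2$ with the cell structure of the (high-dimensional, $n$-fold product) source complex, and rule out that higher-dimensional ``slack'' lets the equivariant map be perturbed away from any rigid coordinate structure. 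Concretely, the hard part will be proving that the $H^2$-valued obstruction, despite the lack of a $\pi_1$ invariant, still (i) is well-defined on polymorphisms (independent of homotopies within the allowed class), (ii) is \emph{multilinear/additive} in the right way across the product coordinates so that it assembles into a genuine minion homomorphism rather than a mess, and (iii) detects enough to land in a provably-hard minion. This is presumably where the ``equivariant obstruction theory'' of the abstract does its real work: a careful cochain-level computation over $\grLambda = \Z[\Z_2]$, identifying the obstruction cocycle, showing its class is a minion homomorphism into (conjecturally) $\affine$ over $\Z_2$ — equivalently, that $G$ being only $4$-colourable forces enough topological nontriviality that no affine/bounded-width algorithm can solve $\PCSP(G,K_4)$, yet $G \to K_4$ keeps the problem well-posed. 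The restriction that $G$ itself be $4$-colourable is what makes the source complex's equivariant topology exactly ``$S^2$-like'' enough to run this analysis — extending past $4$ would require understanding $S^{k}$-valued equivariant obstructions for $k \ge 3$, which is the natural next barrier.
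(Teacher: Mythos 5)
Your topological setup is essentially the paper's: pass through hom-complexes, use that $\Hom(K_2,K_4)$ is an $S^2$ with a free $\Ztwo$-action, and classify the resulting equivariant maps via equivariant obstruction theory with coefficients $\pi_2(S^2)\cong\Z$ carrying the sign action, obtaining an affine $\Ztwo$-valued invariant that is natural under minors. (Two remarks on that part: the paper first reduces to $G=C_\ell$ an odd cycle contained in $G$, so that the source complex is a torus $T^n=\geom{\Gamma_{4\ell}^n}$ — for a general non-bipartite $G$ the box complex is not a circle and the classification you invoke is not available; and because maps into $S^2$ itself are not classified by the primary obstruction, the paper replaces $S^2$ by an Eilenberg--MacLane space $\eml$ before computing $[T^n,\eml]_{\Ztwo}\cong\affine_2^{(n)}$. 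Your "higher-dimensional slack" worry is exactly this point, and your sketch is compatible with it.)

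The genuine gap is your third step. A minion homomorphism $\pol(G,K_4)\to\affine_2$ does \emph{not} imply \NP-hardness, and there is no Barto--Bulín--Krokhin--Opršal criterion of the form you invoke. The criterion actually used (Proposition 5.14 of~\cite{BBKO21}, Theorem~\ref{thm:hardness} here) requires the target minion to contain no constant \emph{and to have bounded essential arity}, and $\affine_2$ badly fails the latter: it contains $x_1+\dots+x_n$ for every odd $n$. Indeed $\affine_2$ sits inside the polymorphism minion of linear equations over $\Ztwo$, a tractable problem, so a homomorphism into it can never by itself certify hardness (already in the $K_3$ case of Krokhin--Opršal one needs the extra fact that the winding numbers arising from polymorphisms are bounded by $O(\ell)$). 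What is missing from your plan is the paper's fourth step, Theorem~\ref{thm:bounded-arity}: a purely combinatorial argument showing that any equivariant simplicial map $\Gamma_{4\ell}^n\to\Sigma^2$ whose homotopy class is the monomial $\prod_{i\in I}z_i$ must have $\abs I=O(\ell^2)$. This is proved by an averaging argument over ``slices'' (embedded $2$-tori along generalized diagonals at prescribed heights), using that $\deg_i=1$ forces colour-swapping edges in direction $i$ in every slice, and that a long alternating chain of colour-swaps would produce an alternating $3$-simplex, which no simplicial map to $\Sigma^2$ admits. Only with this arity bound does the image of $\phi$ become a minion of bounded essential arity without constants, at which point Theorem~\ref{thm:hardness} applies. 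This counting step is the paper's central new contribution (described there as elusive for several years), and without it — or some substitute hardness criterion that tolerates unbounded essential arity, which is not currently known — your argument does not reach \NP-hardness.
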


The proof of Theorem~\ref{thm:main}, whose structure we present in detail in Section~\ref{sec:overview} below, builds on and significantly extends the topological approach used in \cite{KOWZ23}, bringing to bear more powerful tools from algebraic topology as well as more refined combinatorial arguments. By a simple reduction, Theorem~\ref{thm:main} reduces to the special case where $G=\codd$ is a cycle of arbitrary odd length $\ell \geq 3$ (Theorem~\ref{thm:main-cycles}). 

We rely on a general algebraic theory of PCSPs by Barto, Bulín, Krokhin, and Opršal~\cite{BBKO21}, which guarantees that the complexity of the problem only depends on its \emph{polymorphisms}. In our case, a polymorphism is a graph homomorphism $f\colon \codd^n \to K_4$ where $n$ is a natural number and $\codd^n=\codd \times \dots \times \codd$ is the $n$-fold power of $\codd$ (see Section~\ref{sec:Polymorphisms}).

The proof of Theorem~\ref{thm:main-cycles} has two parts.
In the first part of the proof, we use topological methods (\emph{homomorphism complexes} and \emph{equivariant obstruction theory}) to show that with every polymorphism $f$, we can associate a map $\phi(f)\colon \Z_2^n\to \Z_2$ of the form $\phi(f)(x_1,\dots, x_n)=\sum_{i=1}^n \alpha_i x_i$, for some 
$\alpha_i\in \Z_2$ such that 
$\sum_{i=1}^n \alpha_i$ is odd 
(Lemma~\ref{lem:minion-homomorphism}). Moreover, the map $f\mapsto \phi(f)$ preserves natural \emph{minor relations} between polymorphisms that arise from substituting and permuting variables; in technical terms, $\phi$ is a \emph{minion homomorphism} (see Definition~\ref{def:minion-homomorphism}).
The second part of the proof uses combinatorial arguments to show (Theorem~\ref{thm:bounded-arity}) that the affine maps $\phi(f)\colon \Z_2^n\to \Z_2$ arising from polymorphisms are of \emph{bounded essential arity}: the number of non-zero coefficients $\alpha_i\in \Z_2$ describing $\phi(f)$ is at most $O(\ell^2)$, independently of $n$. Hardness of $\PCSP(\codd,K_4)$ then follows from a criterion (Theorem~\ref{thm:hardness}) obtained as part of the general algebraic theory developed in~\cite{BBKO21}.

\paragraph{Related work}
Graph colouring and $H$-colouring are examples of \emph{constraint satisfaction problems} (\emph{CSP}s), a general framework that encompasses many other fundamental problems including 3SAT, HornSAT, solving systems of linear equations, and linear programming. CSPs can be formulated in several equivalent ways; the one most relevant for us is in terms of homomorphisms between relational structures: fix a relational structure $\rel A$ (e.g., a graph, a digraph, or a uniform hypergraph), the CSP with template $\rel A$, denoted by $\CSP(\rel A)$, is the problem of deciding whether a given input structure $\rel I$ allows a homomorphism $\rel I \to \rel A$. Thus, $H$-colouring is the same as $\CSP(H)$. A key result in the complexity theory of CSP's (which is a culmination of decades of research and subsumes various previous results, including the Hell--Nešetřil dichotomy and an earlier one by Schaefer~\cite{Schaefer1978} for Boolean CSPs) is a general Dichotomy Theorem of Bulatov~\cite{Bul17} and Zhuk~\cite{Zhu20}, which asserts that for every finite relational structure $\rel A$, $\CSP(\rel A)$ is either \NP-complete, or solvable in polynomial time.
\emph{Promise constraint satisfaction problems} (\emph{PCSP}s) are a natural extension of CSPs, analogous to how the promise graph homomorphism problem extends the $H$-colouring problem.
This notion of PCSPs was introduced by Austrin, Guruswami, and Håstad~\cite{AGH17}, and the general theory was further developed by Brakensiek and Guruswami~\cite{BrakensiekG21}, and by Barto, Bulín, Krokhin, and Opršal~\cite{BBKO21}.

In addition to the aforementioned work \cite{KO19,WZ20,KOWZ23}, several other recent papers explore the connection between topology and the computational complexity of (P)CSPs. Schnider and Weber~\cite{SchniderW24} showed that Schaefer's dichotomy for the complexity of Boolean CSPs is reflected by the ``topological complexity'' of their solution spaces: the solution spaces of NP-complete problems are topologically arbitrarily complicated (in a precise technical sense), whereas the solution spaces of polynomial-time solvable problems are homotopy equivalent to a discrete set; this was further generalised to arbitrary CSPs by Meyer~\cite{Meyer24}. Meyer and Opršal~\cite{MeyerO24} gave a new, topological proof of the Hell–Nešetřil dichotomy, and Filakovský, Nakajima, Opršal, Tasinato, and Wagner~\cite{FilakovskyNOTW24} use topological methods related to the ones in the present paper to show that a certain hypergraph PCSP is \NP-hard. We believe that these results, and the ones presented here, are just the starting point of a promising line of research and that topological methods have the potential to yield further complexity-theoretic insights in the future.

\paragraph{Acknowledgement} We are grateful to Andrei Krokhin, Marcin Wrochna, Standa Živný, and Libor Barto for valuable discussions during the early stages of this project. In particular, we would like to thank Marcin Wrochna and Libor Barto for sharing with us key observations that played an important role in the construction of a minion homomorphism to affine $\mathbb Z_2$-maps.

\section{Preliminaries}\label{sec:notation}

We denote the identity function on a set $A$ by $1_A$, we use the notation $[n] = \{1, \dots, n\}$, and the symbol $\sqcup$ for disjoint union.

\subsection{Polymorphisms and a Hardness Criterion}
  \label{sec:Polymorphisms}

We outline the fundamentals of the algebraic theory of PCSPs, in particular the core concept of \emph{polymorphisms} and the hardness criterion (Theorem~\ref{thm:hardness}) used in the proof of Theorem~\ref{thm:main}, focusing on special the case of graphs; see a survey by Krokhin and Opršal \cite{KO22} for a detailed treatment.

Given two graphs $G_1$ and $G_2$, their product $G_1 \times G_2$ is defined by $V(G_1\times G_2)=V(G_1) \times V(G_2)$ and $((u_1,u_2), (v_1,v_2))\in E(G_1\times G_2)$ if and only if $(u_1,v_1)\in E(G_1)$ and $(u_2,v_2)\in E(G_2)$; moreover, we denote by $G^n=G\times \dots \times G$ the product of $n$ copies of $G$.

\begin{definition}
  An $n$-ary \emph{polymorphism} from a graph $G$ to a graph $H$ is homomorphism $f\colon G^n\to H$, in other words, a map $f\colon V(G)^n \to V(H)$ such that $(f(u_1, \dots, u_n), f(v_1, \dots, v_n)) \in E(H)$ whenever $(u_1, v_1), \dots, (u_n, v_n) \in E(G)$.
  We denote%
    \footnote{Somewhat unconventionally, we use lower-case notation for polymorphisms to highlight that we are not considering any topology on them, in contrast to the homomorphism complexes introduced below.}
  the set of all polymorphisms from $G$ to $H$ by $\pol(G, H)$, and the set of $n$-ary polymorphisms by $\pol^{(n)}(G, H)$.
\end{definition}

Polymorphisms are enough to describe the complexity of a promise CSP up to certain $\log$-space reductions \cite[Theorem 2.20]{KO22}. Loosely speaking, the more complex the polymorphisms are, the easier the problem is.
To formalise this, we define the notions of \emph{minor}, \emph{minion}, \emph{minion homomorphism}, and \emph{essential arity} which are necessary to formulate the hardness criterion.

Let $\pi\colon [n] \to [m]$, and let $A$ and $B$ be sets. The \emph{$\pi$-minor} of a function $f\colon A^n \to B$ is the function $f^\pi \colon A^m \to B$ given by $f^\pi (x_1, \dots, x_m) = f(x_{\pi(1)}, \dots, x_{\pi(n)})$ for all $x_1, \dots, x_m \in A$ (equivalently, if we view elements of $x\in A^n$ as functions $x\colon [n]\to A$, then $f^\pi (x) = f(x \circ \pi)$). A subset of the set of all functions $\{f\colon A^n\to B, n>0\}$ that is non-empty and closed under taking minors is called a \emph{function minion}. For example, it is easy to see that $\pol(G, H)$ has this property whenever $G$ and $H$ are graphs such that $G\to H$. Abstracting from this, we arrive at the following notion:

\begin{definition} \label{def:minion} \label{def:minion-homomorphism}
  An \emph{(abstract) minion} $\minion M$ is a collection of non-empty sets $\minion M^{(n)}$, where $n > 0$ is an integer, and mappings
  \[
    \pi^\minion M \colon \minion M^{(n)} \to \minion M^{(m)},
  \]
  for $\pi\colon [n] \to [m]$, which satisfy $\pi^\minion M \circ \sigma^\minion M = (\pi\circ \sigma)^\minion M$ whenever $\pi \circ \sigma$ is defined, and $(1_{[n]})^\minion M = 1_{\minion M^{(n)}}$.
  We will often write $f^\pi$ instead of $\pi^\minion M(f)$, and call this element the $\pi$-minor of $f$.

 A \emph{minion homomorphism} from a minion $\minion M$ to a minion $\minion N$ is a collection of mappings $\xi_n\colon \minion M^{(n)} \to \minion N^{(n)}$ that preserve taking minors, i.e., such that for each $\pi\colon [n] \to [m]$, $\xi_m\circ \pi^\minion M = \pi^\minion N\circ \xi_n$.
We denote such a homomorphism simply by $\xi\colon \minion M \to \minion N$, and write $\xi(f)$ instead of $\xi_n(f)$ when the index is clear from the context.%
\end{definition}

Given a minion $\minion{M}$, an element $f \in \minion M^{(n)}$ is said to have \emph{essential arity at most $k$} if it is a minor of an element $g \in \minion M^{(k)}$. If there is a bound $N$, such that every element of $\minion M$ has essential arity at most $N$, $\minion M$ is said to have \emph{bounded essential arity}. An element $f\in \minion M^{(n)}$ is \emph{constant} if all its minors coincide, i.e., $f^\pi = f^\sigma$ for all $m > 0$ and $\pi, \sigma \colon [n] \to [m]$.
For example, in function minions, being constant coincides with the usual notion of being a constant function, and if a function $f\colon A^n\to B$ depends only on a subset of variables with indices $\{i_1, \dots, i_k\}$, then $f(x_1,\dots, x_n)= g(x_{i_1}, \dots, x_{i_k})$, so $f$ is of arity at most $k$.
Our proof uses the following hardness criterion.

\begin{theorem}[{\cite[Proposition 5.14]{BBKO21}}]\label{thm:hardness}
  Let $G$ and $H$ be two graphs such that $G \to H$.
  If there exists a minion homomorphism
  \[
    \xi\colon \pol(G, H) \to \minion B
  \]
  for some minion $\minion B$ of bounded essential arity which does not contain a constant, then $\PCSP(G, H)$ is \NP-complete.
\end{theorem}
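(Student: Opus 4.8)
I would prove \NP-hardness by a polynomial-time reduction from \emph{Gap Label Cover}, which is \NP-hard by the PCP theorem together with the parallel repetition theorem; membership in \NP{} is clear, since one can nondeterministically guess a map $V(I)\to V(G)$ and check that it is a homomorphism --- such a map exists exactly for the \textsf{YES} instances, as any homomorphism into $G$ composes with the fixed homomorphism $G\to H$, ruling out the \textsf{NO} instances. Let $N$ be a bound on the essential arity of $\minion B$; note that $N$ depends only on $\minion B$, hence ultimately only on $G$ and $H$. I would fix a Label Cover instance $\Lambda$ with a common label set $[n]$ for all variables $v\in V$ and a projection $\pi_e\colon[n]\to[n]$ on each constraint edge $e=(u,v)$, the constraint being ``$\pi_e(\ell(v))=\ell(u)$'', and apply parallel repetition enough times that in the soundness case no labeling satisfies more than a $1/N$-fraction of the constraints. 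Since $N$ is a constant, the number of repetitions is constant, so $n$ and $|V|, |E|$ remain polynomial in the size of the original \textsf{SAT} instance.

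\textbf{The gadget and completeness.} From $\Lambda$ I would build the input graph $I$ using the standard PCSP template gadget: take one disjoint copy $\Gamma_v\cong G^n$ for each variable $v\in V$, and for each edge $e=(u,v)$ glue $\Gamma_u$ onto $\Gamma_v$ along the homomorphism $G^n\to G^n$ induced by $\pi_e$, so that the copy of $\Gamma_u$ sits inside $\Gamma_v$ as its $\pi_e$-minor. Because $n$ is constant, each $\Gamma_v$ has a constant number of vertices and $I$ has polynomial size. If $\ell$ is a labeling of $\Lambda$ satisfying every constraint, map each $\Gamma_v\cong G^n$ to $G$ by the projection onto coordinate $\ell(v)$; these cloud-wise maps agree along every glued edge precisely because $\pi_e(\ell(v))=\ell(u)$, so together they form a homomorphism $I\to G$, and $I$ is a \textsf{YES} instance.

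\textbf{Soundness.} Suppose $h\colon I\to H$. Its restriction to $\Gamma_v$ is a homomorphism $G^n\to H$, i.e.\ a polymorphism $f_v\in\pol^{(n)}(G,H)$, and the gluing forces $f_u=f_v^{\pi_e}$ for every edge $e=(u,v)$. Applying the minion homomorphism $\xi$ yields $b_v:=\xi(f_v)\in\minion B^{(n)}$ with $b_u=b_v^{\pi_e}$. Since $\minion B$ has essential arity at most $N$, each $b_v$ is a minor of some $N$-ary element of $\minion B$, and since $\minion B$ contains no constant, each $b_v$ is non-constant; hence $b_v$ has a well-defined non-empty ``support'' $\mathcal L(v)\subseteq[n]$ with $1\le|\mathcal L(v)|\le N$, and taking the $\pi_e$-minor pushes supports forward, giving $\mathcal L(u)\subseteq\pi_e(\mathcal L(v))$ for every edge. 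Now decode $\Lambda$ by assigning to each $v$ an independent uniformly random label from $\mathcal L(v)$: for a constraint $e=(u,v)$, since $\mathcal L(u)\subseteq\pi_e(\mathcal L(v))$ and $\pi_e$ is a function there are at least $|\mathcal L(u)|$ elements of $\mathcal L(v)$ that $\pi_e$ maps into $\mathcal L(u)$, and a short computation gives $\Pr[\pi_e(\ell(v))=\ell(u)]\ge 1/|\mathcal L(v)|\ge 1/N$. Thus some labeling satisfies at least a $1/N$-fraction of the constraints, contradicting the soundness guarantee; hence $I\not\to H$ and $I$ is a \textsf{NO} instance. This completes the reduction and the proof.

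\textbf{Main obstacle.} I expect the crux to be the structural input about abstract minions used in the soundness step: making rigorous the notion of the ``support'' $\mathcal L(v)$ of an element of an abstract minion, showing that bounded essential arity together with non-constancy makes it a non-empty set of size at most $N$ (this requires that minimal supports are well behaved --- e.g.\ closed under intersection --- which is not automatic, since an abstract minion element carries no literal dependency structure, only the minor identities $b^\pi=b^\sigma$), and showing that it transforms correctly under minors as $\mathcal L(b^\pi)\subseteq\pi(\mathcal L(b))$. This is precisely the point where the hypothesis that $\minion B$ contains no constant enters. Once that lemma about minions of bounded essential arity is in hand, the gadget construction, completeness, the probabilistic decoding, and calibrating the number of parallel repetitions are all routine.
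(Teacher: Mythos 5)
First, note that the paper does not prove this statement at all --- it is imported verbatim from \cite{BBKO21} (Proposition 5.14) --- so the relevant comparison is with the proof given there, and your reduction is exactly that proof's skeleton: NP membership by guessing a $G$-colouring, a label-cover gadget with one copy of $G^n$ per variable glued along the minor maps $\pi_e$, completeness by projecting each cloud to the coordinate given by a satisfying labelling, and soundness by pushing a hypothetical $H$-colouring through $\xi$ and decoding each $b_v=\xi(f_v)\in\minion B^{(n)}$ to a bounded set of labels. The genuine gap is the one you flag yourself, and as formulated it does not go through: an element of an \emph{abstract} minion has no canonical support, and there is in general no assignment $b\mapsto\mathcal L(b)$ with $1\le\abs{\mathcal L(b)}\le N$ satisfying $\mathcal L(b^\pi)\subseteq\pi(\mathcal L(b))$; witness sets are highly non-unique, minimal ones need not behave well under intersection, and the intersection over all witnesses can be empty even for a non-constant element. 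So the containment $\mathcal L(u)\subseteq\pi_e(\mathcal L(v))$, and with it your $1/N$ bound, cannot be taken as the decoding invariant.

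The repair (which is how \cite{BBKO21} argues) keeps your construction but weakens the invariant from containment to intersection. For each $v$ fix an arbitrary witness $b_v=g_v^{\tau_v}$ with $g_v\in\minion B^{(N)}$, $\tau_v\colon[N]\to[n]$, and put $\mathcal L(v)=\tau_v([N])$. The key lemma is: if $b\in\minion B^{(n)}$ has two witnesses $b=g^{\tau}=h^{\sigma}$ whose images are disjoint, then $b$ is constant. Indeed, given any $\alpha,\beta\colon[n]\to[m]$, let $\gamma\colon[n]\to[m]$ agree with $\alpha$ on the image of $\tau$ and with $\beta$ elsewhere; then $b^{\alpha}=g^{\alpha\circ\tau}=g^{\gamma\circ\tau}=b^{\gamma}=h^{\gamma\circ\sigma}=h^{\beta\circ\sigma}=b^{\beta}$, using only the minion axioms. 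Now for an edge $e=(u,v)$ the identity $b_u=b_v^{\pi_e}=g_v^{\pi_e\circ\tau_v}$ exhibits $\pi_e(\mathcal L(v))$ as a second witness image for $b_u$ alongside $\mathcal L(u)$; since $\minion B$ contains no constant, these two sets intersect. Random decoding from the sets $\mathcal L(\cdot)$ then satisfies each constraint with probability at least $1/N^{2}$ rather than $1/N$, so you must calibrate the label-cover soundness below $1/N^{2}$; as $N$ is a constant depending only on $\minion B$, nothing else in your argument changes. With this lemma in place, the rest of your proposal (gadget, completeness, parallel repetition with constant label-set size, NP membership) is correct.
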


\subsection{Topology and Homomorphism Complexes}
\label{sec:hom-complexes}

We review a number of topological notions that we will need in what follows, in particular the notion of \emph{homomorphism complexes}, a well-known construction in topological combinatorics that goes back to the work of Lovász \cite{Lovasz-Kneser}.
We refer the reader to \citet{Hat02} and  \citet{Mat03} for accessible general introductions to algebraic topology and topological combinatorics, respectively, and to \citet{Koz08} for an in-depth treatment of homomorphism complexes.

\paragraph{Simplicial sets}
In applications of topological methods in combinatorics and theoretical computer science, topological spaces are often specified combinatorially as simplicial complexes. For our purposes, it will be convenient to work instead with \emph{simplicial sets}, which generalize simplicial complexes in a way analogous to how directed multigraphs generalize simple graphs. Simplicial sets are a somewhat less common notion in topological combinatorics, but play an important role in homotopy theory, see \citet{Friedman2012} for a gentle combinatorial introduction.

Similarly to a simplicial complex, a simplicial set is a combinatorial description of how to build a space from vertices, edges, triangles, and higher-dimensional simplices. Informally speaking, we view the vertex set of each $n$-dimensional simplex as totally ordered (equivalently, labelled by $\{0,1,\dots,n\}$) and we are allowed to glue simplices together by linear maps between them that are given by (not necessarily strictly) monotone maps between their vertex sets. On the one hand, this permits more general glueings than in simplicial complexes (which allows constructing spaces using fewer simplices): for instance, we may glue both endpoints of an edge to the same vertex (creating a loop), or glue the endpoints of multiple edges to the same pair of vertices, or we may glue the the boundary of a triangle to a single vertex, forming a $2$-dimensional sphere $S^2$. On the other hand, the description is still purely combinatorial and, moreover, retains the information about the ordering of the vertices of each simplex before the glueing. This yields a natural notion of products of simplicial sets and will play an important role in the combinatorial arguments below.% The formal definition is as follows:

\begin{definition}
A~\emph{simplicial set} $X$ is given by the following data: First, a collection of pairwise disjoint sets $X_0$, $X_1$, $X_2$, \dots; the elements of $X_n$ are called the \emph{$n$-simplices} of $X$. Second, for every pair of integers $m,n \geq 0$ and every (not necessarily strictly) monotone map $\alpha\colon \{0,\dots,m\} \to \{0,\dots,n\}$, there is a map $\alpha^X \colon X_n \to X_m$, such that $1_{\{0,\dots,n\}}^X = 1_{X_n}$ and such that $(\alpha \circ \beta)^X=\beta^X \circ \alpha^X$ whenever the composition is defined.
\end{definition}

Every simplicial set $X$ defines a topological space $\geom X$, the \emph{geometric realization} of $X$, which is obtained by glueing geometric simplices together according to the combinatorial data in $X$; we refer to \cite[Section~4]{Friedman2012} for a precise definition. We say that a simplicial set $X$ is a \emph{triangulation} of a topological space $T$ if $\geom X$ is homeomorphic to $T$. A $k$-simplex $\sigma \in X_k$ is called \emph{degenerate} if $\sigma = \alpha^X(\tau)$ for some $\tau \in X_m$ and $\alpha\colon \{0,\dots,k\} \to \{0,\dots,m\}$ with $m<k$. In the geometric realization, degenerate simplices are collapsed down to lower-dimensional simplices, and $\geom X$ is the disjoint union of the interiors of non-degenerate simplices;%
  \footnote{In more technical terms, $\geom X$ is a CW complex with one $k$-cell for each non-degenerate $k$-simplex of $X$.}
however, degenerate simplices play an important role in specifying the glueings and the combinatorial data keeps track of them. All simplicial sets used in this paper have only finitely many non-degenerate simplices; this is equivalent to $\geom X$ being a compact space. The \emph{dimension} of a simplicial set $X$ is defined as the maximum dimension of a non-degenerate simplex of $X$.

%\paragraph{Simplicial maps}
A \emph{simplicial map} $f\colon X\to Y$ between simplicial sets is a collection of maps $f_n \colon X_n \to Y_n$, $n>0$, such that $f_m \circ \alpha^X = \alpha^Y \circ f_n$ for all monotone maps $\alpha\colon \{0,\dots,m\} \to \{0,\dots,n\}$. Every simplicial map $f\colon X\to Y$ defines a continuous map $\geom f\colon \geom X\to \geom Y$.
An \emph{isomorphism} of simplicial sets $X$ and $Y$ is a simplicial map $f\colon X\to Y$ with a simplicial inverse $g\colon Y \to X$ ($f_n$ is inverse to $g_n$ for all $n > 0$).

\paragraph{Products}
The product $X\times Y$ of two simplicial sets $X$ and $Y$ is the simplicial set
whose $n$-simplices of $X\times Y$ are ordered pairs $(\sigma,\tau)$, i.e., $(X\times Y)_n=X_n\times Y_n$), and $\alpha^{X\times Y}(\sigma,\tau)=(\alpha^X(\sigma),\alpha^X(\tau))$.
On the level of geometric realizations, this corresponds to the usual product of topological spaces, i.e., $\geom{X\times Y}\cong \geom X\times \geom Y$, under some mild conditions on $X$ and $Y$ that are satisfied for all simplicial sets we work with (e.g., if both $X$ and $Y$ are countable, see \cite[Theorem~5.2]{Friedman2012} for a general statement). The $n$th power of a simplicial set $X$ is $X^n=X \times \dots \times X$ (the product of $n$ copies of $X$).

\paragraph{Group actions} Various objects we work with in this paper (graphs, simplicial sets, topological spaces, etc.) have a natural symmetry given by an action of the cyclic group $\Ztwo$, which is described by a structure-preserving involution. For instance, a $\Ztwo$-action on a simplicial set $X$ is given by a simplicial map $\nu\colon X\to X$ that satisfies $\nu^2 \coloneq \nu \circ \nu = 1_X$ (thus, $\nu$ is necessarily a simplicial automorphism). We mainly work with actions that are \emph{free}, which for $\Ztwo$-actions simply means that $\nu$ has no fixed points. If $(X,\nu_X)$ and $(Y,\nu_Y)$ are simplicial sets with $\Ztwo$-actions, then a simplicial map $f\colon X\to Y$ is called \emph{equivariant} if it preserves the $\Ztwo$-symmetry, i.e., $f\circ \nu_X = \nu_Y \circ f$.

$\Ztwo$-actions on graphs (by isomorphisms) or on spaces (by homeomorphisms), and the notions of equivariant graph homomorphisms and equivariant continuous maps, etc., are defined analogously.

\paragraph{Relational simplicial sets} Most simplicial sets in this paper are of the following special form, which we call \emph{relational} (a non-standard term): The set $X_0$ of vertices ($0$-simplices) is a finite set, and $X_n \subset (X_0)^{n+1}$ is an $(n+1)$-ary relation, i.e., every $n$-simplex of $X$ is an ordered $(n+1)$-tuple $[u_0,\dots,u_n]$ of vertices (we use square brackets as a reminder that we view these $(n+1)$-tuples as simplices, and we identify each element $u\in X_0$ with the singleton tuple $[u]$). Moreover, for every monotone map $\alpha \colon \{0,\dots,m\} \to \{0,\dots,n\}$, the map $\alpha^X$ is defined by $\alpha^X([u_0,\dots,u_n])=[u_{\alpha(0)},\dots,u_{\alpha(m)}]$.
To get a simplicial set this way, the collection of relations $X_n$, $n>0$, needs to be closed under the operations $\alpha^X$, i.e., if $\sigma$ is a simplex of $X$, then any tuple obtained from $\sigma$ by omitting and/or repeating vertices without changing their order is a simplex as well.

\begin{example}[$\Ztwo$-symmetric triangulations of spheres]
\label{ex:spheres}
\begin{figure}
  \centering
  \includegraphics[scale=1.1]{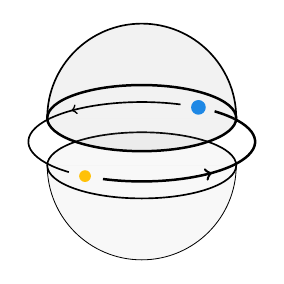}
  \caption{The simplicial set $\Sigma^2$} \label{fig:sigma-2}
  \Description{A decomposition of a sphere into two vertices, two edges between them following the equator, and two hemispheres, northern and southern.}
\end{figure}
We define a relational simplicial set $\Sigma^2$ that defines a triangulation of the $2$-dimensional sphere $S^2$, together with a natural $\Ztwo$-action that corresponds to the antipodal map $x\mapsto -x$ on $S^2$. The vertex set of $\Sigma^2$ is $\Sigma^2_0 = \{\yellowb, \blueb\}$ (which we think of as a pair of antipodal points in $S^2$), and $\Sigma^2_n$ is the set of all $(n+1)$-tuples of $\yellowb$ and $\blueb$'s with at most $2$ alternations. Thus, e.g., $[\yellowb, \blueb, \blueb, \yellowb]$ is a $3$-simplex of $\Sigma^2$, but $[\blueb, \yellowb, \blueb, \yellowb]$ is not. The $\Ztwo$-action on $\Sigma^2$ is given by the simplicial map that swaps the two vertices.

  This construction naturally generalises to yield a sequence of simplicial sets $\Sigma^0 \subset \Sigma^1 \subset \Sigma^2 \subset \dots$, such that $\Sigma^k$ (whose simplices are tuples with entries in $\{\yellowb, \blueb\}$ and at most $k$ alternations) is a triangulation of $S^k$. A simplex of $\Sigma^k$ is degenerate if and only if it contains two consecutive vertices of the same color. Thus, the only non-degenerate simplices of $\Sigma^0$ are the two vertices $\blueb, \yellowb$; $\Sigma^1$ additionally has two non-degenerate $1$-simplices $[\blueb, \yellowb]$ and $[\yellowb, \blueb]$ connecting these two vertices (geometrically, this corresponds to two distinct paths between a pair of antipodal points, each following half of an equatorial circle clockwise); $\Sigma^2$ adds two non-degenerate triangles $[\yellowb, \blueb, \yellowb]$ and $[\blueb, \yellowb, \blueb]$ which corresponds to glueing the northern and southern hemisphere, respectively (see Figure~\ref{fig:sigma-2}); $\Sigma^3$ adds two non-degenerate 3-simplices; etc.
\end{example}

\begin{obs}
\label{obs:2-coloring}
  If $X$ is a (relational)%
    \footnote{The claim is true for arbitrary simplicial set $X$; it is only required that $\Sigma^2$ is relational.}
    simplicial set then a simplicial map $X\to \Sigma^2$ is completely described by a $2$-colouring of the vertex set $X_0$ with colours yellow or blue. Conversely, a $2$-colouring $f$ of $X_0$ defines a simplicial map if and only if there is no $3$-simplex $[u_0, u_1, u_2, u_3]$ of $X$ such that $[f(u_0), f(u_1), f(u_2), f(u_3)]$ has three alternations (is equal to either $[\blueb, \yellowb, \blueb, \yellowb]$ or $[\yellowb, \blueb, \yellowb, \blueb]$). Moreover, if $\Ztwo$-acts on $X$ by a simplicial involution $\nu$, then such a $2$-colouring defines an equivariant map if and only if $u$ and $\nu(u)$ have different colours for every vertex $u$ of $X$.
\end{obs}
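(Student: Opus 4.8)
The plan is to exploit that both $X$ and $\Sigma^2$ are \emph{relational} simplicial sets, so that any simplicial self-map — in particular a simplicial map into $\Sigma^2$, or a $\Ztwo$-action on $X$ — is completely determined by what it does on vertices.

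First I would show that a simplicial map $f\colon X\to\Sigma^2$ is determined by the colouring $f_0\colon X_0\to\{\yellowb,\blueb\}$. For the monotone map $\{0\}\to\{0,\dots,n\}$ with image $\{i\}$, naturality of $f$ forces the $i$-th vertex of $f_n([u_0,\dots,u_n])$ to be $f_0(u_i)$, since on a relational simplicial set this monotone map acts by selecting the $i$-th vertex. A simplex of $\Sigma^2$ is determined by its sequence of vertices, so $f_n([u_0,\dots,u_n])=[f_0(u_0),\dots,f_0(u_n)]$. Conversely, starting from an arbitrary $2$-colouring $f$ of $X_0$, I would try to define components by this same formula; naturality with respect to all monotone maps then holds automatically (on both sides these maps merely relabel vertices, and degenerate simplices cause no trouble), so the only thing that can fail is that $[f(u_0),\dots,f(u_n)]$ need not lie in $\Sigma^2_n$, i.e., need not have at most two colour alternations.

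So the core is a combinatorial claim: some simplex of $X$ is sent (vertexwise) to a tuple with at least three alternations if and only if some $3$-simplex $[u_0,u_1,u_2,u_3]$ of $X$ is sent to a $4$-tuple with three alternations. Granting this, the Observation follows, since $f$ defines a simplicial map exactly when no simplex is sent outside $\Sigma^2$. The nontrivial direction I would prove by a ``one vertex per monochromatic run'' argument: if $[f(u_0),\dots,f(u_n)]$ has at least three alternations then it splits into at least four maximal constant runs; choosing indices $i_0<i_1<i_2<i_3$, one from each of the first four runs, the monotone map $j\mapsto i_j$ sends $[u_0,\dots,u_n]$ to a $3$-simplex $[u_{i_0},u_{i_1},u_{i_2},u_{i_3}]$ of $X$ (simplicial sets are closed under such reindexings) whose $f$-image is $[\blueb,\yellowb,\blueb,\yellowb]$ or $[\yellowb,\blueb,\yellowb,\blueb]$. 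The reverse direction is trivial, and the same run argument shows that if no such bad $3$-simplex exists then every simplex is sent to a tuple with at most two alternations.

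For the equivariance clause, I would observe that a $\Ztwo$-action $\nu$ on the relational simplicial set $X$ likewise acts by relabelling vertices, $\nu_n([u_0,\dots,u_n])=[\nu_0(u_0),\dots,\nu_0(u_n)]$ (same naturality argument), while the action on $\Sigma^2$ swaps the two colours entrywise; hence $f\circ\nu=\nu\circ f$ on all simplices reduces to requiring $f_0(\nu_0(u))$ to be the colour opposite to $f_0(u)$ for each vertex $u$. I do not anticipate a genuine obstacle; the only point requiring a little care is the run-selection step, and that is elementary.
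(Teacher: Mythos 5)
Your proposal is correct, and it is essentially the argument the paper has in mind: the Observation is stated without proof precisely because, for relational targets, simplicial maps are determined vertexwise (via the vertex-selecting monotone maps), and the reduction of "at most two alternations in every simplex" to the absence of alternating $3$-simplices is exactly the one-vertex-per-monochromatic-run selection you describe, using that relational simplicial sets are closed under omitting vertices. Your treatment of the equivariance clause likewise matches the intended reasoning, so there is nothing to correct.
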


\paragraph{Order complexes of posets} Another important example of relational simplicial sets are order complexes: Given a finite partially ordered set (poset) $P$, the \emph{order complex} $\Delta(P)$ is the simplicial set whose $n$-simplices are weakly monotone chains, i.e., $(n+1)$-tuples $[u_0,\dots,u_n] \in P^{n+1}$ with $u_0\leq \dots \leq u_n$; moreover, for every monotone map $\alpha\colon \{0,\dots,m\}\to \{0,\dots,n\}$, $\alpha^{\Delta(P)}[u_0,\dots,u_n]=[u_{\alpha(0)},\dots,u_{\alpha(m)}]$ as above. Note that monotonicity of $\alpha$ is crucial here to ensure that chains are mapped to chains.  An $n$-simplex $[u_0,\dots,u_n]$ of $\Delta(P)$ is non-degenerate if and only if %it is a strictly monotone chain, i.e.,
$u_0 < \dots < u_n$.

Any monotone map $f\colon P\to Q$ between posets naturally extends componentwise to chains and hence to a simplicial map $f\colon \Delta(P) \to \Delta(Q)$ between order complexes.

\begin{example} \label{ex:GammaL}
Let $L$ be a positive integer divisible by $4$. Define a partial order $\preccurlyeq$ on $\Z_L=\{0,1,\dots,L-1\}$ by $a \prec b$ if and only if $a$ is even, $b$ is odd, and $a-b = \pm 1 \mod L$. We define the simplicial set $\Gamma_L$ as the order complex of this poset,
% Jakub: If it would help us with space, turn the displayed math into an inline equation.
\[
\Gamma_L := \Delta( \Z_L,\preccurlyeq)
\]
The simplicial set $\Gamma_L$ is a triangulation of $S^1$, see Figure~\ref{fig:box-k3} (as a digraph, it is a cycle of length $L$ with edges oriented alternatingly). Moreover, the map $\Z_L \to \Z_L$, $x\mapsto x+L/2$ defines a simplicial involution $\Gamma_L\to \Gamma_L$ that corresponds to the antipodal involution on $S^1$.
\end{example}

If $P$ and $Q$ are posets and if we consider the product $P\times Q$ with the componentwise partial order $(p,q) \leq (p',q')$ if and only if $p\leq p'$ and $q\leq q'$, then $\Delta(P\times Q)$ and $\Delta(P)\times \Delta(Q)$ are isomorphic simplicial sets. In particular, $\Gamma_L^n = \Gamma_L \times \dots \times \Gamma_L$ is a triangulation of the $n$-dimensional torus $T^n=S^1 \times \dots \times S^1$.
Note that the vertices $\Gamma_L^n$ are $n$-tuples $\boldsymbol{u} \in \Z_L^n$, and $k$-simplices are $(k+1)$-tuples of vertices $[\boldsymbol u_0, \dots, \boldsymbol u_k]$ such that $\boldsymbol u_{i+1}$ is obtained from $\boldsymbol u_i$ by choosing a subset of coordinates of $\boldsymbol u_i$ all that are even and changing each of them by $\pm 1$ modulo $L$.
%Note that this introduces into $\Gamma_L^n$ non-degenerate $n$-dimensional faces if we change exactly one coordinate in each step. In fact, $\Gamma_L^n$ is a triangulation of the $n$-dimensional torus $T^n$.

\paragraph{Homomorphism complexes} Given two graphs $F$ and $G$, the homomorphism complex $\Hom(F,G)$ is a simplicial set capturing the structure of all homomorphisms $F\to G$. Following \citet[Section~5.9]{Mat03}, we define homomorphism complexes as order complexes of the poset of \emph{multihomomorphisms} from $F$ to $G$.\footnote{We remark that in \cite{Mat03} order complexes are defined as simplicial complexes, but the two definitions are equivalent. There are several other alternative definitions of homomorphism complexes that lead to topologically equivalent spaces. }
By definition, a \emph{multihomomorphism} is a function $f\colon V(F) \to 2^{V(G)} \setminus \{\emptyset\}$ such that, for all edges $(u, v) \in E(F)$, we have that
\[
  f(u) \times f(v) \subseteq E(G).
\]
%In other words, a multihomomorphism maps each vertex to a non-empty subset of vertices in such a way that every edge is mapped to a complete bipartite subgraph.
We denote the set of all multihomomorphisms by $\mhom(F, G)$. Multihomomorphisms are partially ordered by component-wise inclusion: $f \leq g$ if and only if $f(u) \subseteq g(u)$ for all $u\in V(F)$.

\begin{definition} Let $F$ and $G$ be graphs. The \emph{homomorphism complex} $\Hom(F,G)$ is the order complex $\Delta(\mhom(F, G), \leq)$ of the poset of multihomomorphisms.
\end{definition}

Multimorphisms can be composed in a natural way: if $f\in \mhom(F, G)$ and $g\in \mhom(G,H)$, then $(g\circ f)(a) = \bigcup_{b\in f(a)} g(b)$ is a multihomomorphism from $F$ to $H$. In particular, every homomorphism $f\colon G\to H$ induces a simplicial map $f_* \colon \Hom(F, G) \to \Hom(F, H)$ defined on vertices by mapping a multihomomorphism $m\in \mhom(F,G)$ to the composition $f\circ m$.

In what follows, we will focus on the special case of $\Hom(K_2, G)$, a common tool in the study of graph colourings.
Note that a multimorphism $m$ from $K_2$ to a graph $G$ corresponds to an ordered pair of subsets $m(1), m(2) \subseteq V(G)$ such that any pair of vertices $v_1 \in m(1)$ and $v_2\in m(2)$ are connected by an edge. If $G$ has no loops, then $m(1)$ and $m(2)$ are disjoint and induce a complete bipartite subgraph of $G$.
The natural $\Ztwo$-action on $K_2$ that swaps the two vertices induces an induces an action on multihomomorphisms $m\colon K_2 \to G$, namely swapping the two sets $m(1)$ and $m(2)$, which in turn induces a $\Ztwo$-action on the simplicial set $\Hom(K_2, G)$; this action is free provided $G$ has no loops. Moreover, it is easy to check that for every graph homomorphism  $f\colon G\to H$, the induced simplicial map $f_* \colon \Hom(K_2, G) \to \Hom(K_2, H)$ is equivariant.

The following two examples will play an important role in this paper.

\begin{figure}
  \[
    \begin{tikzpicture}[
    baseline=(current bounding box.center),
    every node/.style={circle, inner sep=1pt},
    decoration={markings, mark=at position 0.5 with {\arrow{>}}}]
  
  \node (20) at (0:2cm)      {0};
  \node (12/0) at (30:2cm)   {1};
  \node (10) at (60:2cm)     {2};
  \node (1/02) at (-270:2cm) {3};
  \node (12) at (-240:2cm)   {4};
  \node (01/2) at (-210:2cm) {5};
  \node (02) at (-180:2cm)   {6};
  \node (0/12) at (-150:2cm) {7};
  \node (01) at (-120:2cm)   {8};
  \node (02/1) at (-90:2cm)  {9};
  \node (2/01) at (-30:2cm)  {10};
  \node (21) at (-60:2cm)    {11};

  \draw[postaction={decorate}] (21) -- (02/1);
  \draw[postaction={decorate}] (10) -- (12/0);
  \draw[postaction={decorate}] (02) -- (01/2);
  \draw[postaction={decorate}] (01) -- (02/1);
  \draw[postaction={decorate}] (02) -- (0/12);
  \draw[postaction={decorate}] (01) -- (0/12);
  \draw[postaction={decorate}] (12) -- (1/02);
  \draw[postaction={decorate}] (10) -- (1/02);
  \draw[postaction={decorate}] (21) -- (2/01);
  \draw[postaction={decorate}] (20) -- (12/0);
  \draw[postaction={decorate}] (20) -- (2/01);
  \draw[postaction={decorate}] (12) -- (01/2);
\end{tikzpicture}
  \]
  \caption{The simplicial set $\Gamma_{12}\cong \Hom(K_2,K_3)$; see also Examples~\ref{ex:GammaL} and \ref{ex:odd-cycles}.}
  \Description{A 12-gon with vertices $0, 1, \dots, 11$ with alternating orientation of edges.}
\end{figure}
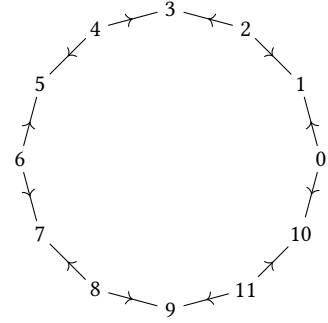

\begin{figure*}
  \centering
    \includegraphics{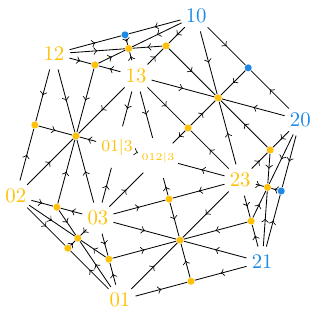}\qquad
    \includegraphics{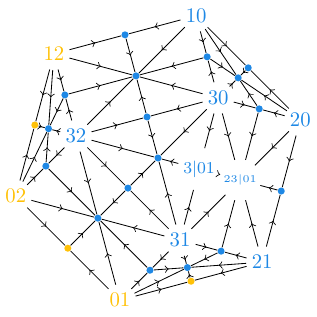}
  \caption{The simplicial set $\Hom(K_2, K_4)$; see also Example~\ref{ex:box-k4}.}
  \label{fig:box-k3} \label{fig:box-k4}
  \Description{Cubo-octahedron whose each face is subdivided using the barycentric subdivision.}
\end{figure*}

\begin{example} \label{ex:odd-cycles}
  For every odd integer $\ell\geq 3$, $\Hom(K_2, C_\ell)$ is isomorphic to the simplicial set $\Gamma_{4\ell}$ defined above; moreover, this isomorphism is equivariant, i.e., it preserves the $\Ztwo$-action.
\end{example}

\begin{example} \label{ex:box-k4}
  The simplicial set $\Hom(K_2, K_4)$ is a triangulation of a sphere $S^2$; it is depicted in Figure~\ref{fig:box-k4} which shows two hemispheres of this sphere that are glued together along their boundary. The homomorphisms/edges are explicitly labelled (the edge $(u, v)$ is labelled by $uv$) to highlight the global structure, and a few multihomomorphisms are labelled (where $3|01$ denotes the multihomomorphism $0 \mapsto 3$ and $1\mapsto \{0, 1\}$, etc.) to explain how the triangles are constructed.
\end{example}

\begin{lemma}
\label{lem:simplicial-map-HomK4-Sigma2}
There exists an equivariant simplicial map
  \[
    \maptosigma \colon \Hom(K_2, K_4) \to \Sigma^2.
  \]
\end{lemma}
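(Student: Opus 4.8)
The plan is to read the statement through Observation~\ref{obs:2-coloring}. Since $\Hom(K_2,K_4)=\Delta(\mhom(K_2,K_4))$ is an order complex, hence a relational simplicial set, that observation applies: an equivariant simplicial map $t\colon\Hom(K_2,K_4)\to\Sigma^2$ is exactly the same data as a $2$-colouring $c$ of the vertex set $\mhom(K_2,K_4)$ satisfying (i) no $3$-simplex $[u_0,u_1,u_2,u_3]$ of $\Hom(K_2,K_4)$ is sent by $c$ to a tuple with three alternations, and (ii) $c(m)\neq c(\nu(m))$ for every multihomomorphism $m$, where $\nu$ is the involution that swaps the two sets of a multihomomorphism. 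So the whole proof reduces to exhibiting one such colouring.

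\textbf{Key steps.} The main point is that condition (i) is vacuous, because $\Hom(K_2,K_4)$ is $2$-dimensional. Concretely, every multihomomorphism $m\colon K_2\to K_4$ is an ordered pair $(A,B)$ of disjoint non-empty subsets of $V(K_4)=\{0,1,2,3\}$ (as $K_4$ is loopless and complete), so $|A|+|B|\leq 4$. Along any strictly increasing chain in the poset $(\mhom(K_2,K_4),\leq)$ the quantity $|A|+|B|$ strictly increases from a first value that is already $\geq 2$, so a strict chain has at most three elements. Hence any $3$-simplex, being a weakly monotone $4$-chain $u_0\leq u_1\leq u_2\leq u_3$, must have $u_i=u_{i+1}$ for some $i$, so $[c(u_0),c(u_1),c(u_2),c(u_3)]$ has two equal consecutive entries and therefore at most two alternations, for \emph{any} choice of $c$. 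This leaves only condition (ii), which I would meet by the simplest possible tie-break: put $c(A,B)=\blueb$ if $\min(A\cup B)\in A$ and $c(A,B)=\yellowb$ if $\min(A\cup B)\in B$. This is well defined because $\min(A\cup B)$ exists and lies in exactly one of the disjoint non-empty sets $A,B$, and it manifestly satisfies $c(A,B)\neq c(B,A)=c(\nu(A,B))$. By Observation~\ref{obs:2-coloring}, $c$ then defines the desired equivariant simplicial map $t$.

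\textbf{Main obstacle.} I do not expect a genuine obstacle: once one recognizes that the target $2$-sphere is realized by $\Sigma^2$ and that the source is $2$-dimensional, the only real requirement imposed by Observation~\ref{obs:2-coloring}, the $3$-simplex condition, trivializes, and equivariance is arranged by an arbitrary choice. The only things needing care are the supporting facts that $\Hom(K_2,K_4)$ really is $2$-dimensional (equivalently, triangulates $S^2$, as recorded in Example~\ref{ex:box-k4}) and that $\Sigma^2$ is a triangulation of $S^2$ with the antipodal action, so that the statement is meaningful. Should one wish to avoid invoking Observation~\ref{obs:2-coloring}, an alternative is to define $t$ directly on the barycentrically subdivided cuboctahedron of Figure~\ref{fig:box-k4} and verify simpliciality face by face, but the argument above is cleaner and isolates the role of $2$-dimensionality.
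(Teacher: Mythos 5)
Your proof is correct, and it follows the same basic route as the paper: both reduce the lemma via Observation~\ref{obs:2-coloring} to exhibiting a $2$-colouring of the vertex set $\mhom(K_2,K_4)$ that separates each antipodal pair and creates no alternating $3$-simplex. Where you differ is in how that colouring is produced and verified. The paper simply displays one concrete suitable colouring (Figure~\ref{fig:box-k4}) and leaves the check to the picture; you instead observe that the alternating-$3$-simplex condition is vacuous, since $\Hom(K_2,K_4)$ is the order complex of the poset of pairs $(A,B)$ of disjoint non-empty subsets of $\{0,1,2,3\}$, whose strict chains have at most three elements (as $|A|+|B|$ strictly increases from at least $2$ to at most $4$), so every $3$-simplex is a weakly monotone $4$-chain with a consecutive repetition and hence has at most two colour alternations. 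This correctly shows that \emph{any} antipodality-separating colouring works, and your explicit rule $c(A,B)=\blueb$ iff $\min(A\cup B)\in A$ clearly separates $(A,B)$ from $\nu(A,B)=(B,A)$. What your version buys is a figure-free, fully self-contained argument that also makes transparent that the particular colouring chosen in the paper is immaterial for this lemma (only equivariance matters, which suffices for all later uses of $\maptosigma$); what the paper's version buys is merely a concrete picture of one such map. Both are valid proofs of the statement.
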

\begin{proof}
By Observation~\ref{obs:2-coloring}, such a map is given by a suitable $2$-colouring of the vertices of $\Hom(K_2, K_4)$. One suitable $2$-colouring is depicted in Figure~\ref{fig:box-k4}.
\end{proof}

\paragraph{Homotopy}\label{sec:homotopy}

Two continuous maps $f,g\colon X \to Y$ between topological spaces are \emph{homotopic}, denoted $f \sim g$, if there is a continuous map $h\colon X\times [0, 1] \to Y$ such that $h(x, 0) = f(x)$ and $h(x, 1) = g(x)$; the map $h$ is called a \emph{homotopy} from $f$ to $g$. Note that a homotopy can also be thought of as a family of maps $h({\cdot},t)\colon X\to Y$ that varies continuously with $t \in [0,1]$. In what follows, $X$ and $Y$ will often be given as simplicial sets, but we emphasize that we will generally not assume that the maps (or homotopies) between them are simplicial maps.
Two spaces %$X$ and $Y$
are \emph{homotopy equivalent} if there are continuous maps $f \colon X \to Y$ and $g\colon Y \to X$ such that $f \circ g \sim 1_Y$ and $g \circ f \sim 1_X$.

These notions naturally generalize to the setting of spaces with group actions:
%If $\Ztwo$ acts on two spaces $X$ and $Y$ then a continuous map $f\colon X \to Y$ is ($\Ztwo$-)\emph{equivariant} if $f$ preserves the action, i.e., $f\circ \omega = \omega \circ f$.
Two equivariant maps $f, g\colon X \to Y$ between spaces with $\Ztwo$-actions are \emph{equivariantly homotopic}, denoted by $f\sim_{\Ztwo} g$, if there exists an \emph{equivariant homotopy} between them, i.e., a homotopy $h\colon X \times [0,1] \to Y$ such that all maps $h(\cdot,t)\colon X\to Y$ are equivariant. We denote by $[X, Y]_\Ztwo$ the set of all equivariant maps $X\to Y$ up to equivariant homotopy, i.e.,
\[
  [X, Y]_{\Ztwo} = \{ [f] \mid f\colon X \to Y \text{ is equivariant} \},
\]
where $[f]$ denotes the set of all equivariant maps $g$ s.t.\ $f \sim_\Ztwo g$.

\section{Overview of the Proof}
\label{sec:overview}

We present a detailed overview of the proof of Theorem~\ref{thm:main}.
Every non-bipartite, loopless graph $G$ contains a cycle $C_\ell$ of odd length $\ell \geq 3$. In particular, there exists a homomorphism $C_\ell \to G$, hence every $C_\ell$-colourable graph is $G$-colourable. This yields a trivial reduction from $\PCSP(C_\ell, K_4)$ to $\PCSP(G, K_4)$.
Thus, Theorem~\ref{thm:main} follows from the following:

\begin{theorem}\label{thm:main-cycles}
For all odd integers $\ell \geq 3$, the decision problem $\PCSP(C_\ell, K_4)$ is NP-hard.
\end{theorem}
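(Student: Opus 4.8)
The plan is to apply the hardness criterion of Theorem~\ref{thm:hardness} with $G = C_\ell$ and $H = K_4$: it suffices to construct a minion homomorphism $\xi\colon \pol(C_\ell, K_4) \to \minion{B}$ for some minion $\minion{B}$ that has bounded essential arity and contains no constant. As announced in the introduction, the target minion $\minion{B}$ will be the minion $\affine$ of affine $\Z_2$-maps $\Z_2^n \to \Z_2$ of odd weight, i.e., maps $(x_1, \dots, x_n) \mapsto \sum_i \alpha_i x_i$ with $\sum_i \alpha_i \equiv 1 \pmod 2$, with minors given by $\pi^{\affine}(\phi)(x_1,\dots,x_m) = \phi(x_{\pi(1)}, \dots, x_{\pi(n)})$, i.e., summing coefficients along fibres of $\pi$; note the odd-weight condition is preserved under taking minors, $\affine$ contains no constant map (a constant map has weight $0$), so the only thing that needs proving is bounded essential arity, and that is exactly the content of Theorem~\ref{thm:bounded-arity}. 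So the whole theorem decomposes into: (i) construct $\xi = \phi$, and (ii) bound its essential arity.

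First I would carry out step (i), the construction of the minion homomorphism $\phi$, which is the topological heart of the argument (Lemma~\ref{lem:minion-homomorphism}). Given an $n$-ary polymorphism $f\colon C_\ell^n \to K_4$, apply the homomorphism-complex functor to get an equivariant simplicial map $f_*\colon \Hom(K_2, C_\ell^n) \to \Hom(K_2, K_4)$. Using the product identity $\Hom(K_2, C_\ell^n) \cong \Hom(K_2, C_\ell)^n$ together with Example~\ref{ex:odd-cycles} (so $\Hom(K_2, C_\ell) \cong \Gammaell$, a $\Z_2$-equivariant triangulation of $S^1$) and Lemma~\ref{lem:simplicial-map-HomK4-Sigma2} (the equivariant map $\Hom(K_2,K_4) \to \Sigma^2 \simeq S^2$), we obtain from $f$ an equivariant map $(S^1)^n \to S^2$, i.e., an element of $[(S^1)^n, S^2]_{\Z_2} = [T^n, S^2]_{\Z_2}$ where both spaces carry free $\Z_2$-actions (antipodal on each $S^1$-factor, antipodal on $S^2$). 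The key computation is then to identify this equivariant homotopy set: I expect $[T^n, S^2]_{\Z_2}$ to be naturally parametrised (via equivariant obstruction theory — the obstruction to extending an equivariant map over the $2$-skeleton lives in $H^2_{\Z_2}(T^n; \pi_1(S^2)) $, which vanishes, and the obstruction to homotopy lives in an $H^2$ with $\pi_2(S^2) = \Z$ coefficients, twisted by the action) by exactly the affine $\Z_2$-maps $\Z_2^n \to \Z_2$ of odd weight. Functoriality of all the constructions under the graph homomorphisms $C_\ell^n \to C_\ell^m$ induced by maps $\pi\colon [n]\to[m]$ (coordinate projections/diagonals) then shows that $f \mapsto \phi(f)$ commutes with taking minors, i.e., is a minion homomorphism. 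Checking that the odd-weight normalisation is correct — e.g. that a projection polymorphism $C_\ell^n \to C_\ell \to K_4$ maps to a weight-one map, which forces all polymorphisms to odd weight by the minion structure and connectivity — is a small but essential bookkeeping point.

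Then I would carry out step (ii): prove Theorem~\ref{thm:bounded-arity}, that the affine maps $\phi(f)$ arising from polymorphisms have at most $O(\ell^2)$ nonzero coefficients, independently of $n$. This is the combinatorial core and, I expect, the main obstacle of the whole paper. The intuition is that a polymorphism $f\colon C_\ell^n \to K_4$ that genuinely depended on many coordinates would force an equivariant map $T^n \to S^2$ with a complicated ``winding'' pattern, but the bounded size of $K_4$ (equivalently, the fact that $\Hom(K_2,K_4)$ is only $2$-dimensional) constrains how intricate the image can be; one should be able to extract, from the combinatorial structure of $C_\ell^n$ and the explicit description of $\Hom(K_2, K_4)$ (Example~\ref{ex:box-k4}), a direct argument that $f$ restricted to the coordinates with $\alpha_i = 1$ is essentially forced through a low-dimensional ``core''. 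Concretely I would look for a set $S \subseteq [n]$ of size $O(\ell^2)$ and an $|S|$-ary polymorphism $g$ with $f = g^\pi$ for the projection $\pi\colon [n] \to S$, by analysing how $f$ behaves on the sub-products indexed by the essential coordinates and using parity/alternation constraints coming from the triangulation $\Gammaell$ of the circle — the factor $\ell^2$ suggests a pigeonhole on $\ell$ ``phases'' in each of two independent directions. Once both steps are in place, Theorem~\ref{thm:hardness} applied to $\xi = \phi$ immediately yields that $\PCSP(C_\ell, K_4)$ is \NP-complete, hence \NP-hard, which is Theorem~\ref{thm:main-cycles}.
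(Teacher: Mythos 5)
Your overall architecture coincides with the paper's (Theorem~\ref{thm:hardness}, a topologically defined minion homomorphism into odd-weight affine maps, then an arity bound), but there are two genuine gaps. The first is in step (i): you propose to classify $[T^n,S^2]_{\Ztwo}$ directly by equivariant obstruction theory, listing only the obstructions with $\pi_1(S^2)$ and $\pi_2(S^2)$ coefficients. Since $S^2$ is not an Eilenberg--MacLane space ($\pi_3(S^2)\cong\Z$ and higher groups are non-trivial), the obstruction analysis does not terminate there: for $n\geq 3$ there are further obstruction and difference classes with coefficients in $\pi_k(S^2)$, $k\geq 3$, and no parametrisation of $[T^n,S^2]_{\Ztwo}$ by odd-weight affine maps is available (the paper notes that closely related classification problems for maps into $S^2$ from complexes of dimension $\geq 4$ are algorithmically undecidable). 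The missing idea is to compose with an equivariant inclusion $j\colon S^2\hookrightarrow \eml$ into an equivariant Eilenberg--MacLane space (Lemma~\ref{lem:Postnikov}) and classify $[T^n,\eml]_{\Ztwo}$ instead, via the Bredon cohomology computation $H^2_{\Ztwo}(T^n;\pi_2(S^2))\cong \Z_2^{\,n-1}$ (Proposition~\ref{prop:cohomology_torus}, Lemma~\ref{lem:number-homotopy-classes}) together with explicit monomial representatives and the coordinate degrees $\deg_i$, which also give minor-preservation (Proposition~\ref{thm:affine_minion}); the invariant $\phi(f)$ is the class of $j$ composed with the realised map, not a class in $[T^n,S^2]_{\Ztwo}$. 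A smaller bookkeeping point: Theorem~\ref{thm:hardness} cannot be applied with $\minion B=\affine_2$ itself, which does \emph{not} have bounded essential arity (the all-ones tuple of arity $n$ is not a minor of anything of smaller arity); one applies it to the image of $\phi$, or to the subminion of elements with support of size $O(\ell^2)$ --- your later phrasing shows you intend this, but the opening sentence says otherwise.

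The second gap is in step (ii), where your concrete plan aims at the wrong statement: you look for a set $S$ of size $O(\ell^2)$ and a polymorphism $g$ with $f=g^\pi$, i.e., a bound on the essential arity of $\pol(C_\ell,K_4)$ itself. That is much stronger than what is needed or true to the paper's argument; Theorem~\ref{thm:bounded-arity} only bounds the support of the homotopy invariant, i.e., the number of coordinates $i$ with $\deg_i(f)=1$. The proof of that bound is a substantial argument which your sketch (``pigeonhole on $\ell$ phases in two directions'') does not yet contain: one needs the cochain-level definition of $\deg_i$ via a coordinate cycle plus a connecting band and its equivariant homotopy invariance; the fact that degree $1$ on a triangulated $2$-torus forces a colour-swapping coordinate edge (Lemma~\ref{lem:2-torus-one-colour-swapping-edge}); the construction of generalized diagonals whose vertices sit at two prescribed heights (Lemma~\ref{lem:slices}); an averaging argument over shifted slices showing that at least a $\tfrac{1}{3L^2}$-fraction of coordinate edges at those heights is colour-swapping (Lemma~\ref{lem:colour-swapping-by-height}); and finally the observation that a simplicial map to $\Sigma^2$ admits no alternating $3$-simplex, so the expected number of colour changes along a random maximal simplex, which is at least $\lfloor (n-1)/3\rfloor\cdot\tfrac1{CL^2}$, is at most $2$, giving $n=O(\ell^2)$. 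Without these ingredients the combinatorial core of the theorem is missing.
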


We will prove this using Theorem~\ref{thm:hardness}; to this end, we need to construct a minion homomorphism from $\pol(C_\ell,K_4)$ to a minion $\minion B$ that contains no constant and is of bounded essential arity.

Informally speaking, as mentioned in Section~\ref{sec:Polymorphisms}, the general philosophy of the algebraic approach to PCSPs is that in order to understand the complexity of a problem, we need to get a good-enough understanding of the structure of its polymorphisms, in our case, the structure of all graph homomorphisms $C_\ell^n \to K_4$, $n>0$, i.e., $4$-colourings of powers of an odd cycle.
Prima facie, such colourings do not seem to have any apparent structure, so we use topology to simplify the problem and reveal more information.
In the first step, using homomorphism complexes, we pass from the problem of understanding graph homomorphisms to the problem of understanding equivariant homotopy classes of equivariant continuous maps $T^n \to S^2$. This provides an approximation of the structure of polymorphisms, nevertheless classifying such continuous maps is still difficult (this is connected to the fact $S^2$ has many non-trivial higher \emph{homotopy groups} $\pi_k(S^2)$, $k \geq 3$).
Thus, in a second step, we replace $S^2$ by a ``topologically simpler'' space $\eml$. We can then quite explicitly describe, in a third step, the set of $[X, \eml]_\Ztwo$ in terms of a suitable (equivariant) cohomology group (using  \emph{equivariant obstruction theory}); this yields a minion homomorphism $\phi$ from $\pol(C_\ell, K_4)$ to a minion $\minion Z_2$ (defined precisely below). The fact that all maps and homotopies are equivariant ensures that the minion $\minion Z_2$ does not contain any constants; however, it is still not of bounded essential arity. In a fourth step, we then argue that the image of $\phi$ actually is of bounded essential arity, for which we use some of the previously neglected combinatorial structure. We now describe these steps in more detail:

\paragraph{Step~1} If $X$ and $Y$ are simplicial sets with $\Ztwo$-actions, then the set of of all equivariant simplicial maps $X^n \to Y$, $n>0$, is closed under taking minors, i.e., it forms a minion, which we denote by $\spol(X,Y)$ (this follows easily from the definition of products of simplicial sets).

In the first step of the construction, we use homomorphism complexes to associate with every graph homomorphism  $f \colon C_\ell^n  \to K_4$ an equivariant simplicial map $\mu(f)\colon \Gamma_{4\ell}^n \to \Sigma^2$, where $\Gamma_{4\ell}$ and $\Sigma^2$ are the simplicial sets described in Examples~\ref{ex:GammaL} and \ref{ex:spheres}, respectively. The simplicial map $\mu(f)$ is defined as a composition $\maptosigma \circ f_* \circ \iota_n$:
\[
  \Gamma_{4\ell}^n \cong \Hom(K_2,C_\ell)^n
  \overset{\iota}{\to} \Hom(K_2,C_\ell^n)
  \overset{f_*}{\to} \Hom(K_2,K_4)
  \overset{s}{\to} \Sigma^2,
\]
where $f_* \colon \Hom(K_2,C_\ell^n) \to \Hom(K_2,K_4)$ is the simplicial map induced by $f$, $\maptosigma \colon \Hom(K_2,K_4) \to \Sigma^2$ is the simplicial map from Lemma~\ref{lem:simplicial-map-HomK4-Sigma2}, the isomorphism $\Gamma_{4\ell}^n \cong \Hom(K_2,C_\ell)^n$ is given by the isomorphism from Example~\ref{ex:odd-cycles}, and the simplicial map $\iota_n$ is given by the special case $G=C_\ell$ of the following fact:

\begin{lemma}
  For every graph $G$ and $n\geq 1$, there is an equivariant simplicial map
  \[
    \iota_n \colon \Hom(K_2,G)^n \to \Hom(K_2,G^n).
  \]
\end{lemma}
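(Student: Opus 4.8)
The plan is to build $\iota_n$ from a monotone map between the underlying posets of multihomomorphisms and then invoke two facts already recorded above: that a monotone map of posets induces a simplicial map of their order complexes (extending componentwise along chains), and that $\Delta(P)\times\cdots\times\Delta(P)\cong\Delta(P\times\cdots\times P)$, so that $\Hom(K_2,G)^n\cong\Delta(\mhom(K_2,G)^n)$.

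On multihomomorphisms I would define the map by a ``coordinatewise product of the two sides'': given $(m_1,\dots,m_n)\in\mhom(K_2,G)^n$, let $M\colon V(K_2)\to 2^{V(G^n)}\setminus\{\emptyset\}$ be given by $M(1)=m_1(1)\times\cdots\times m_n(1)$ and $M(2)=m_1(2)\times\cdots\times m_n(2)$; each factor is nonempty, so $M(1)$ and $M(2)$ are nonempty. This $M$ is a multihomomorphism $K_2\to G^n$: if $a\in M(1)$ and $b\in M(2)$, then $a_i\in m_i(1)$ and $b_i\in m_i(2)$, hence $(a_i,b_i)\in E(G)$ for every $i$ (since $m_i\in\mhom(K_2,G)$ and $(1,2)\in E(K_2)$), so $(a,b)\in E(G^n)$; thus $M(1)\times M(2)\subseteq E(G^n)$. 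The assignment $(m_1,\dots,m_n)\mapsto M$ is monotone for the product order, since $m_i(j)\subseteq m_i'(j)$ for all $i$ and $j\in\{1,2\}$ forces $\prod_i m_i(j)\subseteq\prod_i m_i'(j)$. Applying the two quoted facts, this monotone map extends to a simplicial map $\Delta(\mhom(K_2,G)^n)\to\Delta(\mhom(K_2,G^n))=\Hom(K_2,G^n)$, and precomposing it with the isomorphism $\Hom(K_2,G)^n\cong\Delta(\mhom(K_2,G)^n)$ yields $\iota_n$. Concretely, $\iota_n$ sends an $n$-tuple of chains $[m_0^i\leq\cdots\leq m_k^i]$, $i=1,\dots,n$, to the chain $[M_0\leq\cdots\leq M_k]$ with $M_t(j)=\prod_i m_t^i(j)$.

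Finally I would check equivariance. The $\Ztwo$-action on $\Hom(K_2,G)^n$ is the diagonal one, acting in each coordinate by the involution that swaps $m_i(1)$ and $m_i(2)$, while the action on $\Hom(K_2,G^n)$ swaps $M(1)$ and $M(2)$; since $\prod_i m_i(2)$ is simultaneously the first component of $\iota_n$ applied to the swapped tuple and the second component of $\iota_n$ applied to the original tuple (and symmetrically), $\iota_n$ commutes with the involutions on vertices, hence on all simplices. I do not expect a genuine obstacle here: the argument is a routine verification once the right model is fixed, and the only points that demand care are bookkeeping — keeping track of the identification $\Hom(K_2,G)^n\cong\Delta(\mhom(K_2,G)^n)$ and confirming that the coordinatewise-product operation really lands among multihomomorphisms of $G^n$ and respects inclusions. (Note that looplessness of $G$ is never used; the construction is uniform in $G$.)
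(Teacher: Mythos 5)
Your proposal is correct and follows essentially the same route as the paper: define $\iota_n(m_1,\dots,m_n)(u)=m_1(u)\times\cdots\times m_n(u)$ on multihomomorphisms, check it is a well-defined multihomomorphism, monotone, and equivariant, and then extend to the order complexes using the identification $\Hom(K_2,G)^n\cong\Delta(\mhom(K_2,G)^n)$. Your write-up merely spells out the verifications the paper leaves implicit.
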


\begin{proof}
  Given an $n$-tuple $m=(m_1,\dots,m_n)$ of multihomomorphisms $m_i\colon K_2\to G$, we can view $m$ as a multihomomorphism $\iota_n(m)\colon K_2 \to G^n$ by setting $\iota_n(m)(u)=m_1(u)\times \dots  \times m_n(u)$ for each vertex $u$ of $K_2$. This yields a map $\iota_n \colon \mhom(K_2,G)^n \to \mhom(K_2,G^n)$ that is monotone and equivariant and hence extends to the desired simplicial map.%
\footnote{It is easy to see that $\iota_n$ is injective, though generally not surjective, and it is known \cite[Proposition 18.17]{Koz08} that $\iota_n$ defines an equivariant homotopy equivalence between the spaces $\geom{\Hom(K_2,G)}^n$ and $\geom{\Hom(K_2,G^n)}$, but we will not need this fact in what follows.}
\end{proof}

The assignment $f\mapsto \mu(f)$ defines a map $\mu\colon \pol(C_\ell,K_4) \to \spol(\Gamma_{4\ell},\Sigma^2)$ that preserves arity. The map $\mu$ does \emph{not} strictly speaking preserve minors, i.e., for a general function $\pi\colon [n] \to [m]$, the simplicial maps $\mu(f)^\pi$ and $\mu(f^\pi)$ need not be equal, but it is not hard to see that the induced continuous maps are equivariantly homotopic. Thus, if we denote by $[\mu(f)] \in [T^n,S^2]_{\Ztwo}$ the equivariant homotopy class of the map $\geom{\mu(f)} \colon T^n\cong \geom{\Gamma_{4\ell}^n} \to \geom{\Sigma^2}\cong S^2$, then $[\mu(f)^\pi] = [\mu(f^\pi)]$ (see Lemma~\ref{lem:mu}).

\paragraph{Step~2} Determining the set of equivariant homotopy classes of maps $[T^n, S^2]_\Ztwo$ is a difficult problem (and closely related homotopy-theoretic questions regarding maps $X\to S^2$ for spaces of dimension $\dim X\geq 4$ are algorithmically undecidable \cite{Cadek:2013}). We circumvent this difficulty by enlarging $\geom{\Sigma^2}\cong S^2$ to a larger $\Ztwo$-space $\eml$ that is ``homotopically simpler'' (in technical terms, $\eml$ is an \emph{Eilenberg--MacLane space}), which makes $[T^n,\eml]_{\Ztwo}$ much easier to compute.

Given a simplicial map $g\colon \Gamma_{4\ell}^n \to \Sigma^2$, we define $\eta(g) \in [T^n, \eml]_{\Ztwo}$ as the equivariant homotopy class of the composition of the geometric realization $\geom{g}\colon T^n \to S^2$ with the inclusion map $j\colon S^2\hookrightarrow \eml$. It is easy to show that $\eta$ preserves minors, and hence defines a minion homomorphism from $\spol(\Gamma_{4\ell},\Sigma^2)$ to the minion $\hpol{S^1, \eml}$ of equivariant homotopy classes of equivariant maps, i.e., the minion with $\hpol[n]{S^1, P} = [T^n, P]_\Ztwo$, where $T^n = (S^1)^n$, and minors defined in the natural way.

By considering the composition $\phi := \eta\circ \mu$ with the map constructed in Step~1, we get the following:

\begin{lemma}[Appendix~\ref{app:minion-homomorphism}]
  \label{lem:minion-homomorphism}
  There are minion homomorphisms $\phi\colon \pol(C_\ell, K_4) \to \hpol{S^1, \eml}$ and $\eta\colon \spol(\Gamma_{4\ell}, \Sigma^2)\to \hpol{S^1, \eml}$ such that $\im \phi \subseteq \im \eta$, i.e., for each polymorphism $f\colon C_\ell^n \to K_4$, there is a simplicial map $g\colon \Gamma_{4\ell}^n \to \Sigma^2$ with $\phi(f) = \eta(g)$.
\end{lemma}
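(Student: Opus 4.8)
The plan is to assemble Lemma~\ref{lem:minion-homomorphism} from the two maps $\mu$ and $\eta$ already sketched in Steps~1 and~2, and to verify that each is a genuine minion homomorphism so that the composition $\phi := \eta \circ \mu$ is one as well. First I would pin down $\eta$. Recall that $\eta$ sends a simplicial map $g\colon \Gamma_{4\ell}^n \to \Sigma^2$ to the equivariant homotopy class of $j \circ \geom{g}\colon T^n \to S^2 \hookrightarrow \eml$, where $j$ is the inclusion into the Eilenberg--MacLane space. To see that $\eta$ preserves minors, fix $\pi\colon [n]\to [m]$; on the nose, the simplicial minor operation $g \mapsto g^\pi$ corresponds (via the product structure on simplicial sets and on $\geom{-}$) to precomposing $\geom{g}$ with the coordinate map $T^m \to T^n$, $\boldsymbol u \mapsto (u_{\pi(1)},\dots,u_{\pi(n)})$, and the same coordinate map defines the minor operation in $\hpol{S^1,\eml}$. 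Since this precomposition commutes with postcomposing by $j$, and since equivariant homotopic maps stay equivariantly homotopic after precomposition by a fixed equivariant map, $\eta$ is well-defined on homotopy classes and $\eta_m \circ \pi^{\spol} = \pi^{\hpol} \circ \eta_n$. Hence $\eta\colon \spol(\Gamma_{4\ell},\Sigma^2) \to \hpol{S^1,\eml}$ is a minion homomorphism.

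Next I would handle $\mu$ and the composition. The map $\mu\colon \pol(C_\ell,K_4)\to \spol(\Gamma_{4\ell},\Sigma^2)$ is arity-preserving by construction (Step~1), but as noted in the excerpt it does \emph{not} literally preserve minors: $\mu(f)^\pi$ and $\mu(f^\pi)$ are simplicial maps that need not be equal. What is true --- and this is exactly the content of Lemma~\ref{lem:mu} invoked at the end of Step~1 --- is that the induced continuous maps $\geom{\mu(f)^\pi}$ and $\geom{\mu(f^\pi)}$ are equivariantly homotopic, i.e. $[\mu(f)^\pi] = [\mu(f^\pi)]$ in $[T^m,S^2]_\Ztwo$. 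Composing with $j$ preserves this equality, so the assignment $f \mapsto \eta(\mu(f))$ \emph{does} preserve minors: $\phi(f)^\pi = [j\circ\geom{\mu(f)}]^\pi = [j\circ \geom{\mu(f)^\pi}] = [j\circ\geom{\mu(f^\pi)}] = \phi(f^\pi)$, where the middle step uses that the minor in $\hpol{S^1,\eml}$ is precomposition by the coordinate map and the first two equalities just unwind the definition of $\eta$ and of the minor in $\spol$. Thus $\phi = \eta\circ\mu$ is a well-defined minion homomorphism $\pol(C_\ell,K_4)\to\hpol{S^1,\eml}$.

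Finally, the containment $\im\phi\subseteq\im\eta$ is immediate from the factorisation: for every polymorphism $f\colon C_\ell^n\to K_4$ the simplicial map $g := \mu(f)\colon \Gamma_{4\ell}^n\to\Sigma^2$ lies in $\spol(\Gamma_{4\ell},\Sigma^2)$ and satisfies $\phi(f) = \eta(\mu(f)) = \eta(g)$ by definition of $\phi$. This completes the proof modulo the two ingredients the excerpt explicitly defers (Lemma~\ref{lem:mu}, that $\mu$ preserves minors up to equivariant homotopy, and the elementary fact that $\eta$ preserves minors, both placed in Appendix~\ref{app:minion-homomorphism}).

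I expect the one genuinely non-routine point to be Lemma~\ref{lem:mu}: showing that $\geom{\mu(f)^\pi} \sim_\Ztwo \geom{\mu(f^\pi)}$. Tracing through the definition $\mu(f) = s\circ f_*\circ\iota_n$ and the analogous factorisation of $\mu(f^\pi)$, the discrepancy is entirely concentrated in the comparison of $\iota_n$ precomposed with the coordinate map versus $\iota_m$: one must exhibit an equivariant homotopy between the two composites $\Gamma_{4\ell}^m\to\Hom(K_2,C_\ell^n)$ (equivalently, use that the diagram of simplicial sets commutes up to an equivariant simplicial homotopy, or invoke the equivariant homotopy equivalence of \cite[Proposition 18.17]{Koz08} together with naturality). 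Everything else --- the minor-preservation of $\eta$, the behaviour of geometric realisation and products, the stability of equivariant homotopy classes under precomposition --- is formal and will be dispatched quickly.
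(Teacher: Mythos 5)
Your overall architecture matches the paper's: $\eta$ is the formal minion homomorphism obtained from geometric realisation (which preserves finite products) followed by postcomposition with the inclusion $S^2\hookrightarrow \eml$, the composition $\phi=\eta\circ\mu$ inherits minor-preservation because $\eta$ is constant on equivariant homotopy classes, and $\im\phi\subseteq\im\eta$ is immediate from the factorisation. You also localise the difficulty correctly: the only place where $\mu(f)^\pi$ and $\mu(f^\pi)$ can differ is the square comparing $\iota_n$ precomposed with the coordinate (diagonal) map against the map induced by the graph-level coordinate map composed with $\iota_m$.

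The gap is that the resolution of this difficulty --- which is the one substantive idea in the paper's proof, and which you defer --- is not supplied, and the route you gesture at is doubtful. The paper does not try to homotopy-commute the $\iota$ square by abstract means; it works at the level of multihomomorphism posets. It defines a \emph{lax} minion homomorphism $\mu'\colon\pol(C_\ell,K_4)\to\pol(\mhom(K_2,C_\ell),\mhom(K_2,K_4))$ and verifies by a direct computation that $\mu'(f^\pi)\leq \mu'(f)^\pi$ pointwise: the value of $\mu'(f)^\pi$ on $(m_1,\dots,m_m)$ at a vertex $u$ of $K_2$ consists of all $f(v_1,\dots,v_n)$ with the $v_i\in m_{\pi(i)}(u)$ chosen \emph{independently}, whereas $\mu'(f^\pi)$ only produces the values coming from consistent choices $v_i=w_{\pi(i)}$, giving an inclusion of sets. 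An order-complex lemma (two comparable equivariant monotone maps of posets induce equivariantly homotopic maps on the realisations of their order complexes, proved via the poset $P\times\{0,1\}$) then upgrades this inequality to the equivariant homotopy $\geom{\mu(f)^\pi}\sim_{\Ztwo}\geom{\mu(f^\pi)}$ after composing with the map $s$ of Lemma~\ref{lem:simplicial-map-HomK4-Sigma2}. Your alternative suggestion --- Kozlov's Proposition~18.17 plus ``naturality'' --- does not dispatch this: naturality of $\iota$ concerns graph homomorphisms applied factorwise and says nothing about the diagonal maps arising from non-injective $\pi$, which is exactly where strict commutativity fails, and knowing that $\iota_n$ and $\iota_m$ are equivariant homotopy equivalences does not by itself make the square commute up to equivariant homotopy. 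So to complete your argument you must supply the comparability-plus-order-homotopy mechanism (or a genuine substitute); as written, Lemma~\ref{lem:mu} remains an unproven black box sitting inside the very statement you are proving.
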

\paragraph{Step~3} Next, we give an explicit description of the sets $[T^n,\eml]_{\Ztwo}$.
This description is by the means of functions $f_{\alpha}\colon \Ztwo^n \to \Ztwo$ of the form $f_{\alpha}(x_1,\dots,x_n) = \sum_{i=1}^n \alpha_i x_i$, where $\alpha = (\alpha_1,\dots, \alpha_n) \in \Z_2^n$ and $\sum \alpha_i \equiv 1 \pmod 2$. For a fixed $n$, the set of such functions forms an affine space, which we denote by $\minion Z_2^{(n)}$, and together, these sets form a function minion $\minion Z_2$. Below, we will often identify an affine function $f_\alpha$ with the corresponding $n$-tuple $\alpha \in \Ztwo^n$ of coefficients, i.e., we will often view $\minion Z_2$ as an abstract minion, with $\minion Z_2^{(n)} =\{\alpha \in \Ztwo^n\colon \sum_i \alpha_i \equiv 1\bmod 2\}$.

\begin{proposition}%[Lemma~\ref{thm:affine_minion}]
\label{thm:affine_minion}
  For each $n > 0$, there is a bijection
  \[
  \gamma_n \colon [T^n,\eml]_{\Ztwo} \to \minion Z_2^{(n)}.
  \]
  Moreover, these bijections preserve minors, hence they form a minion isomorphism $\gamma \colon \hpol{S^1, P} \to \minion{Z}_2$.
\end{proposition}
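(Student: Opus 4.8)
The plan is to compute $[T^n, \eml]_{\Ztwo}$ by equivariant obstruction theory, proceeding cell by cell through a $\Ztwo$-CW structure on $T^n$ with the free antipodal action, and to identify the answer with the cellular cohomology of the quotient with twisted coefficients. First I would fix the relevant model: $\eml$ is an Eilenberg--MacLane space whose only nontrivial homotopy group is $\pi_2(\eml) \cong \Z$, equipped with a $\Ztwo$-action under which the generator acts on $\pi_2$ by $-1$ (this is forced by the inclusion $S^2 \hookrightarrow \eml$ being equivariant for the antipodal action, which acts by degree $-1$ on $H_2(S^2)$). Since the action on $T^n$ is free, an equivariant map $T^n \to \eml$ is the same as a section of the bundle over $T^n/\Ztwo$ with fibre $\eml$ associated to the action, and the primary (and only) obstruction to extending a partial section, as well as the difference obstruction measuring homotopy classes, lives in $H^{*}(T^n/\Ztwo; \mathcal{Z})$, where $\mathcal{Z}$ is the local system $\Z$ twisted by the sign representation of $\pi_1(T^n/\Ztwo) \to \Ztwo$. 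Because $\eml$ has a single nonzero homotopy group in degree $2$, obstruction theory collapses: there is a single obstruction class, and $[T^n, \eml]_{\Ztwo}$ is a torsor over $H^2(T^n/\Ztwo; \mathcal{Z})$, and in fact (since there is always at least one equivariant map, e.g. a constant composed appropriately — no, constants are not equivariant; rather one uses that $S^2 \hookrightarrow \eml$ admits equivariant maps from $T^n$, e.g.\ via a coordinate projection $T^n \to S^1 \to S^2$) the torsor is nonempty, so $[T^n, \eml]_{\Ztwo} \cong H^2(T^n/\Ztwo; \mathcal{Z})$ as sets, canonically up to the choice of basepoint map.

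Next I would compute $H^2(T^n/\Ztwo; \mathcal{Z})$ explicitly and match it with $\minion Z_2^{(n)}$. The convenient route is to use the $\Ztwo$-equivariant cellular cochain complex of the triangulation $\Gamma_{4\ell}^n$ of $T^n$ (or, cleaner, of the minimal $\Ztwo$-CW structure on $T^n = (S^1)^n$ with two cells in each dimension $0$ and $1$ per circle factor and the antipodal identification on one factor at a time). The equivariant cochain complex with $\Z$-coefficients computes $H^*_{\Ztwo}(T^n; \mathcal{Z})$, which by freeness equals $H^*(T^n/\Ztwo; \mathcal{Z})$. Writing $S^1$ with its standard two-cell antipodal CW structure, $C^*_{\Ztwo}(S^1; \Z)$ is the complex $\Z \xrightarrow{0} \Z$ in the twisted convention (the relevant differential alternates $1+t$ and $1-t$, and after tensoring down with the sign module one gets $\Z \xrightarrow{2} \Z$ in one spot), so $H^0 = \Z/2$-ish bookkeeping aside, the upshot is $H^1(S^1/\Ztwo; \mathcal{Z}) \cong \Z_2$ and $H^0 \cong 0$ with the twisted coefficients. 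Then the Künneth formula for $T^n = (S^1)^n$ with the diagonal $\Ztwo$-action — here one must be careful, as the action is diagonal, not factor-wise independent — gives $H^2(T^n/\Ztwo; \mathcal{Z})$. I expect the computation to yield exactly $\{\alpha \in \Ztwo^n : \sum_i \alpha_i \equiv 1 \bmod 2\}$: the degree-two twisted cohomology is spanned by classes $e_i \cup e_j$ ($i<j$) together with a relation forcing the ``total degree'' to be odd, reflecting that the twist is by the sum of the coordinate $\Ztwo$-characters. Concretely, the class of an equivariant map $g \colon T^n \to \eml$ is recorded by its restrictions to the $n$ coordinate circles and the $\binom n2$ coordinate $2$-tori, and these assemble into the vector $\alpha$ subject to $\sum \alpha_i$ odd.

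For the minion structure: a minor operation $f \mapsto f^\pi$ for $\pi \colon [n] \to [m]$ corresponds on the topological side to precomposition with the map $T^m \to T^n$ dual to $\pi$ (the map $(\text{coordinate } i) \mapsto (\text{coordinate } \pi(i))$), which is equivariant for the diagonal actions. Pulling back the cohomological description through this map acts on $\alpha \in \Ztwo^n$ by the corresponding linear map $\alpha \mapsto \alpha^\pi$ with $(\alpha^\pi)_j = \sum_{i \in \pi^{-1}(j)} \alpha_i$, which is precisely the minor operation on $\minion Z_2$; one checks $\sum_j (\alpha^\pi)_j = \sum_i \alpha_i$ stays odd, so this is well-defined. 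Hence the bijections $\gamma_n$ commute with minors and assemble into a minion isomorphism $\gamma \colon \hpol{S^1,P} \to \minion Z_2$. The main obstacle I anticipate is the twisted-coefficient bookkeeping under the diagonal (not product) $\Ztwo$-action on $T^n$: getting the orientation characters and the signs in the equivariant Künneth / spectral-sequence computation exactly right, so that the parity constraint $\sum \alpha_i \equiv 1$ emerges cleanly rather than, say, $\sum \alpha_i \equiv 0$ or no constraint at all. A clean way to sidestep some of this is to verify the constraint on small cases ($n=1$, where $[S^1, \eml]_{\Ztwo} = [S^1/\Ztwo, \text{something}] $ should be a single point corresponding to $\alpha_1 = 1$, and $n=2$) and then bootstrap via naturality under the coordinate-inclusion and coordinate-collapse maps, which already pin down both the group structure and the affine constraint.
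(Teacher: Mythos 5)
Your first step is sound and matches the paper: since the action on $T^n$ is free and $\eml$ is an Eilenberg--MacLane space with $\pi_2(\eml)\cong\Z$ twisted by the sign, equivariant obstruction theory identifies $[T^n,\eml]_{\Ztwo}$ with a second (Bredon/twisted) cohomology group, which the paper likewise computes to have exactly $2^{n-1}$ elements (via an equivariant homeomorphism replacing the diagonal action by an action on one coordinate, and a long exact sequence for $p^*$, rather than a K\"unneth argument). But this only yields the \emph{cardinality} of $[T^n,\eml]_{\Ztwo}$; it does not by itself produce a bijection with the affine set $\minion Z_2^{(n)}=\{\alpha\in\Ztwo^n:\sum_i\alpha_i\equiv 1\}$, let alone one compatible with minors. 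That identification is exactly where your proposal stops at an ``I expect'': the parity constraint does not emerge from the cohomology computation (the group is just $\Z_2^{n-1}$); it comes from choosing concrete representatives and concrete invariants. Moreover, the invariant you sketch --- ``restrictions to the $n$ coordinate circles'' --- is not well defined on equivariant homotopy classes, because the coordinate circles are not invariant under the diagonal antipodal action: an equivariant homotopy can change the degree of $f$ along such a circle. This is precisely the subtlety the paper has to handle, by defining $\deg_1(f)=f^*(e^0)(x_1)+f^*(d^0)(b_1)$, where $b_1$ is a band with $\partial b_1=x_1+\nu\cdot x_1$; the band term is what makes the quantity an equivariant-homotopy invariant (Lemma~\ref{lem:2d_homotopy_invariance}), and $\deg_i$ is then defined via the $2$-minor $f^{\sigma_i}$.

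With those invariants in hand, the paper finishes as follows: the explicitly equivariant monomial maps $m_\alpha$ (equivariance forcing $\sum_i\alpha_i$ odd) satisfy $\deg_i(m_\alpha)=\alpha_i$, so the $2^{n-1}$ classes $[m_\alpha]$, $\alpha\in\minion Z_2^{(n)}$, are pairwise distinct and, by the cardinality count, exhaust $[T^n,\eml]_{\Ztwo}$; the map $\gamma_n([f])=(\deg_1(f),\dots,\deg_n(f))$ is therefore a bijection onto $\minion Z_2^{(n)}$, and minor-preservation is verified by the explicit computation $m_\alpha^\pi=m_{\beta'}$ with $\beta'_j=\sum_{i\in\pi^{-1}(j)}\alpha_i$ on these representatives. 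Your bootstrap ``check $n=1,2$ and use naturality'' does not substitute for this: without a well-defined, homotopy-invariant, coordinatewise invariant (or a complete set of distinguished representatives), naturality under coordinate maps has nothing to pin down, and the compatibility of $\gamma_n$ with minors --- the part of the statement that makes $\minion Z_2$ appear as an \emph{affine} minion rather than the group $\Z_2^{n-1}$ --- remains unproved. So the proposal covers the counting half of the paper's argument but has a genuine gap in the identification and minion-isomorphism half.
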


The proof of this proposition has two parts. One the one hand, using  \emph{equivariant obstruction theory}, we can prove the following:
\begin{lemma}[Appendix~\ref{app:cohomology}]
\label{lem:number-homotopy-classes}
The set $[T^n,\eml]_{\Ztwo}$ has cardinality $2^{n-1}$.
\end{lemma}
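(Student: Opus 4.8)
The plan is to reduce the statement to an explicit cohomology computation by means of equivariant obstruction theory. I use the following features of $\eml$: it is a $\Z_2$-equivariant Eilenberg--MacLane space with $\pi_2(\eml)\cong\Z$, and $\Z_2$ acts on $\pi_2(\eml)$ by $-1$ (as it must, since $\eml$ is built from $S^2$ with its free, orientation-reversing antipodal involution and $S^2\hookrightarrow\eml$ is equivariant and a $\pi_2$-isomorphism). The $\Z_2$-action on $T^n=\geom{\Gammaell^n}$ is free, realised geometrically by the diagonal antipodal action on $(S^1)^n$. First I would invoke the standard fact that, for a free $\Z_2$-space $X$, equivariant maps $X\to\eml$ up to equivariant homotopy correspond bijectively to fibrewise-homotopy classes of sections of the associated bundle $X\times_{\Z_2}\eml\to X/\Z_2$, whose fibre $\eml$ has a single non-vanishing homotopy group $\pi_2(\eml)=\Z$. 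Obstruction theory for sections then yields: the only obstruction to existence of a section lies in $H^3(X/\Z_2;\mathcal L)$, and whenever a section exists the set of section classes is a torsor over $H^2(X/\Z_2;\mathcal L)$, where $\mathcal L$ is the rank-one local system on $X/\Z_2$ associated to the double cover $X\to X/\Z_2$ and the sign action of $\Z_2$ on $\Z$ (equivalently, $H^*(X/\Z_2;\mathcal L)$ is the Borel equivariant cohomology $H^*_{\Z_2}(X;\widetilde{\Z})$, with $\widetilde{\Z}=\Z$ carrying the sign action). Since $[T^n,\eml]_{\Z_2}$ is non-empty --- it contains $\phi(f)$ for any polymorphism $f\colon\codd^n\to K_4$, and such $f$ exist because $\codd$ is $4$-colourable --- it then suffices to prove $\card{H^2(T^n/\Z_2;\mathcal L)}=2^{n-1}$.

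Next I would identify $T^n/\Z_2$ and $\mathcal L$ concretely. Writing $T^n=\R^n/\Z^n$, the $\Z_2$-action is translation by $v=(\tfrac12,\dots,\tfrac12)$, so $T^n/\Z_2=\R^n/\Lambda$ with $\Lambda=\Z^n+\Z v$, which is again an $n$-torus. The vectors $e_1,\dots,e_{n-1},v$ form a $\Z$-basis of $\Lambda$ (note $e_n=2v-e_1-\dots-e_{n-1}$), and in the induced product decomposition $T^n/\Z_2\cong T^{n-1}\times S^1$ the sublattice $\Z^n\subset\Lambda$ becomes $\Z e_1\oplus\dots\oplus\Z e_{n-1}\oplus 2\Z v$. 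Hence the double cover $T^n\to T^n/\Z_2$ is the product of the trivial double cover of $T^{n-1}$ with the connected double cover $S^1\to S^1$, and $\mathcal L$ is the pullback, along the projection $T^{n-1}\times S^1\to S^1$, of the non-trivial rank-one $\Z$-local system $\mathcal L_0$ on $S^1$.

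Then I would finish with a Künneth computation. From the minimal CW structure on $S^1$, the cochain complex computing $H^*(S^1;\mathcal L_0)$ is $\Z\xrightarrow{\,2\,}\Z$, so $H^0(S^1;\mathcal L_0)=0$ and $H^1(S^1;\mathcal L_0)=\Z/2$. Since $H^*(T^{n-1};\Z)$ is free abelian, the Künneth formula (with coefficients pulled back from the $S^1$-factor) has no Tor-contributions and gives
\[
H^2(T^n/\Z_2;\mathcal L)\cong H^1(T^{n-1};\Z)\otimes H^1(S^1;\mathcal L_0)\cong\Z^{n-1}\otimes\Z/2\cong(\Z/2)^{n-1},
\]
which has cardinality $2^{n-1}$, as required. (Alternatively, one may compute $H^2$ of the torus $\R^n/\Lambda$ directly as group cohomology $H^2(\Z^n;\Z_\chi)$ for the sign character $\chi$ that is non-trivial on exactly one basis element.)

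The main obstacle is not the torus bookkeeping of the last two paragraphs, which is routine, but putting the first step on firm footing: one must pin down the $\Z_2$-homotopy type of $\eml$ (in particular that its action is the sign action on $\pi_2$), use the translation between equivariant maps out of a free $\Z_2$-space and sections of the associated bundle, and --- crucially --- verify that, because $\eml$ has only one non-vanishing homotopy group, there are no higher obstructions, so that $[T^n,\eml]_{\Z_2}$ is a single torsor over $H^2(T^n/\Z_2;\mathcal L)$ (with non-emptiness checked separately). All of this is standard obstruction theory once a model for $\eml$ is fixed, and it is exactly the point at which one invokes the identification $[X,\eml]_{\Z_2}\cong H^2_{\Z_2}(X;\widetilde{\Z})$ for free $\Z_2$-spaces $X$.
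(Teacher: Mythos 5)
Your proposal is correct, and its first half coincides with the paper's strategy: the paper likewise reduces the count to a degree-$2$ equivariant cohomology group, citing the equivariant Eilenberg--MacLane classification theorem (Theorem~\ref{lem:Eilenberg-MacLane}) to get $[T^n,\eml]_{\Ztwo}\cong H^2_{\Ztwo}(T^n;\pi_2(S^2))$; your section-obstruction argument over $T^n/\Ztwo$ with the sign local system $\mathcal{L}$ is exactly the free-action incarnation of that Bredon-cohomology statement, and your separate non-emptiness check plays the role of the hypothesis ``there is a $\Ztwo$-map $X\to\eml$'' in the cited theorem. Where you genuinely diverge is in computing the group (the paper's Proposition~\ref{prop:cohomology_torus}): the paper uses the coefficient sequence $0\to\frk I\to\grLambda\to\zminus\to 0$, identifies $C^\bullet_{\Ztwo}(T^n;\frk I)$ and $C^\bullet_{\Ztwo}(T^n;\grLambda)$ with $C^\bullet(T^n/\Ztwo;\Z)$ and $C^\bullet(T^n;\Z)$ (Lemma~\ref{lem:key_lemma_cohomology}), and reads off $H^2_{\Ztwo}(T^n;\zminus)\cong\coker p^*_2$ after computing $p^*$ explicitly on a cup-product basis; you instead identify the quotient as $T^{n-1}\times S^1$ with $\mathcal{L}$ pulled back from the circle factor and finish with a local-coefficient K\"unneth computation using $H^1(S^1;\mathcal{L}_0)\cong\Z/2$, $H^0(S^1;\mathcal{L}_0)=0$. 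Both routes hinge on the same change of coordinates (your lattice basis $e_1,\dots,e_{n-1},v$ is precisely the paper's Lemma~\ref{thm:simple_torus}), and both are sound; yours is a bit shorter because the K\"unneth formula absorbs the long-exact-sequence bookkeeping, at the price of invoking two standard facts the paper instead establishes or cites explicitly --- the equivalence of Bredon cohomology of a free $\Ztwo$-space with local-coefficient cohomology of its quotient, and the fact (proved in the paper via $\deg$ of the antipodal map) that the involution acts on $\pi_2(\eml)$ by $-1$, which you assert only parenthetically.
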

One the other hand, every $\alpha \in \minion{Z}_2^{(n)}$ corresponds to a square-free monomial $\prod_{i\in I} z_i$ of odd degree in the variables $z_1,\dots,z_n$, where $I = \{i\in [n]\colon \alpha_i = 1\}$. If we view $S^1=\{z\in \mathbb{C} \colon |z|=1\}$ as the unit circle in the complex plane then each such monomial gives rise to an equivariant map $T^n=(S^1)^n \to S^1$ given by $(z_1,\dots,z_n)\mapsto \prod_{i\in I} z_i$. By composing first with a fixed equivariant inclusion $S^1\hookrightarrow S^2$ (e.g., the one given by the inclusion $\Sigma^1\subset \Sigma^2$) and then with the inclusion $j\colon S^2\hookrightarrow \eml$, we can also view each such monomial $\prod_{i\in I} z_i$ as an equivariant map $m_\alpha\colon T^n\to \eml$. Using a geometrically defined set of $\Ztwo$-valued invariants $\deg_i$, $1\leq i\leq n$, we will show in Section~\ref{app:monomials-degrees} that these maps are pairwise non-homotopic; in fact, we will see that the map $\gamma_n \colon [T^n,\eml]_{\Ztwo} \to \minion Z_2^{(n)}$ defined by $\gamma_n([f])=(\deg_1(f),\dots,\deg_n(f))$ satisfies $\gamma_n(m_\alpha)=\alpha$. Thus, $\gamma_n$ is surjective, and hence bijective, by Lemma~\ref{lem:number-homotopy-classes}; therefore, every equivariant map $T^n\to Y$ is equivariantly homotopic to a unique monomial map $m_\alpha$ with $\alpha \in \minion Z_2^{(n)}$. Moreover, we will show that the maps $\gamma_n$ reserve minors, hence they form a 
minion isomorphism.

\paragraph{Step~4} Finally, we show (Theorem~\ref{thm:bounded-arity}) that for every equivariant simplicial map $f \colon \Gamma_{4\ell}^n \to \Sigma^2$, the equivariant homotopy class $\eta(f)\in [T^n,\eml]_{\Ztwo}$ corresponds to an odd monomial map $\prod_{i\in I} z_i$ with $\abs I = O(\ell^2)$.
This is proved by a combinatorial averaging argument, using the structure of the triangulation $\Gamma_{4\ell}^n$, the fact that simplicial maps to $\Sigma^2$ correspond to vertex $2$-colourings without alternating $3$-simplices, and the geometric definition of the invariants $\deg_i$. Thus, the image of $\pol(\Gamma_{4\ell}, \Sigma^2)$ under $\eta$, and hence the image of $\pol(C_\ell,K_4)$ under $\phi$, has bounded essential arity.
This concludes the proof of Theorem~\ref{thm:main-cycles}.

\paragraph{Comparison with earlier work} The topological approach in \cite{KOWZ23} for proving hardness of $\PCSP(C_\ell, K_3)$, on which our work builds, required understanding the structure of the set equivariant maps from $T^n$ to $S^1$ up to equivariant homotopy. Such maps can be classified by much more elementary arguments using fundamental groups and winding numbers, which show that $[T^n,S^1]_{\Ztwo}$ is isomorphic to the affine space of maps $\Z^n \to \Z$ of the form $(x_1,\dots,x_n)\mapsto \sum_i \alpha_i x_i$, where $a_i \in \Z$ and $\sum_i a_i \equiv 1 \mod 2$ (this implicitly exploits the fact that $S^1$ is already an Eilenberg--MacLane space, i.e., has trivial higher homotopy groups).
% A simple approach using just the fundamental groups is provably not possible in our case since $\pi_1(S^2) = 0$, and hence the induced maps $f_* \colon \pi_1(T^n) \to \pi_1(S^2)$ are all constant, and we would have lost essential information. Moreover, unlike $S^2$, the higher homotopy groups of $S^1$ are trivial, and hence a classification of maps to $S^1$ up to homotopy is easy in comparison.
Moreover, bounding the essential arity of such maps that arise from graph homomorphisms is also relatively simple: by considering suitable simplicial embeddings of $\geom{\Gamma_{4\ell}} \cong S^1$ into $T^n$, the sum $\sum_i |a_i|$ of absolute values of coefficients in such a map can be read of as the winding number of a simplicial map $\Gamma_{4\ell} \to \Sigma^1$, hence $O(\ell)$.
%Furthermore, bounding the essential arity in \cite{KOWZ23} is relatively simple: It uses that the target minion is the minion of affine maps over $\mathbb Z$, and hence a bound on arity is obtained from a bound on (the sum of the absolute values of) the coefficients of the affine functions. In our case, we cannot afford to do that since, for affine functions over $\mathbb Z_2$, a bound of coefficients does not yield a bound on arity.
By contrast, the more careful counting argument required in our case, although elementary in hindsight, was elusive for several years.

Moreover, although the method based on equivariant obstruction theory was developed to address approximate graph colouring, it first found an application \cite{FilakovskyNOTW24} in a hardness proof for promise \emph{linearly-ordered colouring} of hypergraphs, which uses a simpler hardness criterion, and allows for an easier combinatorial argument bounding the arity.

\section{\texorpdfstring{Monomial Maps, Degrees, and $[T^n, \eml]_{\Z_2}$}{Monomial Maps, Degrees, and Homotopy Classes of Maps}}
\label{app:topology}
\label{app:monomials-degrees}

The goal of this section is to prove Proposition~\ref{thm:affine_minion}. 

To this end, we will define, for every equivariant continuous map $f\colon T^n\to Y$, a sequence of numbers $\deg_i(f) \in \Ztwo$, $1\leq i \leq n$, that are invariant under equivariant homotopy. As we will see below, these numbers satisfy $\sum_{i=1}^n \deg_i(f)\equiv 1 \bmod 2$. Thus, by assigning to every equivariant homotopy class $[f] \in [T^n, \eml]_{\Z_2}$ the sequence $\gamma_n([f])=(\deg_1(f),\dots,\deg_n(f))\in \Ztwo^n$, we get a well-defined map 
$\gamma_n \colon  [T^n, \eml]_{\Z_2} \to \affine_2^{(n)}$. 

To define the invariants $\deg_i$ and throughout this section, we assume some familiarity with fundamental notions of algebraic topology, including homotopy, CW complexes, simplicial and cellular approximation theorems, and simplicial and cellular homology and cohomology; we refer to \citet{Hat02} for general background, and to May et al.~\cite[Chapters I and II]{MayCP96},% an exception to the use of et al.: this book is authored by J.P.May; with contributions by M.Cole, et al., 
\citet{Die87}, and \citet{Bre67} for more details on the equivariant setting.

We will use the fact that the space $Y$ is is a CW complex constructed from $\Sigma^2$ by attaching higher-dimensional cells (see the proof of Lemma~\ref{lem:Postnikov} in Appendix~\ref{app:cohomology}); in particular, the $1$-dimensional and $2$-dimensional skeleta of $Y$ are $\Sigma^1$ and $\Sigma^2$, respectively. 

For a CW complex $X$, let $C_\bullet(X)$ and $C^\bullet(X)$ denote the cellular chain and cochain complexes of $X$ with $\Ztwo$-coefficients, respectively (since we work with $\Ztwo$-coefficients, $i$-dimensional cochains correspond to subsets of $i$-dimensional cells of $X$, and $i$-dimensional chains correspond to finite subsets); in the special case that $X$ is a simplicial complex or simplicial set, $C_\bullet(X)$ and $C^\bullet(X)$ are isomorphic to the simplicial chain and cochain complex of $X$, respectively.

To motivate the following definition, consider 
%first the case $n=1$ of the circle $T^1=S^1\cong |\Gamma_L|$. By the cellular approximation theorem, every equivariant map $f\colon S^1\to Y$ is equivariantly homotopic to a cellular map $g$, whose image is contained in the $1$-skeleton $\Sigma^1$ of $Y$; since $|\Sigma^1|\cong S^1$, we can thus see view $g$ as an equivariant map $g\colon S^1\to S^1$. It is well-known (and will also follow from the discussion below) that the degree $\deg(g)$ is an odd integer; the degree itself not a well-defined invariant of $f$, since it depends on the choice of the cellular approximation $g$, but its parity is. We define $\deg_1(f):=\deg(g) \bmod 2=1$. On the level of triangulations, this is reflected by the fact that there will be an odd number of edges $[v,w]$ of $\Gamma_L$ with $g(v) = \blueb$ and $g(w) = \yellowb$.
the torus $T^2\cong | \Gamma^2_L|$ and and equivariant map $f\colon T^2  \to Y$. Consider a loop in $T^2$ that wraps around the first coordinate direction, say the circle $x_1=\{(z_1,1)\colon z_1\in S^1\} \cong S^1 \subset T^2$. Note that the circle $x_1$ is triangulated by a subcomplex of $\Gamma^2_L$, but it is not fixed under the $\Ztwo$-action on $T^2$. If $f=m_{(1,0)}$ is the monomial map given by $(z_1,z_2)\mapsto \mapsto j(z_1)$, then $f$ maps the circle $x_1$ to the $1$-skeleton $\Sigma^1$ of $Y$, which is a circle as well, and as a map between circles, $f$ has degree $1$; thus, there is an odd number of edges $[v,w]$ in the triangulation of $x_1$ that satsify $f(v) = \blueb$ and $g(w) = \yellowb$. If $f$ is merely equivariantly homotopic to $m_{(1,0)}$, however, then this need no longer be the case: Intuitively, we can visualize the homotopy as ``moving'' the image of the edges of the torus through the discs in $\Sigma^2$, therefore potentially changing the parity of the degree we are interested in. The homotopy has to be equivariant, however, and thus has to modify each antipodal edge in the opposite way. As a consequence, a $2$-dimensional band connecting the circle $x_1$ and its antipodal circle $\nu\cdot x_1$ has to be ``dragged'' around over the discs of $\Sigma^2$ and thus, while the degree might change along equivariant homotopies, this difference will be registered in the behaviour of the connecting band. We will now formalize this geometric intuition.

\begin{definition} 
Let $L, L'$ be two positive integers divisible by $4$, and consider the $2$-dimensional torus $T^2=|\Gamma_{L}\times \Gamma_{L'}|$. 

Let $e^0\in C^1(\eml)$ and $d^0\in C^2(\eml)$ be the dual of $e_0 = [\blueb, \yellowb]\in C_1(\Sigma^2) = C_1(\eml)$ and $d_0 = [\blueb, \yellowb, \blueb]\in C_2(\Sigma^2) = C_2(\eml)$ respectively (i.e., $e^0([\yellowb, \blueb]) = 0$ and $e^0([\blueb, \yellowb]) = 1$, similarly for $d^0$). Moroever, let $x_1\in Z_1(\Gamma_L \times \Gamma_{L'})$ be the ``first coordinate cycle'' in $\abs{\Gamma_{L}\times\Gamma_{L'}}\cong T^2$ (i.e., $x_1 = \sum_{k=0}^{L-1} [(k, 0)(k+1, 0)]\in Z_1(\Gamma_{L}\times\Gamma_{L'})$), and let $b_1 \in C_2(\Gamma_L \times \Gamma_{L'})$ be the ``band'' connecting $x_1$ with $\gromega\cdot x_1$ (i.e., $\partial b_1 = x_1 +\nu\cdot x_1$; see Figure~\ref{fig:band}). 
  
Let $f\colon |\Gamma_L \times \Gamma_{L'}| \to Y$ be an equivariant map. By the (equivariant) cellular approximation theorem, $f$ induces an equivariant cochain map $f^*\colon C^\bullet(Y) \to C^\bullet (\Gamma_{L}\times \Gamma_{L'})$ (i.e., equivariant homomorphisms $f^*\colon C^i(Y) \to C^i (X)$ that commute with the coboundary map). We define
  \[
    \deg_1(f) = \bigl(f^*(e^0)(x_1) + f^*(d^0)(b_1)\bigr) \bmod 2
  \]
\end{definition}
  \begin{figure}
    \centering
    \begin{tikzpicture}[
  baseline=(current bounding box.south),
  scale=.4,
  every node/.style = {font=\small}]
    \def\size{8}
    \definecolor{mygray}{gray}{0.85}

    \foreach \i in {0,  ..., \size}{
        \foreach \j in {0, ..., \size}{
            \node[fill = mygray, circle, scale=.25] (\i\j) at (\i, \j) {};
        }
    }

    \foreach \i [remember=\i as \li (initially 0)] in {1, ..., \size}{
        \foreach \j [remember=\j as \lj (initially 0)] in {1, ..., \size}{
            \draw[mygray, thin, dashed] (\li\lj) -- (\li\j);
            \draw[mygray, thin, dashed] (\li\lj) -- (\i\lj);
            \draw[mygray, thin, dashed] (8\lj) -- (8\j);
            }
            \draw[mygray, thin, dashed] (\li8) -- (\i 8);
            }
         
    \draw[thin] (00) rectangle (\size\size);
    \draw[blue, very thick] (00.center) -- (80.center);
    \draw[blue, thick] (04.center) -- (84.center);
    \fill[blue, opacity = .1] (00.center) -- (80.center) -- (84.center) -- (04.center) -- (00.center);

    \node[blue] at (4, 0.3) {$x_1$};
    \node[blue] at (4, 4.3) {$\gromega \cdot x_1$};
    \node[blue] at (4, 2) {$b_1$};
\end{tikzpicture}\quad
    %\newsavebox{\diagbox}
    %\savebox{\diagbox}
    %\raisebox{-.5\ht\diagbox}{\usebox\diagbox}
    {\includegraphics[width=.5\columnwidth]{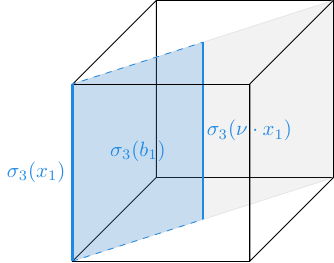}}
    \caption{Coordinate cycle and band in $T^2$ and $T^3$.}
    \label{fig:band} \label{fig:two_minor}
    \Description{A square whose lower half is highlighted and a cube with a diagonal rectangle with the half containing an edge of the cube highligted.}
  \end{figure}
  
Crucially, this notion of degree is invariant under equivariant homotopies:
\begin{lemma}\label{lem:2d_homotopy_invariance}
  Fix positive integers $L, L_0$ and $L_1$ divisible by $4$. Let $f_0 \colon |\Gamma_{L}\times \Gamma_{L_0}| \to \eml$ and $f_1 \colon | \Gamma_{L}\times \Gamma_{L_1}| \to \eml$ be equivariant maps that are equivariantly homotopic. Then $\deg_1(f_0) = \deg_1(f_1)$.
\end{lemma}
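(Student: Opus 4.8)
The plan is to show that $\deg_1$ is invariant under equivariant homotopy by tracking how the relevant cochain evaluations change when we deform the map. Since $f_0 \sim_{\Ztwo} f_1$, there is an equivariant homotopy $h \colon |\Gamma_L \times \Gamma_{L_0}| \times [0,1] \to \eml$ (after first fixing an equivariant simplicial identification of the domains, or more carefully working with the mapping cylinder), and the key point is that the source torus $T^2$, triangulated by $\Gamma_L \times \Gamma_{L_0}$, together with the interval, can be triangulated so that $h$ is homotopic rel endpoints to a \emph{cellular} map. The prism $T^2 \times [0,1]$ gives a $3$-chain $W$ whose boundary is (a copy of) $T^2$ at level $0$ minus $T^2$ at level $1$; the first-coordinate cycle $x_1$ and the band $b_1$ sweep out, under the homotopy, a $2$-chain and a $3$-chain in $T^2 \times [0,1]$. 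First I would set up this prism decomposition carefully, using that $L_0, L_1$ can differ (so one passes through a common refinement or builds the homotopy on $|\Gamma_L \times \Gamma_{L_0}| \times [0,1]$ and composes with a suitable equivariant map at the level-$1$ end identifying it with $|\Gamma_L \times \Gamma_{L_1}|$).

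The heart of the argument is a Stokes-type / coboundary computation. Let $e^0 \in C^1(\eml)$ and $d^0 \in C^2(\eml)$ be as in the definition. Because $\eml$ has $\Sigma^1$ as its $1$-skeleton and $\Sigma^2$ as its $2$-skeleton, and $\Sigma^2 = S^2$, one checks that $\delta e^0 = d^0 + \nu^* d^0$ in $C^2(\eml)$ (the coboundary of the dual edge cochain is supported exactly on the two triangles, which form an antipodal pair) — this is the cochain-level manifestation of the fact that $S^1 \hookrightarrow S^2$ kills the fundamental class. Pulling back along the (cellular) homotopy $h$, one gets $\delta h^*(e^0) = h^*(d^0) + \nu^* h^*(d^0)$ in $C^2$ of the prism. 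Now evaluate: the $2$-chain $X := x_1 \times [0,1]$ (the track of the coordinate cycle) has boundary $\partial X = x_1^{(0)} + x_1^{(1)} + (\text{vertical edges})$, and the evaluation of $\delta h^*(e^0)$ on the swept band together with the evaluation of $h^*(d^0)$ on the track of $b_1$ telescopes, using $\langle \delta c, \sigma\rangle = \langle c, \partial \sigma\rangle$, so that $f_0^*(e^0)(x_1) + f_0^*(d^0)(b_1)$ and $f_1^*(e^0)(x_1) + f_1^*(d^0)(b_1)$ differ by a sum of terms that cancel in pairs because the homotopy is \emph{equivariant} (each term involving $h^*(d^0)$ on some simplex is matched by the term on its antipode, coming from the $\nu^* d^0$ summand above). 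Thus the combination defining $\deg_1$ is unchanged mod $2$.

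I would organise the bookkeeping as follows: (i) reduce to the case where $h$ is cellular with respect to a fixed equivariant CW structure on the prism extending the given triangulations at the two ends; (ii) identify the relevant chains — the track $x_1 \times I$ of the coordinate cycle, the track of the band $b_1$, and the "boundary discrepancy" chains — and record their boundaries; (iii) write down the identity $\delta e^0 = (1 + \nu^*) d^0$ in $C^\bullet(\eml)$ and pull it back; (iv) evaluate $\langle h^*(\delta e^0), (\text{track of } b_1)\rangle$ two ways, via $\langle h^*(d^0) + \nu^* h^*(d^0), -\rangle$ and via $\langle h^*(e^0), \partial(-)\rangle$, and read off that $\deg_1(f_0) - \deg_1(f_1) \equiv 0 \bmod 2$; the equivariance is used precisely to see that the antipodal-pair terms $\langle(1+\nu^*)h^*(d^0), \sigma\rangle$ contribute $0$ mod $2$ on $\nu$-invariant chains (like the band, whose boundary is $x_1 + \nu \cdot x_1$).

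The main obstacle I expect is the careful prism/triangulation bookkeeping — making precise the "track" chains when the two ends have different parameters $L_0 \neq L_1$, ensuring everything can be made equivariantly cellular, and verifying that all the vertical-edge and corner contributions either vanish or cancel in antipodal pairs. The underlying topological input ($\delta e^0 = (1+\nu^*)d^0$, i.e.\ that the generator of $H^1$ of the circle does not survive to $S^2$) is clean; turning it into a chain-level cancellation argument that is robust to the change of triangulation at the endpoints is the fiddly part, but it is routine once the prism decomposition is fixed.
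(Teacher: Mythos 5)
Your argument is essentially the paper's own proof: the paper packages your prism/track construction abstractly as an equivariant cochain homotopy $h$ with $f_0^*+f_1^*=\delta h+h\delta$ (obtained from equivariant cellular approximation) and then runs exactly your computation, using $\delta e^0=d^0+\nu^*d^0$, $\partial x_1=0$, $\partial b_1=x_1+\nu\cdot x_1$, $\delta d^0=0$, and equivariance to cancel the antipodal contributions mod $2$, with the case $L_0\neq L_1$ likewise handled by a subdivision/common-refinement reduction. The only slip is notational: in your step (iv) the pairing of $h^*(\delta e^0)$ with the (3-dimensional) track of $b_1$ mismatches degrees --- the correct bookkeeping pairs $\delta e^0$ with the track of $x_1$ and $\delta d^0$ (which vanishes, since $\eml$ has no $3$-cells) with the track of $b_1$ --- but this resolves immediately and does not affect the argument.
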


\begin{proof} Assume first that $L_0 = L_1$. By the cellular approximation theorem again, there is an equivariant cochain homotopy between the induced cochain maps $f_0^*,f^*_1\colon C^\bullet(Y) \to C^\bullet (X)$, i.e., there exist equivariant homomorphisms $h\colon C^i(Y) \to C^{i-1}(X)$ satisfying 
\[f_0^* + f_1^*=\delta h+ h\delta.\]

Therefore, on the cochains of dimension $1$ we have:
    \begin{align*}
      f_0^*(e^0)(x_1) +%- 
      f_1^*(e^0)(x_1) &= (\delta h(e^0))(x_1) + (h\delta(e^0))(x_1) \\
                                                          &= 0 + h(d^0 +%-
                                                          d^1)(x_1)
    \end{align*}
    where the second equality is obtained by using the fact that $\partial x_1 = 0$ and $\delta e^0 = d^0 +%-
    d^1$.

    On the $2$-cochains we have:
    \begin{align*}
        f_0^*(d^0)(b_1) + %- 
        f_1^*(d^0)(b_1) &= (\delta h(d^0))(b_1) + (h\delta(d^0))(b_1) \\
                                                          &= h(d^0)(x_1 +%- 
                                                          \gromega \cdot x_1) + 0
    \end{align*}
    where we use that $\partial b_1 = x_1 +%- 
    \gromega \cdot x_1$ and $\delta d^0 = 0$.

    Moreover, $h$ is equivariant, hence $h(d^0)( \gromega \cdot x_1) = h(\gromega \cdot d^0)(x_1)$.
    Summing everything together, using this fact and that $\gromega \cdot d^0 = %-
    d^1$, we obtain that
    \begin{align*}
        \deg_1(f_0) +%-
        \deg_1(f_1)
        &= \bigl(f^*_0(e^0)(x_1) +%- 
        f^*_1(e^0)(x_1)\bigr) \\
        &\quad + \bigl(f^*_0(d^0)(b_1) +%- 
        f^*_1(d^0)(b_1)\bigr) \\
        &= h(d^0 +%-
        d^1)(x_1) +h(d^0)(x_1 +%- 
        \gromega\cdot x_1) \\
        &= h(d^0)(x_1 + \gromega\cdot x_1 + x_1 + %- 
        \gromega\cdot x_1) \\
        &= h(d^0)(2x_1) = 2h(d^0)(x_1) = 0 \pmod 2.
    \end{align*}
    
    If $L_0\neq L_1$, suppose without loss of generality that $L_0 < L_1$. Then $\Gamma_{L}\times \Gamma_{L_1}$ is a subdivision of $\Gamma_{L}\times \Gamma_{L_0}$, and the equivariant chain map $\iota \colon C_\bullet(\Gamma_{L}\times \Gamma_{L_0}) \to C_\bullet(\Gamma_{L}\times \Gamma_{L_1})$ that maps every $i$-cell $\sigma$ of $\Gamma_{L}\times \Gamma_{L_0}$ to the sum of $i$-cells of $\Gamma_{L}\times \Gamma_{L_1}$ that are contained in $\sigma$ is a chain homotopy equivalence.
Thus, by the previous case $\deg_1(f_0) = \deg_1(f_1\circ \iota)$ and from the definition of degree, $\deg_1(f_1\circ \iota) = \deg_1(f_1)$.
\end{proof}

We can now define $\deg_i(f)$ of an equivariant map $f\colon T^n \to Y$ as $\deg_1(f^{\sigma_i})$ for a suitable $2$-minor $f^{\sigma_i} \colon T^2 \to Y$ (see Figure~\ref{fig:two_minor}):

  \begin{definition}\label{def:degree}
    Let $L$ a positive integer divisible by $4$, and let $f\colon|\Gamma_L^n| \to \eml$ be a $\Ztwo$-equivariant map. 
    For $i \in [n]$, we define $\sigma_i\colon [n] \to [2]$ by $\sigma_i(i) = 1$ and $\sigma_i(j) = 2$ for $j \neq i$. Then the $i$-degree of $f$ is defined as
    \begin{multline*}
      \deg_i(f) = \deg_1(f^{\sigma_i})\\
       = \bigl((f\circ \sigma_i)^*(e^0)(x_1) + (f\circ\sigma_i)^*(d^0)(b_1)\bigr) \bmod 2
    \end{multline*}
  \end{definition}

An immediate consequence of Lemma~\ref{lem:2d_homotopy_invariance} is the invariance of the $i$-degree under equivariant homotopies:

\begin{corollary}\label{cor:invariance_degree}
  Let $f_0, f_1\colon|\Gamma^n_L| \to \eml$ be equivariant maps that are equivariantly homotopic. Then  $\deg_i(f_0) = \deg_i(f_1)$ for all $i\in [n]$. 
\end{corollary}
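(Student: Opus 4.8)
If $f_0, f_1\colon|\Gamma^n_L| \to \eml$ are equivariantly homotopic equivariant maps, then $\deg_i(f_0) = \deg_i(f_1)$ for all $i\in[n]$.

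The plan is to deduce this directly from Lemma~\ref{lem:2d_homotopy_invariance}: since, by Definition~\ref{def:degree}, $\deg_i(f)$ is nothing but $\deg_1$ of the $2$-minor $f^{\sigma_i}\colon T^2\to\eml$, it suffices to show that equivariantly homotopic maps have equivariantly homotopic $\sigma_i$-minors, and then quote the $2$-dimensional statement.

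First I would make explicit that taking the $\sigma_i$-minor is precomposition with an equivariant continuous map. The map $\sigma_i\colon[n]\to[2]$ of Definition~\ref{def:degree} induces, in the standard way, a simplicial map $s_i\colon\Gamma_L^2\to\Gamma_L^n$ that duplicates and reorders coordinates: a simplex $(\tau_1,\tau_2)$ of $\Gamma_L^2$ is sent to the simplex of $\Gamma_L^n$ whose $i$-th coordinate is $\tau_1$ and all of whose other coordinates are $\tau_2$. This commutes with every face and degeneracy operation (these act coordinatewise), and it is equivariant for the diagonal $\Ztwo$-actions (coordinatewise $x\mapsto x+L/2$). Passing to geometric realizations yields an equivariant continuous map $|s_i|\colon T^2\to T^n$, and by the very definition of the $\pi$-minor, $f^{\sigma_i}=f\circ|s_i|$ for every equivariant $f\colon|\Gamma_L^n|\to\eml$.

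Next, given an equivariant homotopy $h\colon|\Gamma_L^n|\times[0,1]\to\eml$ from $f_0$ to $f_1$ (so each slice $h(\cdot,t)$ is equivariant), I would compose with $|s_i|\times1_{[0,1]}$ to obtain the map $h\circ(|s_i|\times1_{[0,1]})\colon T^2\times[0,1]\to\eml$, which is an equivariant homotopy from $f_0^{\sigma_i}=f_0\circ|s_i|$ to $f_1^{\sigma_i}=f_1\circ|s_i|$. Both of these are equivariant maps $|\Gamma_L\times\Gamma_L|\to\eml$, so Lemma~\ref{lem:2d_homotopy_invariance} in the case $L_0=L_1=L$ gives $\deg_1(f_0^{\sigma_i})=\deg_1(f_1^{\sigma_i})$, that is, $\deg_i(f_0)=\deg_i(f_1)$. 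As $i\in[n]$ was arbitrary, the corollary follows.

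This is a purely formal consequence of the already-established $2$-dimensional invariance, so there is essentially no obstacle. The only point demanding a moment's care is the remark that the $\sigma_i$-minor equals precomposition with an \emph{equivariant} map $|s_i|$: it is exactly this equivariance that makes the pushed-forward homotopy $h\circ(|s_i|\times1_{[0,1]})$ equivariant again, which is what Lemma~\ref{lem:2d_homotopy_invariance} requires.
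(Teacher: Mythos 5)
Your proof is correct and follows the paper's own route: the paper's proof is the one-line observation that the $\sigma_i$-minors of equivariantly homotopic maps are again equivariantly homotopic, after which Lemma~\ref{lem:2d_homotopy_invariance} applies. You simply spell out the detail the paper leaves implicit (the minor is precomposition with an equivariant map $T^2\to T^n$, so composing the homotopy with it stays equivariant), which is exactly the right justification.
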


\begin{proof}
  Since $f_0$ and $f_1$ are equivariantly homotopic, so are their minors $f^{\sigma_i}_0$ and $f^{\sigma_i}_1$.
\end{proof}

It will be convenient to extend the notation for monomial maps to general integer coefficients. As before, let us view $S^1 =\{z \in \mathbb C \colon \lvert z\rvert=1\}$ as the unit circle in the complex plane. Given an $n$-tuple of integers $\alpha=(\alpha_1,\dots,\alpha_n)$ with $\sum_i \alpha_i \equiv 1 \bmod 2$, we get an equivariant map from $T^n$ to $S^1$ defined by $(z_1,\dots,z_n)\mapsto z_1^{\alpha_1}\cdots z_n^{\alpha_n}$. 
%\begin{equation*}
%\label{eq:monomial}
%(z_1,\dots,z_n)\mapsto z_1^{\alpha_1}\cdots z_n^{\alpha_n}.
%\end{equation*}
%Moreover, this map is equivariant if and only if $\sum_i \alpha_i \equiv 1 \bmod 2$. 
By composing this map first with a fixed equivariant inclusion $S^1 \hookrightarrow S^2$ and then with the inclusion $j\colon S^2 \rightarrow \eml$, we get an equivariant \emph{monomial map} $m_\alpha\colon T^n \to \eml$ given by 
 \[
    m_\alpha(z_1, \dots, z_n) =   j(z_1^{\alpha_1}\cdots z_n^{\alpha_n})
 \]
 
 \begin{remark}
It is not hard to observe that the assignment $\alpha \mapsto m_\alpha$ preserves minors when $\alpha$ is interpreted as a function $f\colon \Z^n \to \Z$. While we implicitly use this minion homomorphism, this is not the minion homomorphism we are looking for --- importantly, $\affine_2$ is \emph{not} a subminion of the minion of tuples $\alpha \in \Z^n$ with $\sum \alpha_i \equiv 1 \pmod 2$ since, e.g., the unary minor of $(1, 1, 1)$ disagrees in the two minions.
\end{remark}
 
Since monomial maps form a minion, we can easily compute the degree of any of them.
 
\begin{lemma}\label{lem:monomial_degrees}
    Let $\alpha \in \Z^n$ such that $\sum_i \alpha_i \equiv 1 \pmod 2$. Then
    \[
        \deg_i(m_\alpha) = \alpha_i \bmod 2
    \]
\end{lemma}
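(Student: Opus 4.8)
The plan is to reduce the general statement to the two-variable case $\deg_1(m_\beta)$ for $\beta \in \Z^2$, and then to compute that case directly from the definition of $\deg_1$. For the reduction, recall that $\deg_i(m_\alpha) = \deg_1(m_\alpha^{\sigma_i})$ by Definition~\ref{def:degree}, where $\sigma_i\colon[n]\to[2]$ sends $i\mapsto 1$ and all other indices to $2$. By the remark preceding the lemma, the assignment $\alpha\mapsto m_\alpha$ preserves minors when $\alpha$ is read as the linear function $\Z^n\to\Z$, so $m_\alpha^{\sigma_i} = m_{\alpha^{\sigma_i}}$, and one computes $\alpha^{\sigma_i} = (\alpha_i,\ \sum_{j\neq i}\alpha_j) \in \Z^2$. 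Hence it suffices to prove that for $\beta = (\beta_1,\beta_2)\in\Z^2$ with $\beta_1+\beta_2$ odd, one has $\deg_1(m_\beta) = \beta_1 \bmod 2$.

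For the two-variable computation, take $L = L' $ (any positive multiple of $4$) and work with the explicit triangulation $\Gamma_L\times\Gamma_L$ of $T^2$. The map $m_\beta\colon T^2\to\eml$ factors through $S^1 \hookrightarrow \Sigma^1 \subset \Sigma^2 \subset \eml$: first the monomial $(z_1,z_2)\mapsto z_1^{\beta_1}z_2^{\beta_2}$ into $S^1$, then the fixed equivariant inclusion into the $1$-skeleton $\Sigma^1$ of $\eml$. Since the image lands in $\Sigma^1$, no $2$-cells of $\eml$ are hit, so after cellular approximation the term $f^*(d^0)(b_1)$ vanishes (the pullback of the $2$-cochain $d^0$ along a map into the $1$-skeleton is $0$), and $\deg_1(m_\beta) = m_\beta^*(e^0)(x_1) \bmod 2$. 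Now $m_\beta^*(e^0)(x_1)$ is, by definition of the cellular cochain pairing, the signed count mod $2$ of edges in the image of the coordinate cycle $x_1 = \sum_{k}[(k,0)(k+1,0)]$ that cover the edge $e_0 = [\blueb,\yellowb]$ of $\Sigma^1$; equivalently, it is the degree mod $2$ of the restriction of $m_\beta$ to the circle $x_1 = \{(z_1,1)\colon z_1\in S^1\}$, viewed as a self-map of $S^1\cong\geom{\Sigma^1}$. On that circle the second coordinate is fixed at $1$, so the restriction is $z_1\mapsto z_1^{\beta_1}$ (composed with the fixed inclusion, which has degree $1$), which has degree $\beta_1$. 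Hence $\deg_1(m_\beta) = \beta_1 \bmod 2$, as required.

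The only genuinely nontrivial point — and the step I would be most careful about — is the identification of $m_\beta^*(e^0)(x_1) \bmod 2$ with the mod-$2$ mapping degree of the restriction $m_\beta|_{x_1}\colon S^1\to\geom{\Sigma^1}$. This requires choosing a cellular (indeed simplicial) representative of $m_\beta$ restricted to the subcomplex triangulating $x_1$, checking that the coordinate cycle $x_1$ is a generator of $H_1$ of that circle and that $e^0$ is dual to a generator of $H^1(\Sigma^1)$, and invoking that for maps between circles the induced map on $H_1$ is multiplication by the topological degree. All of this is standard, but it is the place where the geometric content of $\deg_1$ actually enters; everything else is bookkeeping with minors and the observation that maps into a $1$-skeleton kill $2$-cochains. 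I would also note in passing that this computation, together with the fact that $\beta_1+\beta_2$ is odd, re-proves $\sum_i\deg_i(m_\alpha)\equiv 1\bmod 2$ for monomial maps, consistent with the general claim about $\gamma_n$ landing in $\affine_2^{(n)}$.
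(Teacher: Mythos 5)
Your proposal is correct and follows essentially the same route as the paper's proof: reduce to the two-variable minor $m_\beta$ with $\beta=(\alpha_i,\sum_{j\neq i}\alpha_j)$, observe that the image lies in the $1$-skeleton so the $d^0$-term vanishes, and identify the remaining term $m_\beta^*(e^0)(x_1)$ with the winding degree $\beta_1$ of the restriction to the first coordinate circle (the paper phrases this as $(m_\beta)_*(x_1)=\alpha_i e_0+\alpha_i e_1$). The point you flag as needing care is exactly the step the paper asserts directly, so no substantive difference remains.
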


\begin{proof}
    Let $\sigma_i$ the minor used to define $\deg_i$. Then $m_\alpha^{\sigma_i} = m_{\beta}$ with $\beta = (\alpha_i, \sum_{j\neq i}\alpha_j)\in \Z^2$.
    Since the image of $m_\beta$ is contained in the $1$-skeleton, $m_\beta^*(d^0)\equiv 0$.
    Moreover, $(m_\beta)_*(x_1) = \alpha_i e_0 +\alpha_i e_1$, hence $e^0((m_\beta)_*(x_1)) = \alpha_i$ and thus $\deg_i(m_\alpha) = \deg(m_\beta) = \alpha_i + 0\bmod 2$.
\end{proof}

\begin{corollary} \label{lem:B.3} \label{lem:all_distinct}
Let $\alpha, \beta \in \affine_2^{(n)}$. Then $m_\alpha$ and $m_\beta$ are equivariantly homotopic if and only if $\alpha = \beta$.
\end{corollary}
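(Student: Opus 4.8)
The plan is to read this off immediately from the two facts just established: the homotopy–invariance of the degree invariants (Corollary~\ref{cor:invariance_degree}) and the explicit computation of the degrees of monomial maps (Lemma~\ref{lem:monomial_degrees}). The forward implication is trivial, since $\alpha = \beta$ means $m_\alpha$ and $m_\beta$ are literally the same map, hence (equivariantly) homotopic.

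For the converse, suppose $m_\alpha \sim_{\Ztwo} m_\beta$. Since $\alpha,\beta \in \affine_2^{(n)}$, we have $\sum_i \alpha_i \equiv \sum_i \beta_i \equiv 1 \pmod 2$, so Lemma~\ref{lem:monomial_degrees} applies and yields $\deg_i(m_\alpha) = \alpha_i$ and $\deg_i(m_\beta) = \beta_i$ for every $i \in [n]$; here the reduction mod $2$ appearing in that lemma is vacuous because the entries of $\alpha$ and $\beta$ already lie in $\Ztwo$, so the degrees recover the coefficients exactly rather than merely up to parity. On the other hand, Corollary~\ref{cor:invariance_degree} says equivariantly homotopic maps have equal $i$-degrees, so $\deg_i(m_\alpha) = \deg_i(m_\beta)$ for all $i$. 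Combining these two equalities gives $\alpha_i = \beta_i$ for every $i \in [n]$, i.e., $\alpha = \beta$.

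There is essentially no obstacle to overcome at this point: all the real work has gone into Definition~\ref{def:degree}, the homotopy–invariance argument of Lemma~\ref{lem:2d_homotopy_invariance}, and the degree computation of Lemma~\ref{lem:monomial_degrees}. The only point meriting a word of care is the observation (already noted above) that restricting to $\affine_2^{(n)}$ makes the mod-$2$ reduction in Lemma~\ref{lem:monomial_degrees} trivial, which is exactly what turns the invariant $\gamma_n([f]) = (\deg_1(f),\dots,\deg_n(f))$ into an injection on $[T^n,\eml]_{\Ztwo}$ — the ingredient needed, together with Lemma~\ref{lem:number-homotopy-classes}, to conclude bijectivity in Proposition~\ref{thm:affine_minion}.
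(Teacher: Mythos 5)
Your proof is correct and matches the paper's own argument exactly: the forward direction is trivial since equal tuples give identical maps, and the converse combines the homotopy invariance of the degrees (Corollary~\ref{cor:invariance_degree}) with the computation $\deg_i(m_\alpha)=\alpha_i \bmod 2$ (Lemma~\ref{lem:monomial_degrees}). Your added remark that the mod-$2$ reduction is vacuous for $\alpha,\beta\in\affine_2^{(n)}$ is a correct clarification of a point the paper leaves implicit.
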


\begin{proof}% [Proof of Lemma~\ref{lem:B.3}]
If $\alpha=\beta$ then $m_\alpha$ and $m_\beta$ are identical as maps. Conversely, if $m_\alpha$ and $m_\beta$ are equivariantly homotopic, then $\deg_i(m_\alpha)=\deg_i(m_\beta)$ for all $i\in [n]$, by Corollary~\ref{cor:invariance_degree}. Thus, by  Lemma~\ref{lem:monomial_degrees}, $\alpha_i=\beta_i$, for all $i\in [n]$.
\end{proof}

We are now ready to prove Proposition~\ref{thm:affine_minion}:

\begin{proof}[Proof of Proposition~\ref{thm:affine_minion}]
For every $n\geq 1$, consider the map $\gamma_n\colon [T^n,Y]_{\Ztwo} \to \Ztwo^n$ given by
  \[
   \gamma_n( [f]) = (\deg_1(f), \dots, \deg_n(f))
  \] 
By Corollary~\ref{cor:invariance_degree}, this mapping is well-defined. Moreover, by Lemma \ref{lem:monomial_degrees}, if $\alpha \in  \affine_2^{(n)}$, then the homotopy class $[m_\alpha] \in [T^n,Y]_{\Ztwo}$ of the corresponding monomial map satisfies $\gamma_n( [m_\alpha])=\alpha$, i.e., the homotopy classes $[m_\alpha]$, $\alpha \in  \affine_2^{(n)}$, are pairwise distinct, and by Lemma~\ref{lem:number-homotopy-classes}, they account for all elements of $ [T^n,Y]_{\Ztwo}$, i.e., every equivariant map $f\colon T^n\to Y$ is equivariantly homotopic to $m_\alpha$ for a unique $\alpha \in  \affine_2^{(n)}$. It follows that $\gamma_n( [f]) \in \affine_2^{(n)}$ and that $\gamma_n$ is a bijection. 

Furthermore, if $\alpha \in \Z^n$ with $\sum_i \alpha_i = 1$, and $\pi \colon [n] \to [m]$ then 
 \[
    \gamma_n([m_\alpha]) = (\alpha_1 \bmod 2, \dots, \alpha_n \bmod 2)
  \]
  by Lemma~\ref{lem:monomial_degrees}, hence $\gamma_n([m_\alpha])^\pi = \beta$, where $\beta_j = (\sum_{i\in \pi^{-1}(j)} \alpha_i)$\penalty200${}\bmod 2$.
  Furthermore, $m_\alpha^\pi = m_{\beta'}^{\phantom{\pi}}$ where
  $\beta'_j = \sum_{i\in \pi^{-1}(j)} \alpha_i$. Consequently,
  \[
    \gamma_m([m_\alpha]^\pi) = \beta = \gamma_m([m_\alpha^\pi]).
  \]
hence
\[
    \gamma_m([m_\alpha]^\pi) = \gamma_n([m_\alpha])^\pi,
\]
Thus, the maps $\gamma_n$ preserve minors for homotopy classes of monomial maps. Since these account for all homotopy classes, the maps $\gamma_n$ define a minion isomorphism $\gamma\colon \hpol{S^1, \eml} \to \affine_2$.
 \end{proof}

Finally, we show non zero degree guarantees a colour swapping edge.

\begin{lemma} \label{lem:col_swap_edges}
  \label{lem:2-torus-one-colour-swapping-edge}
  Let $f\colon \Gamma_L\times \Gamma_{L'} \to \Sigma^2$ a simplicial equivariant map such that $\deg_1(f) =1$. Then there is an horizontal color swapping edge, that is there is a vertex $(v_1, v_2)\in \Gamma_L\times \Gamma_{L'}$ such that $f(v_1, v_2) =\blueb$ and $f(v_1+1, v_2) = \yellowb$.
\end{lemma}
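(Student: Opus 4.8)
The plan is to prove the contrapositive: assuming $f\colon \Gamma_L\times\Gamma_{L'}\to\Sigma^2$ is simplicial, equivariant, and has \emph{no} horizontal colour-swapping edge, I would show that $\deg_1(f)=0$.

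The first step is purely combinatorial: the absence of a horizontal colour-swapping edge forces $f$ to be constant on every horizontal circle. Fixing $v_2\in\Z_{L'}$, consider the cyclic word $f(0,v_2)f(1,v_2)\cdots f(L-1,v_2)$ over the alphabet $\{\blueb,\yellowb\}$; by hypothesis it contains no cyclic occurrence of the factor $\blueb\yellowb$, and a cyclic binary word using both letters must pass from $\blueb$ to $\yellowb$ somewhere, so the word is constant. Thus there is a function $c\colon\Z_{L'}\to\{\blueb,\yellowb\}$ with $f(v_1,v_2)=c(v_2)$ for all $(v_1,v_2)$; equivalently, $f$ factors through the (simplicial, equivariant) projection $p_2\colon\Gamma_L\times\Gamma_{L'}\to\Gamma_{L'}$ as $f=\bar f\circ p_2$, where the induced vertex map $\bar f\colon\Gamma_{L'}\to\Sigma^2$ is simplicial because any chain $\tau$ of $\Gamma_{L'}$ is $p_2$ of the product simplex $([0,\dots,0],\tau)$, so $\bar f(\tau)=f([0,\dots,0],\tau)$ is a simplex of $\Sigma^2$.

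The second step plugs this factorisation into the definition $\deg_1(f)=\bigl(f^*(e^0)(x_1)+f^*(d^0)(b_1)\bigr)\bmod 2$, and I claim both summands vanish. Since $f_*=\bar f_*\circ (p_2)_*$ and $\Gamma_{L'}$ is $1$-dimensional, $(p_2)_*$ maps $C_2(\Gamma_L\times\Gamma_{L'})$ into $C_2(\Gamma_{L'})=0$, so $f_*(b_1)=0$ and hence $f^*(d^0)(b_1)=0$. Similarly $x_1$ is supported on the subcomplex $\Gamma_L\times\{0\}$ of simplices all of whose vertices have second coordinate $0$, and $p_2$ collapses this subcomplex to the single vertex $0$, so $(p_2)_*(x_1)=0$ and $f^*(e^0)(x_1)=0$; hence $\deg_1(f)=0$. (Equivalently, avoiding $\bar f$: from the description of the simplices of $\Gamma_L\times\Gamma_{L'}$ in Section~\ref{sec:hom-complexes}, in every non-degenerate $2$-simplex coordinate $1$ is flipped at one of the two steps and coordinate $2$ at the other, so some consecutive pair of vertices agrees in the second coordinate; and the two endpoints of every edge appearing in $x_1$ agree in the second coordinate. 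Since $f$ depends only on the second coordinate, it sends all of these to degenerate simplices of $\Sigma^2$, which vanish in the cellular chain complex.)

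I do not expect a serious obstacle. The one point needing a little care is the structural fact about the product triangulation used in the parenthetical phrasing — that a non-degenerate $2$-simplex of $\Gamma_L\times\Gamma_{L'}$ changes exactly one coordinate at each of its two steps — but this is immediate from the description of the simplices of $\Gamma_L^n$ recalled earlier. Note in particular that the band $b_1$ plays no subtle role in this direction: each of the two terms defining $\deg_1(f)$ is individually $0$. (The band is what makes $\deg_1$ a homotopy invariant in Lemma~\ref{lem:2d_homotopy_invariance}, which we may take as given.)
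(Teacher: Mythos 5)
Your proof is correct and takes essentially the same route as the paper: it argues the contrapositive and shows that both terms $f^*(e^0)(x_1)$ and $f^*(d^0)(b_1)$ in the definition of $\deg_1$ vanish, because the absence of a blue-to-yellow horizontal edge forces $f$ to be constant along horizontal circles and hence rules out colour-swapping horizontal edges and alternating triangles. The factorisation through the projection $p_2$ is only a repackaging; your parenthetical variant is exactly the paper's argument.
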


\begin{proof}
  Suppose, by contradiction, that every horizontal edge is monochrome. Therefore, the image of the horizontal coordinate cycle is constant so that $f^*(e_0)(x_1) = 0$. Additionally, the image of a triangle is non degenerate if and only if it is alternating (i.e., $f([u, v, w]) = [\blueb, \yellowb, \blueb]$ or $[\yellowb, \blueb, \yellowb]$); since we are assuming that every horizontal edge is monochrome, there are no alternating triangles and therefore $f^*(d^0)(b_1) = 0$. The total degree is then $\deg_1(f) = f^*(e_0)(x_1) +f^*(d^0)(b_1) = 0$
\end{proof}

\section{Bounding Essential Arity}
  \label{sec:bounded-arity}

We prove the key technical result that bounds the essential arity of simplicial maps from $\Gamma_L^n$ to $\Sigma^2$.

\begin{theorem}
\label{thm:bounded-arity}
Let $L \geq 4$ be an integer divisible by $4$, let $f\colon \Gamma_L^n \to \Sigma^2$ be an equivariant simplicial map such that the composition with the map $\Sigma^2 \to \eml$ is equivariantly homotopic to the map given by the monomial $\prod_{i\in \abs I} z_i$; equivalently, $\deg_i(f)=1$ if and only if $i\in I$.
Then $\abs I \leq O(L^2)$.
\end{theorem}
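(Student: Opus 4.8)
The plan is to turn each of the $\abs I$ conditions ``$\deg_i(f)=1$'' into a concrete local obstruction inside the $2$-colouring $f$ by means of Lemma~\ref{lem:col_swap_edges}, and then to argue by a counting argument that $\Gamma_L^n$ can host only $O(L^2)$ such obstructions before three of them combine into an alternating $3$-simplex; since $f$ is a simplicial map to $\Sigma^2$, Observation~\ref{obs:2-coloring} forbids such a simplex, and this gives the bound.

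\emph{Localising $\deg_i(f)=1$.}
I would first unpack Definition~\ref{def:degree}: $\deg_i(f)=\deg_1(f^{\sigma_i})$, where $f^{\sigma_i}\colon\Gamma_L\times\Gamma_L\to\Sigma^2$ is obtained from $f$ by identifying all coordinates other than the $i$-th into a single coordinate; since $\sigma_i$ is compatible with the $\Ztwo$-action, $f^{\sigma_i}$ is again an equivariant simplicial map. For $i\in I$ we have $\deg_1(f^{\sigma_i})=1$, so Lemma~\ref{lem:col_swap_edges} yields a horizontal colour-swapping edge of $f^{\sigma_i}$; translating it back through $\sigma_i$, this is an edge of $\Gamma_L^n$ of the form $[P_i,P_i+\boldsymbol e_i]$, where $P_i\in\Z_L^n$ has some value $z_i$ in coordinate $i$ and a common value $c_i$ in every other coordinate, $P_i+\boldsymbol e_i$ increases coordinate $i$ by one, and $f(P_i)=\blueb$, $f(P_i+\boldsymbol e_i)=\yellowb$. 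I would package $(z_i,c_i)$ together with a carefully chosen piece of the colouring of $f$ near this edge, of size bounded independently of $n$, into a datum I call the \emph{witness} of $i$. A further degree of freedom is that, by Corollary~\ref{cor:invariance_degree}, every $\deg_j(f)$ is unchanged when $f$ is precomposed with a rotation of a single coordinate by an even amount (an equivariant simplicial automorphism of $\Gamma_L^n$ equivariantly homotopic to $1_{\Gamma_L^n}$), which I would use to put the $z_i$ and $c_i$ into a normal form.

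\emph{The averaging.}
The witness of each $i\in I$ ranges over a set of size $O(L^2)$, so if $\abs I$ exceeds a suitable constant times $L^2$, then three coordinates $i,j,k\in I$ have identical witnesses; the heart of the argument would then be to show that colour-swapping edges with identical witnesses can be assembled, using the recorded colouring data and the normal form, into a $3$-simplex $[\boldsymbol a_0\prec\boldsymbol a_1\prec\boldsymbol a_2\prec\boldsymbol a_3]$ of $\Gamma_L^n$ on which $f$ is alternating, i.e.\ $[f(\boldsymbol a_0),f(\boldsymbol a_1),f(\boldsymbol a_2),f(\boldsymbol a_3)]$ equals $[\blueb,\yellowb,\blueb,\yellowb]$ or $[\yellowb,\blueb,\yellowb,\blueb]$. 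By Observation~\ref{obs:2-coloring} no such simplex exists in a simplicial map to $\Sigma^2$, a contradiction, whence $\abs I=O(L^2)$.

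\textbf{Main obstacle.}
The difficulty is entirely in the design of the witness, which must simultaneously have size bounded independently of $n$, take only $O(L^2)$ values, and be rich enough that coordinates sharing a witness are forced to span an \emph{alternating} simplex rather than merely a common one. The crude choice $(z_i,c_i)\in\Z_L^2$ already fails the last requirement: two near-diagonal colour-swapping edges in distinct directions $i,j$ with $(z_i,c_i)=(z_j,c_j)=(z,c)$ sit at vertices differing in coordinates $i,j$ with values $z$ and $c$ there, and since $z$ and $c$ are $\preccurlyeq$-comparable only when they form an even/odd pair differing by one, in general these two edges lie in no common simplex at all. Overcoming this seems to require, on the one hand, the coordinate-rotation freedom above (available only in even amounts, and acting on all witnesses at once) to bring the vertices into $\preccurlyeq$-comparable position along the directions in $I$, and, on the other hand, a more global parity argument: the colour $f$ takes on the ``interior'' vertices of such a $3$-simplex — those reached by flipping two of the three relevant coordinates — is recorded by no single witness, so the alternation cannot simply be read off. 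And because $\Sigma^2$ is two-dimensional, alternating \emph{triangles} are allowed, so the obstruction is genuinely three-dimensional and the blue/yellow bookkeeping — already present in the definition of $\deg_1$ as a sum of an edge count and a count of alternating triangles in the connecting band — must be carried through. This interplay of a three-dimensional gluing constraint with the parity bookkeeping is, I expect, exactly why the counting argument, though ``elementary in hindsight'', was elusive for some time.
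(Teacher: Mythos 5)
Your outline gets the two boundary ingredients right (localising $\deg_i(f)=1$ via the $2$-minor $f^{\sigma_i}$ and Lemma~\ref{lem:col_swap_edges}, and ending with Observation~\ref{obs:2-coloring}, which forbids three colour alternations along a chain), but the core of the argument — the step that actually produces the contradiction — is missing, and the mechanism you propose for it would not work. Applying Lemma~\ref{lem:col_swap_edges} to $f^{\sigma_i}$ only gives, for each $i\in I$, a single colour-swapping edge lying on the \emph{diagonal} slice, at a vertex whose other coordinates all share one value $c_i$. Pigeonholing the pairs $(z_i,c_i)$ then hands you three directions $i,j,k$ whose swapping edges sit at the ``same place'' in the $(z,c)$-plane, but these are three pairwise $\preccurlyeq$-incomparable edges of $\Gamma_L^n$ (as you note, $z$ and $c$ are comparable only in very special cases), they need not lie in any common simplex, and nothing in your witness controls the colours of the vertices obtained by flipping two or three of the coordinates $i,j,k$ — which is exactly what an alternating $3$-simplex requires. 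Your suggested fixes (even coordinate rotations to normalise, extra bounded local colouring data) do not address this: rotations act on all witnesses simultaneously and cannot make incomparable configurations comparable, and no bounded-size local record of the colouring near a single diagonal edge determines the colours at the combined flips. So the proposal, as written, reduces the theorem to a combinatorial claim that is both unproved and, in the form stated, false for the crude witness and not obviously repairable.

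The paper resolves precisely this obstacle by a different, two-level averaging rather than a pigeonhole on witness locations. First, one colour-swapping edge per slice is amplified into a \emph{density} statement (Lemma~\ref{lem:colour-swapping-by-height}): for each $i\in I$ and each admissible height $h$, at least a $\frac{1}{CL^2}$ fraction of \emph{all} edges in direction $i$ at heights $h$ or $n-1-h$ is colour-swapping. This uses not just the diagonal but a family of \emph{generalized diagonals} $\zeta_0\colon\Gamma_{3L}\to\Gamma_L^{n-1}$ constructed to pass only through vertices of heights $h$ and $n-1-h$ (Lemma~\ref{lem:slices}), each inducing a slice with $\deg_1=1$ (Lemma~\ref{lem:colour-swapping-slice}), averaged over a group of automorphisms acting transitively on vertices of a fixed height. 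Second, a uniformly random non-degenerate $n$-simplex is chosen; its $h$-th edge is uniform in $E(h)$, so by linearity of expectation some single monotone chain of $n+1$ vertices carries at least $\lfloor\frac{n-1}{3}\rfloor\cdot\frac{1}{CL^2}$ colour changes. Because all these changes occur along one chain, comparability is automatic, and Observation~\ref{obs:2-coloring} caps the number of alternations at $2$, giving $n=O(L^2)$ (after first reducing to $I=[n]$ by taking a minor injective on $I$, using Lemma~\ref{lem:eta}). In short, the missing idea is to trade ``one well-located swap per coordinate'' for ``a positive fraction of swaps per coordinate and per height'', which is what lets the final count live inside a single simplex and bypasses the comparability and intermediate-colour problems you correctly identified as fatal to the witness approach.
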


We recall (Observation~\ref{obs:2-coloring}) that equivariant simplicial maps $f\colon \Gamma_L^n \to \Sigma^2$ correspond bijectively to $2$-colourings of the vertices of $\Gamma_L^n$ with the following two properties: The colouring is equivariant (i.e., every pair of antipodal vertices of $\Gamma_L^n$ have distinct colours), and no $3$-simplex $[\boldsymbol u_0, \boldsymbol u_1, \boldsymbol u_2, \boldsymbol u_3]$ is coloured with alternating colours. We will show that this is impossible if $\lvert I\rvert$ is large; more precisely, we will show that if $i \in I$, then there are many edges $[\boldsymbol u, \boldsymbol v]$ such that the colours of $\boldsymbol u$ and $\boldsymbol v$ are different and $\boldsymbol u$ and $\boldsymbol v$ differ only in the $i$th coordinate (note that the difference in this coordinate is 1 by the definition of $\Gamma_L^n$). This is then used to show that we need to have an alternating simplex of dimension proportional to the size of $I$.

To present the details of the argument, we need a number of definitions. We recall the description of $\Gamma_L^n$: Its vertices are the $n$-tuples $\boldsymbol{u}=(u_1,\dots,u_n)\in \Z_L^n$;
%Jakub: removed brackets around \Z_L for consistency
edges ($1$-simplices) are pairs $[\boldsymbol{u},\boldsymbol{v}]$ of vertices such that $\boldsymbol{v}$ is obtained from $\boldsymbol{u}$ by choosing a non-empty subset of coordinates of $\boldsymbol{u}$ that are all even, and changing each of them by $\pm 1$ modulo $L$; and the $k$-simplices are $(k+1)$-tuples $[\boldsymbol{u}_0,\boldsymbol{u}_1,\dots, \boldsymbol{u}_k]$ such that $[\boldsymbol{u}_{j-1},\boldsymbol{u}_j]$ is an edge for $1\leq j\leq k$. We define the \emph{height} $\height(\boldsymbol u)$ of a vertex $\boldsymbol u = (u_1, \dots, u_n)$ as the number  of coordinates $i \in [n]$ such that $u_i$ is odd; moreover, we define the height of an edge $[\boldsymbol u, \boldsymbol v]$ as the height of $\boldsymbol u$. Note that  \emph{every} edge $[\boldsymbol u, \boldsymbol v]$, we have $\height(\boldsymbol u) < \height(\boldsymbol v)$. A special role will be played by edges $[\boldsymbol u, \boldsymbol v]$ such that $\height(\boldsymbol v) = \height(\boldsymbol u) + 1$, or equivalently, such that $\boldsymbol u$ and $\boldsymbol v$ differ in exactly one coordinate; we call such edges \emph{coordinate edges}. More precisely, we say that an edge $[\boldsymbol u, \boldsymbol v]$ is \emph{in coordinate direction $i$} if $\boldsymbol u$ and $\boldsymbol v$ differ exactly in the $i$th coordinate. For $i\in [n]$, we denote the set of all edges in coordinate direction $i$ by $E_i$, and denote by $E := E_1\sqcup \dots \sqcup E_n$ the set of all coordinate edges. We will also need the following more refined classification: For $i\in [n]$ and $0\leq h\leq n-1$, let $E_i(h)$ denote the set of all edges in $E_i$ of height $h$, and let $E(h) = E_1(h) \sqcup \dots \sqcup E_n(h)$ denote the set of all coordinate edges of height $h$ (note that the height $h$ of a coordinate edge determines the heights $h$ and $h+1$ of both endpoints).

Given a $2$-colouring of the vertices of $\Gamma_L^n$, we say that edge $[\boldsymbol u, \boldsymbol v]$ of $\Gamma_L^n$ is \emph{colour-swapping} if $\boldsymbol u$ and $\boldsymbol v$ have different colours.
We now state a key lemma used in the proof of Theorem~\ref{thm:bounded-arity}.
The lemma shows that, if $f$ depends on the coordinate $i$ (up to homotopy), then some fraction (independent from the arity of $f$) of edges in coordinate direction $i$ is colour-swapping.

\begin{lemma}
\label{lem:colour-swapping-by-height}
Let $f\colon \Gamma_L^n \to \Sigma^2$ be an equivariant simplicial map such that $\deg_i(f)=1$ and let $0\leq h < \lfloor \frac{n-1}{3} \rfloor$. Then a fraction of at least $\frac{1}{CL^2}$ of the edges in $E_i(h)\sqcup E_{i}(n-1-h)$ are colour-swapping, where $C>0$ is a suitable constant.
\end{lemma}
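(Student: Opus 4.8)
The plan is to produce many colour‑swapping coordinate‑$i$ edges of heights $h$ and $n-1-h$ by restricting $f$ to many $2$‑dimensional sub‑tori of $\Gamma_L^n$ and invoking Lemma~\ref{lem:2-torus-one-colour-swapping-edge}, and then to conclude by a double‑counting argument. Write $J=[n]\setminus\{i\}$, so $\lvert J\rvert=n-1$, and identify $\Gamma_L^n=\Gamma_L\times\Gamma_L^J$ with the first factor corresponding to coordinate~$i$. Since $\deg_i(f)=1$, Proposition~\ref{thm:affine_minion} together with Lemma~\ref{lem:monomial_degrees} and Corollary~\ref{cor:invariance_degree} shows that $f$, composed with $\Sigma^2\hookrightarrow\eml$, is $\Ztwo$‑homotopic to a monomial map $m_\alpha$ with $\alpha_i=1$.

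I would use sub‑tori of the following shape. For $4\mid L'$ and an equivariant simplicial map $w\colon\Gamma_{L'}\to\Gamma_L^J$, set $\iota_w:=1_{\Gamma_L}\times w\colon\Gamma_L\times\Gamma_{L'}\to\Gamma_L\times\Gamma_L^J=\Gamma_L^n$; this is equivariant and simplicial and sends the first coordinate circle identically onto the coordinate‑$i$ circle of $\Gamma_L^n$. Call $w$ a \emph{winding circle} if it maps every even‑indexed vertex of $\Gamma_{L'}$ to a vertex of $\Gamma_L^J$ of height $h$ and every odd‑indexed vertex to one of height $n-1-h$. For a winding circle $w$, the composition $f\circ\iota_w\colon\Gamma_L\times\Gamma_{L'}\to\Sigma^2$ is equivariant and simplicial and satisfies $f\circ\iota_w\sim_{\Ztwo}m_\alpha\circ\iota_w$; since $m_\alpha\circ\iota_w$ maps into the $1$‑skeleton $\Sigma^1$ of $\eml$ and winds $\alpha_i$ times in the first coordinate direction, Lemma~\ref{lem:2d_homotopy_invariance} and the computation of Lemma~\ref{lem:monomial_degrees} give $\deg_1(f\circ\iota_w)=\alpha_i=1$. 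Hence Lemma~\ref{lem:2-torus-one-colour-swapping-edge} applies and yields a horizontal colour‑swapping edge of $f\circ\iota_w$; its image under $\iota_w$ is a colour‑swapping edge of $\Gamma_L^n$ in coordinate direction $i$ whose $\Gamma_L^J$‑projection is a vertex of $w$, and which therefore has height $h$ or $n-1-h$, i.e.\ lies in $E_i(h)\sqcup E_i(n-1-h)$.

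Next I would construct enough winding circles, and this is where the hypothesis enters. An equivariant $w$ satisfies $w(b+L'/2)=w(b)+(L/2,\dots,L/2)$, so over a half‑period each of the $n-1$ coordinates must change by $L/2\pmod L$ and hence be moved at least $L/2$ times; as every edge of $\Gamma_{L'}$ changes $\height$ by exactly $n-1-2h$, i.e.\ moves exactly $n-1-2h$ coordinates, a winding circle must have length $L'\ge(n-1)L/(n-1-2h)$. The assumption $h<\lfloor(n-1)/3\rfloor$ forces $n-1-2h>(n-1)/3$, so this lower bound is below $3L$; conversely one can build a winding circle of length $L'=O(L)$ — e.g.\ by a round‑robin construction that at each step flips a fresh block of $n-1-2h$ even coordinates up (reaching height $n-1-h$) and then down by $+1$ (returning to height $h$), letting the set of odd coordinates drift so that every coordinate advances by $L/2$ over a half‑period, and padding to a multiple of $4$. (For $h$ near $(n-1)/2$ the lower bound, and hence the final constant, would grow with $n$.) I would then let $G$ be the automorphism group of $\Gamma_L^J$ generated by permutations of $J$ and by translations by vectors with all entries even; $G$ preserves the simplicial structure, the $\Ztwo$‑action and all heights, and acts transitively both on the height‑$h$ and on the height‑$(n-1-h)$ vertices. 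Fixing one winding circle $w_0$, every translate $gw_0$ ($g\in G$) is again a winding circle and, by the previous paragraph, yields a colour‑swapping edge $e_g\in E_i(h)\sqcup E_i(n-1-h)$ whose $J$‑part lies in $gw_0$.

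Finally I would double count. From the above,
\[
 \lvert G\rvert\ \le\ \sum_{\substack{e\in E_i(h)\sqcup E_i(n-1-h)\\ e\ \text{colour-swapping}}}\ \#\{\,g\in G:(J\text{-part of }e)\in gw_0\,\}.
\]
By transitivity of $G$ and the fact that $w_0$ passes through at most $L'/2$ vertices of each of the heights $h$ and $n-1-h$, each summand is at most $\frac{L'}{2}\cdot\lvert G\rvert\big/\binom{n-1}{h}(L/2)^{n-1}$, the denominator being the number of height‑$h$ (equivalently height‑$(n-1-h)$) vertices of $\Gamma_L^J$. Hence the number of colour‑swapping edges in $E_i(h)\sqcup E_i(n-1-h)$ is at least $2\binom{n-1}{h}(L/2)^{n-1}/L'$, while $\lvert E_i(h)\sqcup E_i(n-1-h)\rvert=4\binom{n-1}{h}(L/2)^{n}$, so the colour‑swapping fraction is at least $1/(L'L)\ge 1/(CL^2)$, which is the claim. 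The step I expect to be the genuine obstacle is the explicit construction of the winding circles — staying within heights $h$ and $n-1-h$, closing up equivariantly, and of length only $O(L)$ — which is exactly what the threshold on $h$ makes possible; the identification $\deg_1(f\circ\iota_w)=\deg_i(f)$ from Section~\ref{app:topology} and the averaging over $G$ are routine.
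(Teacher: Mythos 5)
Your overall strategy is the same as the paper's: restrict $f$ to $2$-dimensional sub-tori $\Gamma_L\times\Gamma_{L'}\hookrightarrow\Gamma_L^n$ built from an equivariant closed path in $\Gamma_L^{n-1}$ that alternates between heights $h$ and $n-1-h$, argue that the restricted map still has $\deg_1=1$ so that Lemma~\ref{lem:2-torus-one-colour-swapping-edge} produces a colour-swapping edge in $E_i(h)\sqcup E_i(n-1-h)$ in each sub-torus, and then average over the automorphism group generated by coordinate permutations and even translations, which acts transitively on each height class (this is the paper's family $Z=\{g\circ\zeta_0\}$ and Lemma~\ref{lem:colour-swapping-slice}); your double count at the end, giving a fraction $1/(L'L)$, is correct and matches the paper's $1/(3L^2)$. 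One genuine difference is the degree-transfer step: the paper requires the path to be a \emph{generalized diagonal} (equivariantly homotopic to the diagonal) and deduces $\deg_1(f\circ\slice_\zeta)=\deg_1(f^{\sigma_i})$ from that, which forces it to verify a straightening homotopy in Lemma~\ref{lem:slices} and to restrict to automorphisms preserving the diagonal's homotopy class; you instead invoke the classification of Proposition~\ref{thm:affine_minion} to replace $f$ by $m_\alpha$ and compute $\deg_1(m_\alpha\circ\iota_w)=\alpha_i$ directly, which works for \emph{any} equivariant $w$ (equivariance forces all winding numbers of $w$ to be odd, so the second-coordinate exponent is even and the composite is equivariantly homotopic to a genuine monomial map with first exponent $\alpha_i$, whence Lemmas~\ref{lem:monomial_degrees} and~\ref{lem:2d_homotopy_invariance} apply). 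That is a legitimate simplification, at the cost of using the full classification rather than only the degree formalism; you should still spell out the small equivariant-homotopy argument behind ``$m_\alpha\circ\iota_w$ winds $\alpha_i$ times'', since $m_\alpha\circ\iota_w$ is not cellular on the nose.

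The genuine gap — which you flag yourself — is the existence of the winding circles. This is not a routine afterthought: it is exactly the content of the paper's Lemma~\ref{lem:slices} and the place where the hypothesis $h<\lfloor\frac{n-1}{3}\rfloor$ is cashed in. Your length lower bound $L'\ge (n-1)L/(n-1-2h)$ and the observation that the threshold makes this $<3L$ are correct, but the ``round-robin'' sketch (flip a fresh block of $n-1-2h$ even coordinates up, then down by $+1$, ``letting the odd coordinates drift'', ``padding to a multiple of $4$'') leaves unverified precisely the delicate points: that every move is a legal edge of $\Gamma_L^{n-1}$ while the height stays exactly in $\{h,\,n-1-h\}$, that every coordinate advances by exactly $L/2$ over a half-period so the path closes up equivariantly (a pure $+1$ schedule runs into divisibility constraints, e.g.\ $(n-1-2h)\mid (n-1)L$ and the recurrence of the odd-coordinate pattern), and that the resulting period is admissible. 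The paper resolves all of this at once with an explicit construction in blocks of three steps (mixing $+1$ and $-1$ moves) such that each block shifts the vertex by $+\mathbbm 1$ along the diagonal, giving a path of length exactly $3L$ for every $h<\lfloor\frac{n-1}{3}\rfloor$; some such explicit construction is needed to complete your argument (in your formulation you may skip the paper's additional verification that the path is homotopic to the diagonal, but not the construction itself).
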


We postpone the proof of the lemma, and first show how it implies Theorem~\ref{thm:bounded-arity}.

\begin{proof}[Proof of Theorem~\ref{thm:bounded-arity} assuming Lemma~\ref{lem:colour-swapping-by-height}]
We first observe that the theorem reduces to the case that $n$ is odd and $I=[n]$. To see this, let $m = \abs{I}$, and choose any function $\pi \colon [n] \to [m]$ that is injective on $I$. Then the minor $f^\pi$ is an equivariant simplicial map $f^\pi \colon \Gamma_L^m \to \Sigma^2$ that is equivariantly homotopic to the monomial map $\prod_{i\in [m]} z_i$, by Lemma~\ref{lem:eta}.

Thus (by replacing $f$ by $f^\pi$ and $n$ by $m$), we may assume without loss of generality that $n$ is odd and $I=[n]$, i.e., $\deg_i(f) = 1$ for all $i \in [n]$. Now, consider a non-degenerate $n$-simplex $\sigma=[\boldsymbol u_0, \dots, \boldsymbol u_n]$ of $\Gamma_L^n$ chosen uniformly at random among all such $n$-simplices of $\Gamma_L^n$. For $0\leq h\leq n-1$, define the random variable $X_h(\sigma)$ as $1$ or $0$ depending on whether the edge $[\boldsymbol{u}_{h}, \boldsymbol{u}_{h+1}]$ is colour-swapping or not. Then $X(\sigma) := \sum_{h=0}^{n-1} X_h(\sigma)$ equals the total number of times the colour of $f(\boldsymbol{u}_{i})$ changes as we traverse the vertices of $\sigma$ in their given order.
  Observe that, for every $0\leq h\leq n-1$, the edge $[\boldsymbol u_{h}, \boldsymbol u_{h+1}]$ of the random simplex $\sigma$ is distributed uniformly among all edges of $E(h)$ (this is since the simplicial automorphisms of $\Gamma_L^n$ act transitively on $E(h)$).
  Thus, the expected value $\mathbf{E} [X_h(\sigma)]$ is the probability that a uniformly random edge in $E(h)$ is colour-swapping. Moreover, by Lemma~\ref{lem:colour-swapping-by-height} and summing over $1\leq i\leq n$, we get that for every $0\leq h < \lfloor \frac{n-1}{3} \rfloor$, the fraction of edges in $E(h)\sqcup E(n-1-h)$ that are colour-swapping is at least $\frac{1}{CL^2}$. Hence, by linearity of expectation,
  \(
    \mathbf{E} [X_h(\sigma)]+\mathbf{E} [X_{n-1-h}(\sigma)]\geq \frac{1}{CL^2}
  \)
  for $0\leq h < \lfloor \frac{n-1}{3} \rfloor$. Consequently,
  \[
    \mathbf{E}[X(\sigma)]=\sum_{h=0}^{n-1} \mathbf{E}[X_{n-1-h}(\sigma)]\ \geq \lfloor \frac{n-1}{3}\rfloor \cdot \frac{1}{CL^2}.
  \]
  Thus, there exists some $n$-simplex $\sigma=[\boldsymbol u_0, \dots, \boldsymbol u_n]$ of $\Gamma_L^n$ such that the colour of $f(\boldsymbol{u}_i)$ changes at least $k$ times, where $k = \lfloor \frac{n-1}3 \rfloor \cdot\frac1{CL^2}$, i.e., $\sigma$ contains some $k$-simplex $[\boldsymbol{u}_{i_0}, \boldsymbol{u}_{i_1}, \dots, \boldsymbol{u}_{i_k}]$ whose colours alternate.
  Since $f$ is a simplicial map to $\Sigma^2$, this implies that $k\leq 2$ as noted above, and therefore $\lfloor \frac{n-1}{3}\rfloor \leq 2CL^2$, hence $|I|=n=O(L^2)$.
\end{proof}

The rest of this section is dedicated to proving Lemma~\ref{lem:colour-swapping-by-height}.

In this proof, we will use Lemma~\ref{lem:2-torus-one-colour-swapping-edge} in combination with another averaging argument over a special family of triangulated $2$-dimensional tori $\Gamma_L \times \Gamma_{L'}$, which we call \emph{slices}, that are simplicially (and equivariantly) embedded in the triangulation $\Gamma_L^n$.

To simplify notation, let us fix a coordinate direction, say $i=1$, and write $\Gamma_L \times \Gamma_L^{n-1}$. The archetype of a slice is the following \emph{standard slice}: Consider the \emph{diagonal embedding} $\diag\colon \Gamma_L \hookrightarrow \Gamma_L^{n-1}$ given by $\diag(y) = (y,\dots,y)$. This is an equivariant simplicial map, which induces an equivariant simplicial embedding $\slice_{\diag} \colon \Gamma_L \times \Gamma_L \hookrightarrow \Gamma_L^n$ given by $\slice_{\diag} \coloneq 1_{\Gamma_L}\times \diag$, i.e., $\slice_{\diag}(x,y)=(x,y,\dots,y)$.

More generally, let $L'$ be an integer divisible by $4$, and let $\zeta \colon \Gamma_{L'} \to \Gamma_L^{n-1}$ be an equivariant simplicial map; we call $\zeta$ a \emph{generalized diagonal} if its geometric realization $\geom{\zeta}$, seen as an equivariant embedding  $S^1 \to T^{n-1}$, is equivariantly homotopic to the diagonal embedding $S^1 \to T^{n-1}$ (here, we implicitly fix equivariant homeomorphisms $\geom{\Gamma_{L'}} \cong S^1 \cong \geom{\Gamma_L}$).
Given a generalized diagonal $\zeta$, we call the induced equivariant simplicial embedding $\slice_{\zeta} \colon \Gamma_L \times \Gamma_{L'} \to \Gamma_L^n$ given by $\slice_{\zeta} = 1_{\Gamma_L} \times \zeta$ a~\emph{slice}.
Moreover, we say that $\slice_\zeta$ is an \emph{$h$-slice} if every vertex of $\Gamma_L^{n-1}$ in the image of $\zeta$ is at height $h$ or $n-1-h$, or equivalently, if every edge of $\Gamma_L^n$ that lies in both $E_1$ and the image of $\slice_{\zeta}$ belongs to $E_1(h)\sqcup E_1(n-1-h)$.

\begin{lemma}
\label{lem:colour-swapping-slice}
Let $f\colon \Gamma_L^n \to \Sigma^2$ be an equivariant simplicial map such that $\deg_1(f)=1$, and let $\slice_{\zeta}\colon \Gamma_L\times \Gamma_{L'}\to \Gamma_L^n$ be a slice (respectively, an $h$-slice, $0\leq h\leq n-1$). Then the image of $\slice_{\zeta}$ contains at least one edge in $E_1$ (respectively, in $E_1(h)\sqcup E_1(n-1-h)$) that is colour-swapping.
\end{lemma}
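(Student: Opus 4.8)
The plan is to reduce Lemma~\ref{lem:colour-swapping-slice} to the two-dimensional statement already established in Lemma~\ref{lem:2-torus-one-colour-swapping-edge}, by pulling back the colouring $f$ along the slice embedding $\slice_\zeta$. First I would observe that the composition $f\circ \slice_\zeta \colon \Gamma_L\times\Gamma_{L'}\to\Sigma^2$ is again an equivariant simplicial map, so Observation~\ref{obs:2-coloring} applies to it. The key point is to verify that $\deg_1(f\circ\slice_\zeta)=1$, i.e.\ that taking the slice does not destroy the degree hypothesis. This is where the definition of a (generalized) diagonal enters: since $\geom{\zeta}$ is equivariantly homotopic to the diagonal embedding $S^1\to T^{n-1}$, the geometric realization $\geom{\slice_\zeta}=\geom{1_{\Gamma_L}\times\zeta}$ is equivariantly homotopic to $1_{S^1}\times(\text{diagonal})\colon T^2\to T^n$. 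Hence $\geom{f\circ\slice_\zeta}$ is equivariantly homotopic to $\geom f$ precomposed with that diagonal-type map, which is precisely the $2$-minor $f^{\sigma_1}$ (or equivariantly homotopic to it). By Corollary~\ref{cor:invariance_degree} and Definition~\ref{def:degree}, $\deg_1(f\circ\slice_\zeta)=\deg_1(f^{\sigma_1})=\deg_1(f)=1$.

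Once $\deg_1(f\circ\slice_\zeta)=1$ is established, Lemma~\ref{lem:2-torus-one-colour-swapping-edge} applied to $f\circ\slice_\zeta$ yields a vertex $(v_1,v_2)\in\Gamma_L\times\Gamma_{L'}$ with $(f\circ\slice_\zeta)(v_1,v_2)=\blueb$ and $(f\circ\slice_\zeta)(v_1+1,v_2)=\yellowb$. The edge $[(v_1,v_2),(v_1+1,v_2)]$ is horizontal in $\Gamma_L\times\Gamma_{L'}$, so its image under $\slice_\zeta=1_{\Gamma_L}\times\zeta$ is an edge of $\Gamma_L^n$ differing only in the first coordinate, i.e.\ an edge in $E_1$, and it is colour-swapping since $f$ takes different colours on its two endpoints. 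This proves the first assertion. For the $h$-slice refinement, I would note that by the definition of an $h$-slice, every vertex of $\Gamma_L^{n-1}$ in the image of $\zeta$ has height $h$ or $n-1-h$; hence the colour-swapping edge produced above, lying in both $E_1$ and the image of $\slice_\zeta$, automatically belongs to $E_1(h)\sqcup E_1(n-1-h)$, which is exactly what is claimed.

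The main obstacle I anticipate is the degree computation $\deg_1(f\circ\slice_\zeta)=\deg_1(f)$: while morally clear, it requires care because $\deg_1$ of the $n$-ary map $f$ is defined via a specific $2$-minor $f^{\sigma_1}$ (the ``diagonal'' minor collapsing coordinates $2,\dots,n$ together), and one must check that composing with an \emph{arbitrary} generalized diagonal $\zeta$ — not just the standard diagonal — produces an equivariant map $2$-torus $\to Y$ equivariantly homotopic to $f^{\sigma_1}$. This is precisely where the homotopy-invariance of the degree (Lemma~\ref{lem:2d_homotopy_invariance} / Corollary~\ref{cor:invariance_degree}) does the work: the standard-diagonal slice realizes $f^{\sigma_1}$ on the nose, and any other generalized diagonal differs from it by an equivariant homotopy of the embedding $S^1\to T^{n-1}$, which induces an equivariant homotopy of the composites $T^2\to Y$. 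One subtlety to handle cleanly is that $f\circ\slice_\zeta$ maps into $\Sigma^2$, whereas $\deg_1$ is defined for maps into $Y$; this is resolved by post-composing with the inclusion $\Sigma^2\hookrightarrow Y$, under which simplicial maps to $\Sigma^2$ and their degrees behave as in Lemma~\ref{lem:2-torus-one-colour-swapping-edge}. No new ideas beyond these bookkeeping steps should be needed.
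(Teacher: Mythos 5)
Your proposal is correct and follows essentially the same route as the paper: identify the standard-diagonal slice composition $f\circ \slice_{\diag}$ with the $2$-minor $f^{\sigma_1}$ (so its first degree equals $\deg_1(f)=1$), use the definition of a generalized diagonal together with the homotopy invariance of $\deg_1$ (Lemma~\ref{lem:2d_homotopy_invariance}, which also covers $L'\neq L$) to transfer this to $f\circ\slice_\zeta$, and then invoke Lemma~\ref{lem:2-torus-one-colour-swapping-edge} to obtain a horizontal colour-swapping edge whose image lies in $E_1$ (and in $E_1(h)\sqcup E_1(n-1-h)$ for an $h$-slice). Your extra remarks on composing with the inclusion $\Sigma^2\hookrightarrow \eml$ are exactly the bookkeeping the paper leaves implicit.
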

\begin{proof}
The composition $f\circ \slice_{\diag}$ is the same as the $2$-minor $f^\pi$ of $f$ given by the map $\pi\colon [n]\to [2]$, $\pi(1)=1$ and $\pi(j)=2$ for $2\leq j \leq n$. Thus, $\deg_1(f\circ \slice_{\diag})=\deg_1(f^\pi)=\deg_1(f)=1$ by Definition~\ref{def:degree}. Moreover, by definition of generalized diagonals, it follows that $\geom{f\circ \slice_{\zeta}}$ and $\geom{f\circ \slice_{\diag}}$ are equivariantly homotopic as maps $T^2=S^1 \times S^1 \to S^2$, hence $\deg_1(f\circ \slice_{\zeta})=\deg_1(f\circ \slice_{\diag})=1$ (here, we use that the equivariant homeomorphism $\geom{\Gamma_L \times \Gamma_{L'}} \cong \geom{\Gamma_L \times \Gamma_L}$ fixes the two coordinate copies of $S^1$ in $T^2$). Thus, the existence of the desired colour-swapping edge follows from Lemma~\ref{lem:2-torus-one-colour-swapping-edge}.
\end{proof}

The last puzzle piece we need to prove Lemma~\ref{lem:colour-swapping-by-height} (and thus to complete the proof of Theorem~\ref{thm:bounded-arity}) is the following lemma which constructs a generalised diagonal of a special shape.

\begin{figure}
  \[
    \includegraphics[scale=.75]{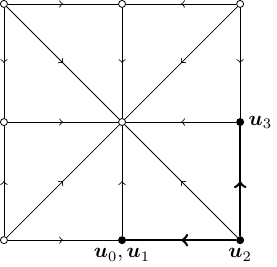}
    \quad
    \includegraphics[scale=.75]{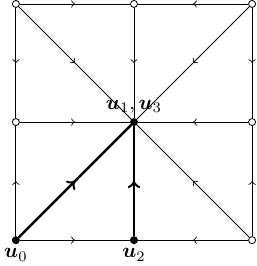}
  \]
  \caption{A path starting with the point $\boldsymbol u_0 = (1, 0, 0, 0)$ shown as projection on the first two (left) and last two coordinates (right).}
  \label{fig:slice}
  \Description{A visualisation of the path from $\boldsymbol u_0$ to $\boldsymbol u_3$; described precisely in the text.}
\end{figure}

\begin{lemma}\label{lem:slices}
  Let $0\leq h < \lfloor \frac{n-1}{3} \rfloor$. Then there exists a generalised diagonal $\zeta_0\colon \Gamma_{3L} \to \Gamma_L^{n-1}$ whose image contains only vertices of height $h$ or $n-1-h$; moreover, the vertices of height $h$ and $n-1-h$ alternate.
\end{lemma}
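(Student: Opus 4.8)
The goal is to construct an equivariant simplicial map $\zeta_0 \colon \Gamma_{3L} \to \Gamma_L^{n-1}$ whose geometric realization is equivariantly homotopic to the diagonal embedding $S^1 \hookrightarrow T^{n-1}$, and which visits only vertices of heights $h$ and $n-1-h$, alternating between the two. The idea is to build $\zeta_0$ explicitly as a closed edge-path in $\Gamma_L^{n-1}$ of combinatorial length $3L$ (hence the domain $\Gamma_{3L}$), traversing a loop that goes ``once around'' the torus in the diagonal direction. Since $0 \le h < \lfloor\frac{n-1}{3}\rfloor$, we have $3h < n-1$, so we have enough coordinates to play with: I would partition the $n-1$ coordinates into three blocks $B_0, B_1, B_2$, with $|B_0| = |B_1| = h$ and $|B_2| = n-1-2h$ (so $|B_2| \ge h+1$), and arrange that at every vertex of the path, exactly the coordinates in two of the three blocks are odd (giving height $h+h = 2h$... — here I need to recheck the height bookkeeping: a vertex of height $h$ has $h$ odd coordinates, and one of height $n-1-h$ has $n-1-h$ odd coordinates; the complement of an $h$-set inside $[n-1]$ has size $n-1-h$, so the two heights are ``antipodal'' under coordinate-complementation, which matches the $\Ztwo$-action $x \mapsto x + L/2$ componentwise shifting every coordinate's parity).

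\medskip

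\textbf{Construction of the path.} Recall that in $\Gamma_L^{n-1}$ one may only change coordinates that are currently \emph{even}, each by $\pm 1$. Starting from a base vertex $\boldsymbol u_0$ in which exactly the coordinates in, say, $B_0$ are odd (height $h$), I would move in three phases, each phase effecting a ``rotation'' by $L$ steps of the wrap-around. In phase one, I incrementally change the coordinates in $B_1$ (currently even) up by $1$ to make them odd — passing through a vertex of height $2h$ — and meanwhile also advance the $B_2$-coordinates; the precise interleaving is chosen so that after appropriate sub-steps the path alternates between a height-$h$ configuration and a height-$(n-1-h)$ configuration. Concretely, the clean way: ensure the path alternates strictly between vertices whose odd-coordinate set is some $h$-subset $S$ and vertices whose odd-coordinate set is the complementary $(n-1-h)$-subset $[n-1]\setminus S$ — each edge flips the parity of \emph{all but one} block's worth of coordinates, but since an edge of $\Gamma_L^{n-1}$ can only touch currently-even coordinates, a single edge goes from an $h$-set to its complement only when $h=0$, so for $h>0$ one genuinely needs to route through intermediate vertices and the alternation is between height $h$ and $n-1-h$ only at a \emph{subsequence} of the vertices — which is exactly what the lemma asks for once one reads ``the vertices of height $h$ and $n-1-h$ alternate'' as ``every vertex of the path has one of these two heights, and consecutive ones of the path that are endpoints of coordinate-direction-$1$... '' Actually, re-reading: the lemma says the image contains \emph{only} vertices of height $h$ or $n-1-h$, and these alternate along the path. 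So I must build the path so that \emph{every} vertex is at height exactly $h$ or exactly $n-1-h$, consecutive vertices differing. For $h \ge 1$ this forces each edge to flip parities of $|h - (n-1-h)| = |n-1-2h|$ coordinates minimum; since edges flip an arbitrary nonempty set of currently-even coordinates, going from an $h$-set $S$ to an $(n-1-h)$-set $T$ in one edge requires $T \supseteq S^c$ among the even coordinates and $S \cap T = \emptyset$, i.e. $T = S^c$; this is legal since all coordinates of $S^c$ are even at the source. So in fact each edge flips a full complementary block and the path is a clean alternation $S_0 \to S_0^c \to S_1 \to S_1^c \to \cdots$ where $S_{j+1}$ differs from $S_j$ by a bounded amount, and the coordinate values drift around the cycle to realize the diagonal loop. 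I would exhibit such a sequence of $3L$ edges explicitly (Figure~\ref{fig:slice} is the $n-1=3$, $L=4$, $h=0$ picture) and check it closes up.

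\medskip

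\textbf{Verifying it is a generalized diagonal.} Having the path $\zeta_0$, I must check $\geom{\zeta_0} \colon S^1 \to T^{n-1}$ is equivariantly homotopic to the diagonal. Since $\pi_1(T^{n-1}) = \Z^{n-1}$ and equivariant homotopy classes of equivariant maps $S^1 \to T^{n-1}$ (for the antipodal action on $S^1$ and the half-period-shift action on $T^{n-1}$) are detected by the induced map on $\pi_1$ — which must send the generator to an element of $\Z^{n-1}$ all of whose coordinates are odd, and two such maps are equivariantly homotopic iff they agree on $\pi_1$ — it suffices to verify that, for each coordinate $j \in [n-1]$, the $j$-th coordinate of $\zeta_0$ traverses $S^1$ exactly once (winding number $1$). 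This is a direct count from the construction: each of the three phases advances each coordinate's value by $L/3$ net... no — by the three-phase design each coordinate makes total displacement $L$ (mod $L$) corresponding to winding number $1$. I would make this winding-number bookkeeping the explicit content of the verification.

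\medskip

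\textbf{Main obstacle.} The genuinely delicate part is the simultaneous bookkeeping: routing a closed path that (i) only ever sits at heights $h$ or $n-1-h$, (ii) respects the ``only even coordinates may move'' constraint of $\Gamma_L^{n-1}$ at \emph{every} step, (iii) has each of the $n-1$ coordinate-projections winding exactly once, and (iv) closes up after exactly $3L$ steps. The factor $3$ (and the hypothesis $h < \lfloor\frac{n-1}{3}\rfloor$, i.e. $3h+1 \le n-1$, so $|B_2| = n-1-2h > h$) is what gives enough room: with three blocks one can cyclically ``hand off'' which block is the odd one while always keeping the moving coordinates even. I expect the proof to proceed by writing down the path as an explicit concatenation of three combinatorially-identical blocks of $L$ edges each, differing by a cyclic relabeling of $(B_0, B_1, B_2)$, and then the four conditions become short independent checks. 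Getting the indexing of this explicit path exactly right — rather than any conceptual difficulty — is where the work lies.
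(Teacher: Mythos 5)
Your overall strategy---an explicit closed edge-path of length $3L$ through vertices of heights $h$ and $n-1-h$, built from blocks of coordinates and then certified to be a generalized diagonal by a winding-number argument---is in the same spirit as the paper's proof, but as written there are two genuine gaps. The first is a concrete error in the edge mechanics of $\Gamma_L^{n-1}$ that would derail the construction: along an edge $[\boldsymbol u,\boldsymbol v]$ only coordinates of $\boldsymbol u$ that are \emph{even} may change, and they become odd, so if $S$ and $T$ denote the odd-coordinate sets of the two endpoints, then $S\subseteq T$; the odd-set can only grow along an edge, never be complemented. Your claim that a single edge from a height-$h$ vertex to a height-$(n-1-h)$ vertex forces $S\cap T=\emptyset$, i.e.\ $T=S^{c}$, is therefore impossible for $h\ge 1$ (neither of two complementary nonempty sets contains the other), and the ``clean alternation $S_0\to S_0^{c}\to S_1\to S_1^{c}\to\cdots$'' you plan to realize is not a path in $\Gamma_L^{n-1}$ at all. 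The correct combinatorial pattern is a zigzag of containments $S_0\subset T_0\supset S_1\subset T_1\supset\cdots$ with $\lvert S_j\rvert=h$, $\lvert T_j\rvert=n-1-h$, every edge changing exactly $n-1-2h$ coordinates; this is what the paper's explicit path realizes (a period-three pattern, e.g.\ starting from $(1,\dots,1,0,\dots,0)$ with blocks of sizes $h,h,h,n-1-3h$, in which each three-step segment adds $1$ to every coordinate), and it is exactly where the hypothesis $3h<n-1$ is used.

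The second gap is that the lemma \emph{is} the explicit construction, and you defer it (``I would exhibit such a sequence of $3L$ edges\dots'', ``getting the indexing right is where the work lies''), while being visibly unsure of the winding bookkeeping ($L/3$ per phase versus $1$ per three-step segment---in the paper's path every coordinate advances by exactly $1$ every three steps, hence winds once over $3L$ steps). Without the actual vertex sequence and the accompanying checks---consecutive vertices joined by edges, all heights in $\{h,\,n-1-h\}$ and alternating, closure and equivariance via $\boldsymbol u_{k+3L/2}=\boldsymbol u_k+(L/2)(1,\dots,1)$, and winding vector $(1,\dots,1)$---there is no proof, only a plan, and the plan rests on the incorrect edge analysis above. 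On the positive side, your proposed certification of the generalized-diagonal property is a legitimate alternative to the paper's: since both actions are free and $S^1$, $T^{n-1}$ are aspherical, equivariant maps $S^1\to T^{n-1}$ are equivariantly homotopic iff they have the same winding vector (necessarily with all coordinates odd), so winding vector $(1,\dots,1)$ suffices; the paper avoids invoking this classification by homotoping each three-step segment, which lies in a unit box, to the straight diagonal segment by convexity and linear interpolation. Either route works once the path itself is on the table.
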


\begin{proof}
  We start with constructing a simplicial map $\zeta_0\colon \Gamma_{3L} \to \Gamma_L^{n-1}$, i.e., a cyclic path in $\Gamma_L^{n-1}$, that contains only vertices of height $h$ or $n-1-h$.

  We start with the vertex $\boldsymbol u_0$ of the form $\boldsymbol u_0 = (1, \dots, 1, 0, \dots, 0)$ where the first $h$ coordinates are $1$, and construct a path from $\boldsymbol u_0$ to its antipode in pieces of length 3.
  The first three steps of the path have the following form (where the first three blocks are of length $h$ and the last block is of length $n-1 - 3h$).
  \begin{align*}
    \boldsymbol{u}_0 &=
      (\underbrace{1,\dots,1}_{h},\underbrace{0,\dots,0, 0,\dots,0, 0,\dots,0}_{n-1-h}) \\
    \boldsymbol{u}_1 &=
      (\underbrace{1,\dots,1}_{h},\underbrace{0,\dots,0}_{h},\underbrace{1,\dots,1, 1,\dots,1}_{n-1-2h}) \\
    \boldsymbol{u}_2 &=
      (\underbrace{2,\dots,2}_{h},\underbrace{0,\dots,0}_{h},\underbrace{1,\dots,1}_{h},\underbrace{0,\dots,0}_{n-1-3h})  \\
    \boldsymbol{u}_3 &=
      (\underbrace{2,\dots,2}_{h},\underbrace{1,\dots,1, 1,\dots,1, 1,\dots,1}_{n-1-h})
  \end{align*}
  In the first step, we increase the values in the last two blocks changing $h+ (n-1-3h) = n-1-2h$ values. In the second step, we increase the value in the first bloc and decrease the value in the last bloc, again changing the same number of values. And in the third step, we increase the values in the second and the last block. See also Fig.~\ref{fig:slice} for a visual representation of the case $n = 4$ and $h = 1$.
  Note that the height of $\boldsymbol u_0$ and $\boldsymbol u_2$ is $h$ and the height of $\boldsymbol u_1$ and $\boldsymbol u_3$ is $n-1-h$, and that $\boldsymbol u_3$ is $\boldsymbol u_0$ shifted along the diagonal by $1$.

  We then repeat this pattern (until we return to $\boldsymbol u_0$) by adding $1$ to all coordinates in each subsequent sequence of three steps, i.e.,
  \[
    \boldsymbol u_4 = (\underbrace{2,\dots,2}_{h},\underbrace{1,\dots,1}_{h},\underbrace{2,\dots,2, 2,\dots,2}_{n-1-2h}),
  \]
  etc. It is easy to check that the height of $\boldsymbol u_{2k}$ is $h$ and the height of $\boldsymbol u_{2k+1}$ is $n-1-h$ for all $k$, and that subsequent vertices are connected by an edge in $\Gamma_L^{n-1}$. Furthermore, observe that  $\boldsymbol u_{k+3L/2} = \boldsymbol u_k + \frac L2 \mathbbm 1$ is the antipode of $\boldsymbol u_k$, hence $\zeta_0\colon \Gamma_{3L} \to \Gamma_L^{n-1}$ defined by $\zeta_0(k) = \boldsymbol u_k$ is an equivariant simplicial map.

  Next, we prove that $\zeta_0$ is a generalized diagonal.
  We view the geometric realization of $\Gamma_L$ as $\mathbb R / L \mathbb Z \cong S^1$.
  Observe that every point $\boldsymbol{x} = (x_1,\dots,x_{n-1}) \in T^{n-1}$ on the (geometric realization of the) path from $\boldsymbol{u}_0$ to $\boldsymbol u_3 = \boldsymbol u_0+\mathbbm{1}$ satisfies $x_i \in [1, 2]$ if $i \leq h$ and $x_i \in [0, 1]$ if $i > h$; thus, $\boldsymbol x \in [1, 2]^h\times [0, 1]^{n-1-h}$, i.e., $\boldsymbol x$ lies inside a unit box.
  Since this box is convex, we can homotope the path to the ``straight'' path from $\boldsymbol{u}$ to $\boldsymbol{u}+\mathbbm{1}$ inside the box, keeping the endpoints fixed, by linear interpolation.
  By an analogous argument applied to each path segment corresponding to a sequence of three steps from $\boldsymbol u_{3k}$ to $\boldsymbol u_{3k+3}$, we get a homotopy between the embedding $\zeta_0$ and a translated copy of the diagonal that passes through $\boldsymbol{u}_0$. Moreover, this translated copy to the diagonal is homotopic to the diagonal itself, hence $\zeta_0$ is a generalized diagonal (note that translated copies of the diagonal are not simplicial embeddings in general, which is why we use the more complicated construction).
\end{proof}

We may now finish the proof of Lemma~\ref{lem:colour-swapping-by-height} and, consequently, of Theorem~\ref{thm:bounded-arity}.

\begin{proof}[Proof of Lemma~\ref{lem:colour-swapping-by-height}]
  Let us fix a coordinate direction, without loss of generality $i=1$, and let $f\colon \Gamma_L^n \to \Sigma^2$ be an equivariant simplicial map such that $\deg_1(f)=1$. Let $0\leq h < \lfloor \frac{n-1}{3} \rfloor$.

  First, we prove that there exists a collection $Z$ of generalised diagonals that contain only vertices of heights $h$ and $n - h - 1$ such that each vertex of such a height appears in the same number of diagonals accounting for multiplicity. This collection is constructed by shifting the diagonal $\zeta_0$ obtained in Lemma~\ref{lem:slices} by some automorphisms of $\Gamma_L^{n-1}$. We consider only those automorphisms that respect the winding direction in each coordinate, which consequently the homotopy class of the diagonal. More precisely, consider the subgroup $A$ of automorphism group of $\Gamma_L^{n-1}$ generated by automorphisms of one of the following two types:
  \begin{itemize}
    \item $a_\pi$, where $\pi \colon [{n-1}] \to [{n-1}]$ is permutation, which permutes the coordinates of each vertex, i.e.,
      \[
        a_\pi(u_1, \dots, u_{n-1}) = (u_{\pi(1)}, \dots, u_{\pi({n-1})});
      \]
    \item $b_i$, where $i\in [{n-1}]$, which shifts the coordinate $i$ by $2$, i.e.,
      \[
        b_i(u_1, \dots, u_{n-1}) =
        (u_1, \dots, u_{i-1}, (u_i + 2)\bmod L, u_{i+1}, \dots, u_{n-1}).
      \]
  \end{itemize}
  Observe that $A$ acts transitively on vertices of height $h$: for example, first use $b_i$'s to make all coordinates $0$ or $1$, and then use $a_\pi$ to permute them in the first $h$ positions. In fact, the orbits of $A$ are exactly sets of vertices of the same height.
  Now, we let $Z = \{g\circ \zeta_0 \mid g \in A\}$. Since this family is invariant under the action of $A$ which, as we said, is transitive on vertices of height $h$ and of height $n-h-1$, respectively, each such vertex appears in the same number of generalised diagonals in $Z$.
  Since the vertices of height $h$ and $n-1-h$ alternate in $\zeta_0$, and consequently, they alternate in each of the shifts, we also get the number of times a vertex of height $h$ appears is the same as the number of times a vertex of height $n-1-h$ appears.

  For each $\zeta \in Z$, the image of the corresponding $h$-slice $\slice_{\zeta}\colon \Gamma_L \times \Gamma_{3L} \to \Gamma_L^n$ contains $3L^2$ edges in $E_1(h)\sqcup E_1(n-1-h)$, and at least one of these edges is colour-swapping, by Lemma~\ref{lem:colour-swapping-slice}. Moreover, the number of slices $\zeta \in Z$ whose image contain a given edge in $E_i(h)\sqcup E_i(n-1-h)$ does not depend on the edge.
  Thus, we can choose a uniformly random element of $E_i(h)\sqcup E_i(n-1-h)$ by first choosing a uniformly random element $\zeta\in Z$, and then choosing uniformly at random a vertex $v\in \Gamma_{3L}$ and an edge in coordinate direction $i$ which projects to $\zeta(v)$. Since the probability that we selected a colour-swapping edge in the last choice is at least $\frac 1{3L^2}$, the overall probability that an uniformly random edge from $E_i(h)\sqcup E_i(n-1-h)$ is colour-swapping is also at least $\frac{1}{3L^2}$.
\end{proof}

%\bibliographystyle{ACM-Reference-Format}
%\bibliography{promise}

\appendix
\section{Minion homomorphisms}
  \label{app:minion-homomorphism}

In this section, we construct the minion homomorphisms that we use in the proof of Theorem~\ref{thm:main-cycles} and we prove Lemma~\ref{lem:minion-homomorphism}. Fix an odd integer $\ell\geq 3$. We describe three maps $\mu$, $\eta$, and $\phi$ between the minions $\pol(C_\ell, K_4)$, $\spol(\Gamma_{4\ell}, \Sigma^2)$, and $\hpol{S^1, \eml}$; diagrammatically, these maps are organised as follows:
\[\begin{tikzcd}
  \pol(C_\ell, K_4) \arrow[r, dashrightarrow, "\mu"] \arrow[rd, "\phi"']
                & \spol(\Gamma_{4\ell}, \Sigma^2) \arrow[d, "\eta"]\\
                & \hpol{S^1, \eml}
\end{tikzcd}\]
The diagram commutes, i.e., $\phi = \eta\circ\mu$. Furthermore, $\eta$ and $\phi$ are minion homomorphisms, while $\mu$ preserves minors only up to homotopy.

On a high level, all of these maps are constructed using functors that preserve products, or preserve products up to homotopy equivalence. Detailed proofs of the technical results we present here can be found in the arXiv version of~\cite[Appendix~C]{FilakovskyNOTW24}. For the reader's convenience, we sketch these proofs here and refer to \cite{FilakovskyNOTW24} for the details. We will use the following lemma.

\begin{lemma}[Relaxation lemma {\cite[Lemma C.16]{FilakovskyNOTW24}}]
  \label{lem:relaxation}
  Let $\ssX$, $\ssY$, $\ssX'$, and $\ssY'$ be simplicial sets with $\Ztwo$-actions such that there are equivariant simplicial maps $\ssX' \to \ssX$ and $\ssY \to \ssY'$. Then there is a minion homomorphism $\spol(\ssX, \ssY) \to \spol(\ssX', \ssY')$.
  
  The same is true for topological spaces in place of simplicial sets, continuous maps in place of simplicial maps, and $\operatorname{hpol}$ in place of $\spol$.
\end{lemma}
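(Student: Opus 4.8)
The plan is to define the minion homomorphism $\spol(\ssX, \ssY) \to \spol(\ssX', \ssY')$ by pre- and post-composition. Fix equivariant simplicial maps $p \colon \ssX' \to \ssX$ and $q \colon \ssY \to \ssY'$. Given an equivariant simplicial map $g \colon \ssX^n \to \ssY$, note that $p^n \coloneqq p \times \dots \times p \colon (\ssX')^n \to \ssX^n$ is again equivariant and simplicial (the product of equivariant simplicial maps is such, by the definition of products of simplicial sets and of the diagonal $\Ztwo$-action on a power). I would therefore set
\[
  \xi_n(g) = q \circ g \circ p^n \colon (\ssX')^n \to \ssY',
\]
which is an equivariant simplicial map, hence an element of $\spol^{(n)}(\ssX', \ssY')$. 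This defines maps $\xi_n \colon \spol^{(n)}(\ssX, \ssY) \to \spol^{(n)}(\ssX', \ssY')$ for every $n > 0$; one must also check $\spol^{(n)}(\ssX, \ssY)$ is non-empty whenever needed, but since a minion is by definition a collection of non-empty sets, this is part of the hypothesis that $\spol(\ssX,\ssY)$ is a minion (which follows from Step~1 of the overview).

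The key step is to verify that the collection $(\xi_n)_n$ preserves minors. For $\pi \colon [n] \to [m]$ and $g \in \spol^{(n)}(\ssX,\ssY)$, recall that the $\pi$-minor $g^\pi \colon \ssX^m \to \ssY$ is $g \circ \ssX^\pi$, where $\ssX^\pi \colon \ssX^m \to \ssX^n$ is the simplicial map that on each coordinate tuple reindexes by $\pi$ (i.e. $(x_1,\dots,x_m) \mapsto (x_{\pi(1)},\dots,x_{\pi(n)})$ at the level of simplices in every dimension). The point is the naturality square $\ssX^\pi \circ (p')^m = p^n \circ (\ssX')^\pi$, i.e. reindexing coordinates commutes with applying $p$ coordinatewise — this is immediate since both sides send a tuple $(x'_1,\dots,x'_m)$ to $(p(x'_{\pi(1)}),\dots,p(x'_{\pi(n)}))$. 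Using this,
\[
  \xi_m(g^\pi) = q \circ g \circ \ssX^\pi \circ (p')^m
  = q \circ g \circ p^n \circ (\ssX')^\pi
  = (\xi_n(g))^\pi,
\]
which is exactly the condition $\xi_m \circ \pi^{\spol(\ssX,\ssY)} = \pi^{\spol(\ssX',\ssY')} \circ \xi_n$ from Definition~\ref{def:minion-homomorphism}.

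For the topological (and $\hpol$) variant, the construction is identical with continuous maps replacing simplicial maps and $\ssX^n = \geom{\ssX}^n$, etc. The only additional observation needed for $\hpol$ is that pre- and post-composition respect equivariant homotopy: if $g_0 \sim_{\Ztwo} g_1$ via an equivariant homotopy $H \colon \ssX^n \times [0,1] \to \ssY$, then $q \circ H \circ (p^n \times 1_{[0,1]})$ is an equivariant homotopy from $q \circ g_0 \circ p^n$ to $q \circ g_1 \circ p^n$, so the assignment descends to a well-defined map on equivariant homotopy classes, and the minor-preservation computation above goes through verbatim on classes. I do not expect a genuine obstacle here; the one point that requires a little care is making the reindexing map $\ssX^\pi$ and the coordinatewise map $p^n$ precise enough that the naturality square is visibly an identity of simplicial maps rather than merely a homotopy — once that bookkeeping is set up, the lemma is essentially formal, which is why the statement attributes it to prior work.
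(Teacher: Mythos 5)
Your proposal is correct and matches the paper's own (sketched) proof: both define the homomorphism by $g \mapsto q \circ g \circ p^n$, i.e.\ post-composing with the map $\ssY \to \ssY'$ and pre-composing coordinatewise with the map $\ssX' \to \ssX$, and verify minor-preservation via the naturality of reindexing with coordinatewise maps (your $(p')^m$ should just read $p^m$, a harmless notational slip). Your extra remarks on the $\operatorname{hpol}$ case (compatibility with equivariant homotopies) fill in exactly the routine details the paper leaves implicit.
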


\begin{proof}[Proof sketch]
  Given the two simplicial maps $t\colon \ssX'\to \ssX$ and $s\colon \ssY\to \ssY'$, the minion homomorphism obtained by mapping a map $f$ of arity $n$ to the composition $sf(t(x_1), \dots, t(x_{n}))$. It is easy to check that this indeed preserves all minors.
\end{proof}

\subsection{From graphs to simplicial sets}

In essence, the fact that $\mu$ is a minion homomorphism follows from the fact that the homomorphism complex of a product of two graphs is equivariantly homotopy equivalent to the product of homomorphism complexes (see, e.g., \cite[Proposition 18.17]{Koz08}), and by applying Lemma~\ref{lem:relaxation} to the equivariant simplicial map $s\colon \Hom(K_2, K_4) \to \Sigma^2$ described in Lemma~\ref{lem:simplicial-map-HomK4-Sigma2}.
We construct $\mu$ in two steps: First we will go from graphs to multihomomorphism posets, and then from these posets to simplicial sets.

Let $P, Q$ be posets. By definition, an $n$-ary \emph{poset polymorphism} from $P$ to $Q$ is a map $f\colon P^n \to Q$ that is monotone (where the partial order on $P$ is defined componentwise). We use the usual notation $\pol^{(n)}(P, Q)$ and $\pol(P, Q)$ and the sets of polymorphisms.

Monotone maps between posets are naturally partially ordered: $f \leq g$ if $f(x) \leq g(x)$ for all $x$. This allows us to relax the notion of minion homomorphism: Let $\minion M$ be a minion, and $P, Q$ posets. A \emph{lax minion homomorphism} $\minion M \to \pol(P, Q)$ is a collection of mappings $\lambda_n\colon \minion M^{(n)} \to \pol^{(n)}(P, Q)$ such that $\lambda_m(f^\pi) \leq \lambda_n(f)^\pi$. The following is a straightforward generalisation of \cite[Lemma 4.1]{MeyerO24}.

\begin{lemma} \label{lem:lax}
  Let $G, H$ be graphs. There is a lax minion homomorphism
  \[
    \mu' \colon \pol(G, H) \to \pol(\mhom(K_2, G), \mhom(K_2, H)).
  \]
\end{lemma}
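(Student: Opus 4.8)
The plan is to write down $\mu'(f)$ by the evident ``elementwise image'' formula and then check the three required properties in turn. Concretely, given a polymorphism $f\colon G^n\to H$ and an $n$-tuple $(m_1,\dots,m_n)$ of multihomomorphisms $K_2\to G$, I would set, for each vertex $u$ of $K_2$,
\[
  \mu'(f)(m_1,\dots,m_n)(u) \;=\; \{\, f(a_1,\dots,a_n) \mid a_i\in m_i(u)\ \text{for all}\ i\in[n] \,\}.
\]
The first step is to verify that this defines a multihomomorphism $K_2\to H$: each such set is non-empty because the $m_i(u)$ are, and for the edge $(1,2)$ of $K_2$ any $f(\boldsymbol a)\in\mu'(f)(\boldsymbol m)(1)$ and $f(\boldsymbol b)\in\mu'(f)(\boldsymbol m)(2)$ arise from tuples with $(a_i,b_i)\in E(G)$ for every $i$ (since each $m_i$ is a multihomomorphism), hence $(\boldsymbol a,\boldsymbol b)\in E(G^n)$, and therefore $(f(\boldsymbol a),f(\boldsymbol b))\in E(H)$ because $f$ is a graph homomorphism. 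The second step, monotonicity of $\mu'(f)$ with respect to componentwise inclusion, is immediate from the formula, since enlarging the sets $m_i(u)$ enlarges their $f$-image. Thus $f\mapsto\mu'(f)$ is an arity-preserving map from $\pol(G,H)$ into the minion of poset polymorphisms $\pol(\mhom(K_2,G),\mhom(K_2,H))$.

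The third step, and the only one with any subtlety, is the lax compatibility $\mu'(f^\pi)\le\mu'(f)^\pi$ for $\pi\colon[n]\to[m]$. Unwinding the definitions, for $(m_1,\dots,m_m)$ and a vertex $u$ of $K_2$ one has
\[
  \mu'(f^\pi)(m_1,\dots,m_m)(u)=\{\,f(a_{\pi(1)},\dots,a_{\pi(n)})\mid a_j\in m_j(u)\,\},
\]
whereas
\[
  \mu'(f)^\pi(m_1,\dots,m_m)(u)=\mu'(f)(m_{\pi(1)},\dots,m_{\pi(n)})(u)=\{\,f(b_1,\dots,b_n)\mid b_i\in m_{\pi(i)}(u)\,\}.
\]
Taking $b_i=a_{\pi(i)}$ realises the first set as a subset of the second, which is exactly the inequality $\mu'(f^\pi)\le\mu'(f)^\pi$ of poset polymorphisms. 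I would then remark why this is genuinely only lax: when $\pi$ is not injective, forming $f^\pi$ first forces the coordinates indexed by a fibre $\pi^{-1}(j)$ to coincide, while in $\mu'(f)^\pi$ they range independently over $m_j(u)$, so the containment is typically proper --- the same diagonal phenomenon that elsewhere separates $\affine_2$ from the integer-coefficient minion.

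I do not anticipate a real obstacle here: the construction is essentially forced (it is the multihomomorphism analogue of the map $\iota_n$ above and of \cite[Lemma~4.1]{MeyerO24}), and the only place requiring care is keeping the order of quantifiers straight in the last step, so as to land on an inequality of multihomomorphisms rather than claiming a false equality. It is also worth noting that no $\Ztwo$-equivariance enters at this stage: the statement concerns the bare minions $\pol(G,H)$ and $\pol(\mhom(K_2,G),\mhom(K_2,H))$, and the group action will be reinstated only when one passes from the multihomomorphism posets to the order complexes $\Hom(K_2,-)$ in the next step.
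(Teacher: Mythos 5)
Your proposal is correct and follows essentially the same route as the paper: the same elementwise-image formula for $\mu'(f)$, and the same substitution $b_i = a_{\pi(i)}$ yielding the containment that gives the lax inequality $\mu'(f^\pi) \le \mu'(f)^\pi$. The only difference is cosmetic: the paper's proof additionally records, in passing, that $\mu'(f)$ preserves the $\Ztwo$-symmetry (needed later when composing towards $\spol(\Gammaell, \Sigma^2)$), whereas you justifiably defer that check since it is not part of the statement.
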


\begin{proof}
  Let $f\colon G^n\to H$ be a homomorphism. We define
  \[
    \mu'(f)\colon \mhom(K_2, G)^n \to \mhom(K_2, H)
  \]
  by setting $\mu'(f)(m_1, \dots, m_n)$ to be the multihomomorphism
  \[
    u \mapsto \{ f(v_1, \dots, v_n) \mid v_i \in m_i(u) \text{ for } i \in [n] \}
  \]
  where $u \in V(C)$. It is easy to check that $\mu'(f)(m_1, \dots, m_n)$ is a multihomomorphism using that all $m_i$'s are multihomomorphisms and $f$ is a polymorphism.
  
Now consider a map $\pi\colon [n] \to [k]$ and multihomomorphisms $m_j \in \mhom(C, G)$ for $j\in [k]$. For every vertex $u$ of $K_2$, we have
  \begin{align*}
    \mu'(f)^\pi&(m_1, \dots, m_k)(u) \\
             &= \mu'(f)(m_{\pi(1)}, \dots, m_{\pi(n)})(u) \\
             &= \{f(v_1, \dots, v_{n}) \mid v_i \in m_{\pi(i)}(u) \textrm{ for all } i\in[n] \} \\
             &\supseteq \{f(v'_{\pi(1)}, \dots, v'_{\pi(n)}) \mid v'_j \in m_j(u) \textrm{ for all } j\in[k]\} \\
             &= \{ f^\pi(v'_1, \dots, v'_k) \mid v'_j \in m_j(u) \textrm{ for all } j\in [k] \} \\
             &= \mu'(f^\pi)(m_1, \dots, m_k)(u)
  \end{align*}
Thus, $\mu'(f)^\pi \geq \mu'(f^\pi)$ as we wanted to show. Checking that $\mu'(f)$ preserves the $\Ztwo$-symmetry is straightforward.
\end{proof}

By applying the monotone map $\mu'(f)$ elementwise to chains, it naturally extends to a simplicial map 
\[
  \mu'(f)\colon \Hom(K_2, G)^n \to \Hom(K_2,H).
\]
In this way, $\mu'$ can be treated as a map 
\[
  \mu'\colon \pol(G, H) \to \spol(\Hom(C, G), \Hom(C, H)).
\]
In order to show that $\mu'$ preserves minors up to homotopy, we use the following well-known result about order complexes (see, e.g., \cite[Theorem~10.11]{Bjorner:1995} or \cite[Lemma 2.3]{MeyerO24}):

\begin{lemma} \label{lem:homotopy-on-equivariant-posets}
If $f, g\colon P\to Q$ monotone are monotone maps between posets such that $f\geq g$, then the induced continuous maps $\geom f, \geom g\colon \geom{ \Delta(P)}\to \geom{\Delta(Q)}$ are homotopic. Moreover, if $\Ztwo$ acts on both $P$ and $Q$ and $f$ and $g$ are equivariant, then $\geom f$ and $ \geom g$ are equivariantly homotopic.
\end{lemma}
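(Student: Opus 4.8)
The plan is to realise the homotopy directly at the level of posets. Write $[1]$ for the two-element chain $\{0 < 1\}$; its order complex $\Delta([1])$ has a single non-degenerate $1$-simplex, so $\geom{\Delta([1])} \cong [0,1]$. I would consider the product poset $P \times [1]$ with the componentwise order and define a map $H \colon P \times [1] \to Q$ by $H(p, 0) = g(p)$ and $H(p, 1) = f(p)$.

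First I would verify that $H$ is monotone. The only case that is not immediate is a comparison $(p, 0) \leq (p', 1)$ with $p \leq p'$, where $H(p,0) = g(p) \leq g(p') \leq f(p') = H(p',1)$, using monotonicity of $g$ together with the hypothesis $g \leq f$; the two cases in which the second coordinates agree follow from monotonicity of $g$ and of $f$, respectively. Hence $H$ is a monotone map of posets and induces a simplicial map $\Delta(P \times [1]) \to \Delta(Q)$, and thus a continuous map $\geom{\Delta(P\times [1])} \to \geom{\Delta(Q)}$. Now I would invoke the two compatibility facts recalled in Section~\ref{sec:hom-complexes}: $\Delta(P \times [1]) \cong \Delta(P) \times \Delta([1])$ as simplicial sets, and $\geom{\Delta(P) \times \Delta([1])} \cong \geom{\Delta(P)} \times \geom{\Delta([1])} \cong \geom{\Delta(P)} \times [0,1]$ (the second homeomorphism applies because all posets, hence all simplicial sets, here are finite). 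Under these identifications $\geom{H}$ becomes a continuous map $\geom{\Delta(P)} \times [0,1] \to \geom{\Delta(Q)}$ restricting to $\geom{g}$ on $\geom{\Delta(P)}\times\{0\}$ and to $\geom{f}$ on $\geom{\Delta(P)}\times\{1\}$, i.e.\ a homotopy from $\geom{g}$ to $\geom{f}$.

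For the equivariant statement, equip $P \times [1]$ with the $\Ztwo$-action that acts on $P$ via the given involution and trivially on $[1]$. Then $H$ is equivariant: $H(\nu\cdot p, 0) = g(\nu\cdot p) = \nu\cdot g(p) = \nu\cdot H(p,0)$ by equivariance of $g$, and likewise $H(\nu\cdot p, 1) = \nu\cdot H(p,1)$ by equivariance of $f$. Since the $\Ztwo$-action on $\Delta(P) \times \Delta([1])$ is the product of the given action on $\Delta(P)$ with the trivial action on $\Delta([1])$, the induced homotopy $\geom{\Delta(P)} \times [0,1] \to \geom{\Delta(Q)}$ is equivariant in each slice $\geom{\Delta(P)}\times\{t\}$, which is precisely the notion of equivariant homotopy used in the paper; hence $\geom{f} \sim_{\Ztwo} \geom{g}$.

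I expect no genuine obstacle here: the argument is a direct verification that a single map $H$ is monotone and equivariant, and the only point requiring care is the pair of identifications of products (of order complexes and of their geometric realizations), which are standard for finite simplicial sets and were already recorded in Section~\ref{sec:hom-complexes}.
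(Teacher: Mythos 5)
Your proposal is correct and follows essentially the same route as the paper: build the monotone, equivariant map $H$ on the product poset $P\times\{0<1\}$ (with $\Ztwo$ acting trivially on the chain factor), and use the identification of $\geom{\Delta(P\times\{0,1\})}$ with $\geom{\Delta(P)}\times[0,1]$ to read off the (equivariant) homotopy. Your explicit monotonicity check and choice of orientation ($H(p,0)=g(p)$, $H(p,1)=f(p)$) is the careful version of exactly the paper's argument.
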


\begin{proof}
  Consider the poset $P \times \{0, 1\}$ with the componentwise partial order, where $\Ztwo$ acts trivially on the first coordinate. Since $f\geq g$, the map $H\colon P\times \{0, 1\}$ defined by $H(p, 0) = f(p)$ and $H(p, 1) = g(p)$ is monotone and equivariant. Further observe that $\geom {\Delta(P \times \{0, 1\})}$ is $\Ztwo$-homeomorphic to $\geom{\Delta(P)} \times [0, 1]$, hence $\geom H$ induces an equivariant homotopy $\geom{\Delta(P)} \times [0, 1] \to \geom{\Delta(Q)}$ with $\geom H(-, 0) = \geom f$ and $\geom H(-, 1) = \geom g$.
\end{proof}

Using Lemma~\ref{lem:lax} (applied with $G = C_\ell$ and $H = K_4$), Lemma~\ref{lem:homotopy-on-equivariant-posets}, and the equivariant simplicial map $s\colon \Hom(K_2, K_4) \to \Sigma^2$ described in Lemma~\ref{lem:simplicial-map-HomK4-Sigma2}, we get the required homomorphism $\mu$:

\begin{lemma} \label{lem:mu}
  There is a mapping $\mu\colon \pol(C_\ell, K_4) \to \spol(\Gamma_{4\ell}, \Sigma^2)$ such that $\geom {\mu(f^\pi)}$ and $\geom {\mu(f)^\pi}$ are $\Ztwo$-homotopic for all polymorphisms $f\in \pol^{(n)}(C_\ell, K_4)$ and $\pi\colon [n] \to [m]$.
\end{lemma}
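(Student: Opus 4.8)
The plan is to build $\mu$ as a composite of three equivariant simplicial maps. Given a polymorphism $f\colon C_\ell^n\to K_4$, Lemma~\ref{lem:lax} yields the monotone equivariant map $\mu'(f)$, which extends componentwise to chains to an equivariant simplicial map $\mu'(f)\colon \Hom(K_2,C_\ell)^n\to\Hom(K_2,K_4)$. Pre-composing with the $n$-th power of the equivariant simplicial isomorphism $\kappa\colon \Gamma_{4\ell}\xrightarrow{\cong}\Hom(K_2,C_\ell)$ of Example~\ref{ex:odd-cycles} and post-composing with the equivariant simplicial map $s\colon\Hom(K_2,K_4)\to\Sigma^2$ of Lemma~\ref{lem:simplicial-map-HomK4-Sigma2}, I set
\[
  \mu(f)\ :=\ s\circ \mu'(f)\circ \kappa^{\,n}\ \colon\ \Gamma_{4\ell}^n\longrightarrow \Sigma^2 .
\]
Each factor is equivariant and simplicial and preserves arity, so $\mu(f)\in\spol(\Gamma_{4\ell},\Sigma^2)$ with the right arity; unwinding the definitions shows this agrees with the composite $s\circ f_*\circ\iota_n$ (pre-composed with $\kappa^n$) from Section~\ref{sec:overview}.

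Next I would prove that $\geom{\mu(f^\pi)}$ and $\geom{\mu(f)^\pi}$ are equivariantly homotopic for every $\pi\colon[n]\to[m]$. This rests on two purely formal compatibilities. \emph{(i)} Taking $\pi$-minors commutes with the passage from monotone poset maps to the induced simplicial maps on order complexes: the simplicial map induced by the poset-minor $\mu'(f)^\pi$ equals the $\pi$-minor of the simplicial map induced by $\mu'(f)$; this is a direct check, matching the two standard descriptions of a simplex of $\Delta(P)^k$ (a $k$-tuple of chains in $P$ versus a single chain in $P^k$) under the canonical identification $\Delta(P)^k\cong\Delta(P^k)$. \emph{(ii)} Taking $\pi$-minors commutes with pre-composition by $\kappa^n$ and post-composition by $s$, since variable substitution does not interfere with fixed maps on either side. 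Combining (i), (ii), and the definition of $\mu$ gives
\[
  \mu(f^\pi)=s\circ\mu'(f^\pi)\circ\kappa^{\,m},\qquad \mu(f)^\pi=s\circ\mu'(f)^\pi\circ\kappa^{\,m},
\]
where $\mu'(f^\pi)$ and $\mu'(f)^\pi$ on the right denote the simplicial maps induced by the corresponding monotone maps. Now the lax-minion-homomorphism property from Lemma~\ref{lem:lax} says exactly that $\mu'(f^\pi)\le\mu'(f)^\pi$ as monotone $\Ztwo$-equivariant maps $\mhom(K_2,C_\ell)^m\to\mhom(K_2,K_4)$, so Lemma~\ref{lem:homotopy-on-equivariant-posets} supplies an equivariant homotopy between $\geom{\mu'(f^\pi)}$ and $\geom{\mu'(f)^\pi}$. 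Composing this homotopy with the fixed equivariant map $\geom{s}$ on the left and $\geom{\kappa}^{\,m}$ on the right produces an equivariant homotopy between $\geom{\mu(f^\pi)}$ and $\geom{\mu(f)^\pi}$, which is the assertion.

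On difficulty: there is no serious obstacle, as the lemma is essentially an assembly of Lemma~\ref{lem:lax}, Lemma~\ref{lem:homotopy-on-equivariant-posets}, Lemma~\ref{lem:simplicial-map-HomK4-Sigma2}, and Example~\ref{ex:odd-cycles}. The one place requiring genuine care is compatibility (i)--(ii): one must track the variable substitution defining a $\pi$-minor correctly through the order-complex functor and through composition with the fixed maps, which forces a pedantic treatment of the identification $\Delta(P)^k\cong\Delta(P^k)$ and of how $\pi$-minors of products of simplicial sets are computed simplex-wise. I would also flag that $\mu$ preserves minors only up to equivariant homotopy, not on the nose, precisely because the inequality $\mu'(f^\pi)\le\mu'(f)^\pi$ is in general strict; this is why the poset-homotopy Lemma~\ref{lem:homotopy-on-equivariant-posets} is indispensable and cannot be replaced by an equality of simplicial maps.
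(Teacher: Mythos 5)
Your proposal is correct and follows essentially the same route as the paper: define $\mu(f)=s\circ\mu'(f)$ (up to the identification $\Gamma_{4\ell}\cong\Hom(K_2,C_\ell)$ from Example~\ref{ex:odd-cycles}, which you make explicit via $\kappa$), use the lax inequality $\mu'(f^\pi)\le\mu'(f)^\pi$ from Lemma~\ref{lem:lax} together with Lemma~\ref{lem:homotopy-on-equivariant-posets} to get the equivariant homotopy, and observe that composing with the fixed equivariant maps preserves minors and homotopies. Your extra care with the compatibility of minors under the order-complex identification and with $\kappa^n$ is a more detailed writing-up of steps the paper leaves implicit, not a different argument.
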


\begin{proof}
 Let $s\colon \Hom(K_2, K_4) \to \Sigma^2$ be the equivariant simplicial map described in Lemma~\ref{lem:simplicial-map-HomK4-Sigma2}. Then $\mu$ is defined by $\mu(f) = s\circ\mu'(f)$. We have $\mu'(f^\pi) \leq \mu'(f)^\pi$, hence the geometric realisations of these two maps are equivariantly  homotopic by Lemma~\ref{lem:homotopy-on-equivariant-posets}. Composing with $s$ preserves both 
minors and equivariant homotopies.
\end{proof}

\subsection{From simplicial sets to topological spaces}

The minion homomorphism $\eta$ is a composition of the minion homomorphism obtained by geometric realisation and the relaxation lemma (Lemma~\ref{lem:relaxation}). This has been discussed in detail in \cite[Appendix~C]{FilakovskyNOTW24}; let us outline the key ideas here.

Firstly, we use the fact that geometric realisation preserves finite products \cite[Theorem 5.12]{Friedman2012},\footnote{In our case, the simplicial sets are locally finite, hence the statement is true for the usual product of topological spaces.} and hence, for any simplicial map $f\colon \ssX^n \to \ssY$, we can treat $\geom f$ as a function $\geom \ssX^n \to \geom\ssY$. The following is then an instance of a more abstract statement \cite[Lemma C.15]{FilakovskyNOTW24}.

\begin{lemma}
  Let $\ssX$ and $\ssY$ be two simplicial sets, then the mapping
  \[
    \eta' \colon \spol(\ssX, \ssY) \to \hpol{\geom \ssX, \geom \ssY}
  \]
  defined by $\eta'(f) = [\geom f]$ is a minion homomorphism.
\end{lemma}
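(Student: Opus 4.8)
The plan is to check the two defining conditions of a minion homomorphism for $\eta'$: that it respects arities, and that it commutes with taking minors. Both reduce, via functoriality of geometric realisation and the fact that $\geom{-}$ preserves finite products for the (locally finite) simplicial sets under consideration, to bookkeeping about the ``reindexing'' maps that implement minors on each side.

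First I would fix notation for minors. For $\pi\colon[n]\to[m]$ and a simplicial set $Z$ with a $\Ztwo$-action, let $\pi_Z\colon Z^m\to Z^n$ denote the simplicial map sending a simplex $(\sigma_1,\dots,\sigma_m)$ to $(\sigma_{\pi(1)},\dots,\sigma_{\pi(n)})$; it is simplicial because face and degeneracy operators act coordinatewise on $Z^k$, and equivariant because the $\Ztwo$-action on $Z^k$ is the diagonal one. By the definition of products of simplicial sets one has $f^\pi=f\circ\pi_{\ssX}$ for every $f\in\spol^{(n)}(\ssX,\ssY)$. Likewise, for a space $W$ with a $\Ztwo$-action let $\Delta_\pi\colon W^m\to W^n$ be the continuous reindexing map $(w_1,\dots,w_m)\mapsto(w_{\pi(1)},\dots,w_{\pi(n)})$; the minor of a class $[g]\in\hpol[n]{\geom{\ssX},\geom{\ssY}}$ is $[g]^\pi=[g\circ\Delta_\pi]$, which is well defined on equivariant homotopy classes since composing an equivariant homotopy with $\Delta_\pi\times\id_{[0,1]}$ is again an equivariant homotopy.

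Since $\geom{-}$ preserves finite products in our setting, for each $k$ the canonical comparison map $p_k\colon\geom{\ssX^k}\to(\geom{\ssX})^k$ induced by the projections is a homeomorphism, and it is $\Ztwo$-equivariant because it is compatible with all coordinate projections and the $\Ztwo$-actions on both sides are diagonal. Hence for $f\in\spol^{(n)}(\ssX,\ssY)$ the composite $\geom f\circ p_n^{-1}\colon(\geom{\ssX})^n\to\geom{\ssY}$ is equivariant and continuous, so $\eta'(f):=[\geom f\circ p_n^{-1}]$ indeed lies in $\hpol[n]{\geom{\ssX},\geom{\ssY}}$, which shows $\eta'$ preserves arities. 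For minor-preservation I must show $\eta'(f^\pi)=\eta'(f)^\pi$. By functoriality, $\geom{f^\pi}=\geom{f\circ\pi_{\ssX}}=\geom f\circ\geom{\pi_{\ssX}}$, and the technical heart is the identity
\[
 p_n\circ\geom{\pi_{\ssX}}=\Delta_\pi\circ p_m\colon \geom{\ssX^m}\to(\geom{\ssX})^n,
\]
i.e.\ that the realisation of the simplicial reindexing map agrees with the continuous reindexing map under the product identification. This follows by composing with the $n$ coordinate projections of $(\geom{\ssX})^n$: both sides have $i$-th component $\geom{\mathrm{pr}_i\circ\pi_{\ssX}}=\geom{\mathrm{pr}_{\pi(i)}}$, and $p_k$ is precisely the map intertwining $\geom{\mathrm{pr}_j}$ with the topological projection. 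Granting this, $\geom{f^\pi}\circ p_m^{-1}=\geom f\circ p_n^{-1}\circ\Delta_\pi$, whose equivariant homotopy class is $[\geom f\circ p_n^{-1}]^\pi=\eta'(f)^\pi$; since the same class equals $\eta'(f^\pi)$ by definition, we conclude $\eta'(f^\pi)=\eta'(f)^\pi$.

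I expect the only genuinely delicate point to be the identity $p_n\circ\geom{\pi_{\ssX}}=\Delta_\pi\circ p_m$: conceptually this is just naturality of the product-comparison homeomorphism with respect to projections and diagonals, but it is where one must be precise about what ``$\geom{-}$ preserves finite products'' means and verify compatibility with the coordinate structure and with the $\Ztwo$-actions. Everything else — equivariance of the maps $\pi_Z$, $\Delta_\pi$, $p_k$, and well-definedness of minors on homotopy classes — is routine, since all maps involved are built from coordinate permutations and repetitions, which commute with the diagonal $\Ztwo$-actions. (As noted, this lemma is an instance of the more general \cite[Lemma C.15]{FilakovskyNOTW24}, to which we refer for full details.)
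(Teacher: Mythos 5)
Your proof is correct and follows essentially the same route as the paper: identify $\geom{\ssX^n}$ with $\geom{\ssX}^n$ via the product-preserving realisation and observe that $\geom{f^\pi}$ and $\geom{f}^\pi$ then agree on the nose, so their equivariant homotopy classes coincide. The only difference is bookkeeping: you verify the key compatibility $p_n\circ\geom{\pi_{\ssX}}=\Delta_\pi\circ p_m$ by naturality with respect to the coordinate projections, whereas the paper checks the same equality pointwise (on vertices, faces, and by linearity on interiors), so this is a matter of presentation rather than substance.
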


\begin{proof}[Proof sketch]
  After we have identified $\geom {\ssX^n}$ with ${\geom \ssX}^n$, there is not much happening here. The functions $\geom f^\pi$ and $\geom{f^\pi}$ agree on vertices since on those they are both defined as $f^\pi$. Similarly, they map each of the faces to the same face. Finally, on internal points of the faces, they are both defined as a linear extension of $f^\pi$, and hence they are equal, and consequently, their homotopy classes coincide.
\end{proof}

Combining the above with the relaxation lemma using the $\Ztwo$-equivariant continuous map $S^2 \to \eml$ constructed in Lemma~\ref{lem:Postnikov} below, we obtain the desired minion homomorphism.

\begin{lemma} \label{lem:eta}
  There is a minion homomorphism
  \[
    \eta\colon \spol(\Gamma_{4\ell}, \Sigma^2) \to \hpol{S^1, \eml}.
  \]
\end{lemma}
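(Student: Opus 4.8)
The statement to prove is Lemma~\ref{lem:eta}: the existence of a minion homomorphism $\eta\colon \spol(\Gamma_{4\ell}, \Sigma^2) \to \hpol{S^1, \eml}$. The strategy is to factor $\eta$ as a composition of two minion homomorphisms already prepared in the surrounding text: the geometric realisation functor, which turns equivariant simplicial maps into equivariant continuous maps (Lemma about $\eta'$), followed by an application of the relaxation lemma (Lemma~\ref{lem:relaxation}) to post-compose with the equivariant map $S^2 \to \eml$ from Lemma~\ref{lem:Postnikov}.

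\begin{proof}
  Consider first the minion homomorphism
  \[
    \eta' \colon \spol(\Gamma_{4\ell}, \Sigma^2) \to \hpol{\geom{\Gamma_{4\ell}}, \geom{\Sigma^2}}
  \]
  given by $\eta'(f) = [\geom f]$. By Example~\ref{ex:GammaL}, $\geom{\Gamma_{4\ell}}$ is equivariantly homeomorphic to $S^1$ with the antipodal action, and by Example~\ref{ex:spheres}, $\geom{\Sigma^2}$ is equivariantly homeomorphic to $S^2$ with the antipodal action; fixing such homeomorphisms identifies the target of $\eta'$ with $\hpol{S^1, S^2}$.

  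Next, let $j\colon S^2 \to \eml$ be the $\Ztwo$-equivariant continuous map constructed in Lemma~\ref{lem:Postnikov} (the inclusion of $S^2$ as the $2$-skeleton of the Eilenberg--MacLane space $\eml$). Apply the topological version of the relaxation lemma (Lemma~\ref{lem:relaxation}) with $\ssX' = \ssX = S^1$, the identity map $S^1 \to S^1$, and the equivariant map $j\colon S^2 \to \eml$ in the role of $\ssY \to \ssY'$. This yields a minion homomorphism
  \[
    \hpol{S^1, S^2} \to \hpol{S^1, \eml}
  \]
  which sends the equivariant homotopy class of a map $g\colon T^n \to S^2$ to the class of $j\circ g$.

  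Composing the two minion homomorphisms, we obtain the desired minion homomorphism
  \[
    \eta\colon \spol(\Gamma_{4\ell}, \Sigma^2) \to \hpol{S^1, \eml},
  \]
  explicitly $\eta(f) = [\,j\circ \geom f\,]$ for $f\in \spol^{(n)}(\Gamma_{4\ell}, \Sigma^2)$. Both constituent maps preserve minors, hence so does $\eta$.
\end{proof}

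\noindent The only genuine input beyond routine functoriality is the existence and equivariance of the map $S^2 \to \eml$, supplied by Lemma~\ref{lem:Postnikov}; everything else is bookkeeping about products and geometric realisation. I expect the main subtlety, already dispatched in the cited auxiliary lemmas, to be the verification that geometric realisation commutes with finite products of the (locally finite) simplicial sets in play, so that $\geom{\Gamma_{4\ell}^n}$ may legitimately be identified with $(S^1)^n = T^n$; without this identification the minion structure on the target would not match up.
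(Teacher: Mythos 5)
Your proof is correct and follows the paper's argument exactly: the paper also obtains $\eta$ by composing the geometric-realisation minion homomorphism $\eta'$ (which relies on realisation preserving finite products, so that $\geom{\Gamma_{4\ell}^n}$ is identified with $T^n$) with the topological relaxation lemma applied to the equivariant map $j\colon S^2 \to \eml$ from Lemma~\ref{lem:Postnikov}. No differences worth noting.
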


\subsection{From graphs to topological spaces}

Finally, let us discuss the composition $\phi = \eta \circ \mu$. The claim is the following.

\begin{lemma} \label{lem:phi}
  The composition $\phi = \eta \circ \mu$ is a minion homomorphism
  \[
    \phi \colon \pol(C_{\ell}, K_4) \to \hpol{S^1, \eml}.
  \]
\end{lemma}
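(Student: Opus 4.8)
The plan is to simply chain together the two minion homomorphisms $\mu$ and $\eta$ constructed earlier and check that nothing goes wrong with the minor-preservation on the nose. The subtlety is that $\mu\colon \pol(C_\ell,K_4)\to \spol(\Gamma_{4\ell},\Sigma^2)$ is \emph{not} a strict minion homomorphism: by Lemma~\ref{lem:mu}, it only satisfies $\geom{\mu(f^\pi)} \sim_{\Ztwo} \geom{\mu(f)^\pi}$, i.e.\ it preserves minors only up to equivariant homotopy. On the other hand $\eta\colon \spol(\Gamma_{4\ell},\Sigma^2)\to \hpol{S^1,\eml}$ is a genuine minion homomorphism (Lemma~\ref{lem:eta}), and crucially its target is a minion of \emph{equivariant homotopy classes}, so the ambient ambiguity in $\mu$ gets collapsed once we pass through $\eta$.

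Concretely, I would first define $\phi = \eta\circ\mu$ arity-wise; since both $\mu$ and $\eta$ preserve arity, so does $\phi$, so we have maps $\phi_n\colon \pol^{(n)}(C_\ell,K_4)\to \hpol[n]{S^1,\eml}$. Next I would verify the minor-preservation identity $\phi_m(f^\pi) = \phi_n(f)^\pi$ for every $\pi\colon[n]\to[m]$. Unwinding the definitions, $\phi_m(f^\pi) = \eta(\mu(f^\pi))$ and $\phi_n(f)^\pi = \eta(\mu(f))^\pi = \eta(\mu(f)^\pi)$, where the last equality uses that $\eta$ is a strict minion homomorphism. So it suffices to show $\eta(\mu(f^\pi)) = \eta(\mu(f)^\pi)$. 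Now $\eta$ factors through geometric realisation followed by the relaxation-lemma map, and in particular $\eta(g)$ depends only on the equivariant homotopy class of $\geom{g}$ (composed with the fixed map $S^2\to \eml$). By Lemma~\ref{lem:mu}, $\geom{\mu(f^\pi)}$ and $\geom{\mu(f)^\pi}$ are equivariantly homotopic, hence their images under $\eta$ coincide. This gives $\phi_m(f^\pi) = \phi_n(f)^\pi$, so $\phi$ is a minion homomorphism, and commutativity of the triangle $\phi = \eta\circ\mu$ holds by construction.

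The one point that needs a word of care — and which I expect to be the only real (if minor) obstacle — is making precise that $\eta$ genuinely is well-defined on equivariant homotopy classes in a way compatible with the above. That is, one should note that the composite $\spol(\Gamma_{4\ell},\Sigma^2)\xrightarrow{\eta'}\hpol{S^2,\Sigma^2\text{-realisations}}\to \hpol{S^1,\eml}$ (via the relaxation lemma applied to $S^2\to\eml$) sends a simplicial map $g$ to the equivariant homotopy class of $(S^2\to\eml)\circ\geom{g}$, and that equivariantly homotopic inputs $\geom{g_1}\sim_{\Ztwo}\geom{g_2}$ produce equal outputs; this is immediate since post-composition with a fixed continuous map preserves equivariant homotopy. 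Everything else is bookkeeping: all the substantive content has already been established in Lemmas~\ref{lem:mu} and~\ref{lem:eta}.

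\begin{proof}
By Lemma~\ref{lem:mu}, $\mu$ assigns to each $f\in\pol^{(n)}(C_\ell,K_4)$ an equivariant simplicial map $\mu(f)\in\spol^{(n)}(\Gamma_{4\ell},\Sigma^2)$, and for every $\pi\colon[n]\to[m]$ the geometric realisations $\geom{\mu(f^\pi)}$ and $\geom{\mu(f)^\pi}$ are $\Ztwo$-homotopic. By Lemma~\ref{lem:eta}, $\eta\colon\spol(\Gamma_{4\ell},\Sigma^2)\to\hpol{S^1,\eml}$ is a minion homomorphism; by its construction it factors through geometric realisation followed by post-composition with the fixed equivariant map $S^2\to\eml$ from Lemma~\ref{lem:Postnikov}, so $\eta(g)$ depends only on the equivariant homotopy class of $\geom g$. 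Define $\phi=\eta\circ\mu$; since $\mu$ and $\eta$ preserve arity, so does $\phi$. For $\pi\colon[n]\to[m]$ and $f\in\pol^{(n)}(C_\ell,K_4)$ we compute
\[
  \phi(f^\pi)=\eta(\mu(f^\pi))=\eta(\mu(f)^\pi)=\eta(\mu(f))^\pi=\phi(f)^\pi,
\]
where the second equality holds because $\geom{\mu(f^\pi)}\sim_{\Ztwo}\geom{\mu(f)^\pi}$ and $\eta$ is invariant under equivariant homotopy of the realisation, and the third equality uses that $\eta$ is a minion homomorphism. Hence $\phi$ preserves minors, i.e.\ it is a minion homomorphism $\pol(C_\ell,K_4)\to\hpol{S^1,\eml}$, and $\phi=\eta\circ\mu$ by definition.
\end{proof}
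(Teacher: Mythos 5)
Your proposal is correct and follows essentially the same route as the paper: compose $\eta$ and $\mu$, use Lemma~\ref{lem:mu} for minor-preservation up to equivariant homotopy, and observe that $\eta$ is constant on equivariant homotopy classes of realisations (since it factors through $\eta'$ and postcomposition with $S^2 \to \eml$), which collapses the ambiguity. The chain of equalities you write is exactly the one in the paper's proof, just read in the opposite order.
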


\begin{proof}
  It is enough to show that the composition preserves minors. For that, let $f\in \pol^{(n)}(C_\ell, K_4)$ and $\pi \colon [n] \to [m]$.
  We have that $\mu(f)^\pi$ and $\mu(f^\pi)$ are $\Ztwo$-homotopic by Lemma~\ref{lem:mu}. Furthermore,
  \[
    \phi(f)^\pi = (\eta\mu(f))^\pi = \eta(\mu(f)^\pi) = \eta\mu(f^\pi) = \phi(f^\pi)
  \]
  where the third equality uses the fact that $\eta$ is constant on homotopy classes (which is true since $\eta$ is a composition of $\eta'$ and postcomposition with $S^2 \to \eml$, and $\eta'$ is constant on homotopy classes by definition).
\end{proof}

This concludes the proof of Lemma~\ref{lem:minion-homomorphism}.

\section{Equivariant topology}
\label{app:cohomology}
\label{app:cohomology_def}

In this section, we describe how to construct, starting from $S^2$, a $\Ztwo$-space $\eml$ that is homotopically simpler, together with an equivariant map $S^2\to \eml$.
The space $\eml$ will have the property that all of its homotopy groups $\pi_n(\eml)$ for $n > 2$ are trivial (which is not the case for $S^2$) and that its lower-dimensional homotopy groups $\pi_i(\eml)$ for $i \geq 2$ are isomorphic to those of $S^2$; thus, the space $\eml$ is an \emph{Eilenberg--MacLane space}, i.e., it has only one non-trivial homotopy group, namely $\pi_2(\eml) = \pi_2(S^2)\cong \Z$.

The homotopy classes of maps from a complex $X$ to an Eilenberg--MacLane space are in bijection with the elements of a suitable cohomology group of $X$ \cite[Theorem 4.57]{Hat02}. An analogous statement is also true in the equivariant setting; this will allow us to determine $[T^n, \eml]_\Ztwo$ by computing a suitable equivariant cohomology group, specifically the \emph{Bredon cohomology group} $H^2_\Ztwo(T^n;\pi_2(S^2))$ (see Definition~\ref{def:bredon-cohomology}), which will allow us to prove Lemma~\ref{lem:number-homotopy-classes}.

Throughout this Appendix, we assume some familiarity with fundamental notions of algebraic topology such as homotopy, homology and cohomology. We refer to \citet{Hat02} for background on the more basic non-equivariant setting, and to May et al.~\cite[Chapters I and II]{MayCP96}
, \citet{Die87}, and \citet{Bre67} for more details on equivariant homotopy and cohomology of spaces with group actions (all the definitions and constructions we use are special cases of the general theory described in these standard references).

If $X$ is a topological space with a $\Ztwo$-action (given by a continuous involution $\nu\colon X\to X)$, we will simply refer to $X$ as a \emph{$\Ztwo$-space}, and we use the multiplicative notation $\nu \cdot x$ instead $\nu(x)$.

\subsection{Construction of the space \texorpdfstring{$\eml$}{Y}}
\label{sec:construction-EML}

There are several different but ultimately equivalent ways of constructing the space $\eml$; here (following \citet[Example~4.13]{Hat02}), we will use a simple inductive construction  that starts with the sphere $S^2$ and achieves triviality of the higher homotopy groups $\pi_i(\eml)$, $i > 2$, by successively glueing the boundaries of higher and higher-dimensional disks along non-trivial elements of the corresponding homotopy group. The formal description of this construction uses the notion of \emph{CW complexes}.

A \emph{CW complex} is a space $X$ together with a increasing sequence of subspaces (called a \emph{filtration}) 
\[
  X_0 \subseteq X_1 \subseteq X_2 \subseteq \cdots\subseteq X,
\]
with the following properties: $X_0$ is a discrete set of points (called \emph{vertices} or \emph{$0$-dimensional cells}) and $X_{i+1}$ is constructed by attaching a set of $(i+1)$-dimensional discs $D^{i+1}_\alpha$ to $X_i$ along their boundary via continuous maps $g_\alpha \colon \partial D^{i+1} _\alpha = S^{i} _\alpha \to X_i$. Thus
% \[
%   X_{i+1} = X_i \bigcup_{\{\sqcup_\alpha g_\alpha\}} \sqcup_\alpha D^{i+1}_\alpha,
% \]
\[
  X_{i+1} = \faktor{(X_i \sqcup \coprod_\alpha D^{i+1}_\alpha)}\sim
\]
where $\sim$ identifies $g_\alpha(x)\in X_i$ with $x\in \bd D^{i+1}_\alpha$. Finally, the topology on $X = \bigcup_n X_n$ is the so-called \emph{weak topology} (i.e., a set $U\subseteq X$ is open if and only if $X\cap X_i$ is open in $X_i$ for every $i$). The subspace $X_i$ is called the $i$-dimensional skeleton of $X$.

We say that $X$ is $\Ztwo$-\emph{CW complex} if, for each $i \geq 0$, $\Ztwo$ acts on the set of $i$-simplices and the attaching maps respect the action. As remarked above, the geometric realization $\geom{X}$ of simplicial set $X$ is a CW complexes, and if $X$ has a simplicial $\Ztwo$-action, then $\geom{X}$ is a $\Ztwo$-CW complex.

\begin{lemma}\label{lem:Postnikov}
  There exists a $\Ztwo$-CW complex $\eml$ such that
  \begin{enumerate}
    \item $\pi_2(\eml) = \pi_2(S^2)$;
    \item $\pi_i(\eml) = 0$ for all $i \neq 2$; and
    \item there is a $\Ztwo$-map $j \colon S^2 \to \eml$ that induces an isomorphism $\pi_2(j) \colon \pi_2(S^2) \to \pi_2(\eml)$ of groups with a $\Ztwo$-action.
  \end{enumerate}  
\end{lemma}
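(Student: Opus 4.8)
The plan is to carry out the standard Eilenberg--MacLane/Postnikov construction (as in \citet[Example~4.13 and Section~4.2]{Hat02}) while being careful that every step can be done $\Ztwo$-equivariantly. We build $\eml$ as an increasing union $S^2 = Y^{(2)} \subseteq Y^{(3)} \subseteq Y^{(4)} \subseteq \cdots$, where $Y^{(k)}$ is obtained from $Y^{(k-1)}$ by attaching $(k+1)$-dimensional cells to kill $\pi_k$. Concretely, suppose $Y^{(k-1)}$ has been constructed as a $\Ztwo$-CW complex with $\pi_2(Y^{(k-1)}) \cong \pi_2(S^2)$, with $\pi_i(Y^{(k-1)}) = 0$ for $2 < i < k$, and with a $\Ztwo$-map $S^2 \to Y^{(k-1)}$ inducing the isomorphism on $\pi_2$. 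Choose a set of generators $\{\,[g_\alpha] \mid \alpha \in A_k\,\}$ of the group $\pi_k(Y^{(k-1)})$, represented by maps $g_\alpha \colon S^k \to Y^{(k-1)}$. The group $\Ztwo$ acts on $\pi_k(Y^{(k-1)})$ (via the involution on $Y^{(k-1)}$, using a fixed basepoint orbit and change-of-basepoint isomorphisms, or more cleanly by working with the action on the whole mapping space), and we enlarge the generating set so that it is closed under this action: whenever $g_\alpha$ is in the set, a representative of $\nu\cdot[g_\alpha]$ is as well, indexed by $\nu\cdot\alpha$. We then attach one $(k+1)$-cell $D^{k+1}_\alpha$ along $g_\alpha$ for each $\alpha\in A_k$, and let $\Ztwo$ permute these cells via $\alpha \mapsto \nu\cdot\alpha$; one checks the attaching maps intertwine the actions (up to the homotopy inherent in the choice of representatives, which is enough to define an honest equivariant attaching after replacing $g_{\nu\cdot\alpha}$ by $\nu\circ g_\alpha$ on the nose). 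This produces $Y^{(k)}$ as a $\Ztwo$-CW complex.

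Next I would verify the homotopy-theoretic properties. Attaching $(k+1)$-cells does not change $\pi_i$ for $i < k$ (cellular approximation: any map from $S^i$, $i<k$, and any homotopy of such, misses the interiors of the new cells), so $\pi_2$ stays $\pi_2(S^2)$ and the composite $S^2 \to Y^{(k-1)} \to Y^{(k)}$ still induces the $\pi_2$-isomorphism; and the intermediate groups $\pi_i$, $2 < i < k$, remain $0$. By construction the attaching maps generate $\pi_k(Y^{(k-1)})$, so in $Y^{(k)}$ every such class becomes nullhomotopic, giving $\pi_k(Y^{(k)}) = 0$. Taking $\eml := \bigcup_k Y^{(k)}$ with the weak topology, we get a $\Ztwo$-CW complex with $\pi_2(\eml) = \pi_2(S^2)$ and $\pi_i(\eml) = 0$ for all $i > 2$ (a map $S^i \to \eml$, or a homotopy, has compact image and hence lands in some $Y^{(k)}$ with $k > i$, where it is already trivial); $\pi_i = 0$ for $i < 2$ is inherited from $S^2$. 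The inclusion $j \colon S^2 \hookrightarrow \eml$ is equivariant by construction and induces an isomorphism on $\pi_2$, which is moreover $\Ztwo$-equivariant because $j$ itself is.

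The main obstacle, and the only place where real care is needed beyond quoting \citet{Hat02}, is the equivariance bookkeeping: ensuring that the set of attaching maps at each stage can be chosen genuinely $\Ztwo$-closed \emph{on the nose} (not merely up to homotopy), and that the $\Ztwo$-action on the indexing set is free or otherwise compatible with a legitimate $\Ztwo$-CW structure. The clean way to handle this is to work in the category of $\Ztwo$-spaces from the start: rather than killing $\pi_k$ one generator at a time, attach equivariant cells of the form $(\Ztwo \times D^{k+1}) \sqcup (D^{k+1})$ — i.e.\ free cells and, if needed, fixed cells — along equivariant maps, which is exactly the equivariant cell-attachment in the Bredon/Illman theory of $\Ztwo$-CW complexes referenced via \citet{Die87} and \citet{MayCP96}. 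Since the $\Ztwo$-action on $S^2$ is the free antipodal action and all subsequent constructions can be done freely, one in fact only ever needs free equivariant cells, which simplifies the verification; the underlying nonequivariant homotopy computations are then literally those of \citet[Example~4.13]{Hat02}.
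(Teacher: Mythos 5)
Your proposal is correct and follows essentially the same route as the paper's own proof sketch: kill the higher homotopy groups of $S^2$ by inductively attaching higher-dimensional cells along orbit-closed generating sets of $\pi_k$, with the $\Ztwo$-action permuting the attached cells in free pairs (exactly the paper's ``attach $D_\alpha$ and $D_{\gromega\cdot\alpha}$ and swap them''), and take $j$ to be the inclusion of the $2$-skeleton, citing \citet[Example~4.13]{Hat02} for the nonequivariant homotopy computations. Your additional care about choosing attaching maps equivariantly on the nose and phrasing the construction via free equivariant cells only fleshes out details the paper leaves implicit.
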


\begin{proof}[Proof sketch]
The sphere $S^2$ can be viewed as a $\Ztwo$-CW complex with the antipodal action (we can, e.g., take the geometric realization of the simplicial set $\Sigma^2$).
Starting with this $\Ztwo$-CW complex, we construct the $i$-skeleton $\eml_i$ of $\eml$ as follows:
\begin{enumerate}
  \item We set $\eml_2 \coloneq S^2$.
  \item For $i > 2$ we create a space $\eml_{i}$ as follows: Start with $\eml_{i-1}$ and for every generator $\alpha$ of $\pi_i(\eml_{i-1})$ we attach two $(i+1)$-dimensional discs $D_\alpha, D_{\gromega\cdot \alpha}$ by identifying $\partial D_\alpha$ with $\alpha$ and $\partial D_{\gromega\cdot \alpha}$ with $\gromega\cdot \alpha$. We then extend the $\Ztwo$ action in the natural way by ``swapping'' the paired discs $D_\alpha$ and  $D_{\gromega\cdot \alpha}$.
  \item Finally, we take $\eml = \bigcup_{i \geq 2} \eml_i$.
\end{enumerate}
It is not hard to check (see \cite[Example~4.13]{Hat02}) that the $\Ztwo$-CW complex $\eml$ satisfies $\pi_i(\eml) = 0$ for $i \geq 3$ and $\pi_i(\eml) = \pi_i(S^2)$ for $i \leq 2$.
Further, $j\colon S^2\to \eml$ is defined as the inclusion of the $2$-skeleton $\eml_2 = S^2$ into $\eml$.
\end{proof}

\subsection{Equivariant cohomology: A primer}
\label{sec:cohomology}

We now introduce the Bredon cohomology that will help us classify equivariant maps $T^n \to \eml$. 

Prescribing a $\Z_2$-action on an Abelian group $M$ is the same as giving $M$ the structure of a module over the \emph{group ring} $\grLambda$ (which is isomorphic to the the quotient $\Z[\gromega]/(\gromega^2 - 1)$ of the polynomial ring by the ideal 
$(\gromega^2 - 1)$). In particular, if $Y$ is a space with a $\Ztwo$-action, then this action naturally induces a $\Ztwo$-action on every homotopy group $\pi_i(Y)$ and hence turns $\pi_i(Y)$ into a $\grLambda$-module. In what follows, we will mainly use the terminology of $\grLambda$-modules (rather than speaking of abelian groups with $\Ztwo$-actions). We are now ready to recall the definition of equivariant homology and cohomology groups.

\begin{definition}[Equivariant homology and cohomology]
  \label{def:bredon-cohomology}
  Let $X$ be a $\Ztwo$-CW complex. Its $d$-dimensional chain group $C_d(X)$ has a natural structure of $\grLambda$-module with multiplication given on a cell $\sigma$ by
  \[
    (n_0+n_1\gromega) \sigma =  n_0 \sigma + n_1 (\gromega\cdot \sigma)
  \]
  and extended linearly. Since the all the boundary maps commute with the action, these are $\grLambda$-module homomorphisms, and hence $C_\bullet(X)$ can be viewed as a chain complex of $\grLambda$-modules. We denote this chain complex by $C^{\Z_2}_\bullet(X)$.
  The homology associated to this chain complex is the \emph{equivariant homology} of $X$, denoted by $H_\bullet^{\Z_2}(X)$.

  Fix a $\grLambda$-module $N$, and consider the equivariant cochain complex:
  \[
    C_{\Z_2}^i (X; N) = \Hom_\grLambda\bigl(C^{\Z_2}_i(X), N\bigr)
  \]
  with the standard coboundary maps. The cohomology of this cochain complex is the \emph{Bredon cohomology}, denoted by $H_{\Z_2}^\bullet(X; N)$.
\end{definition}

We will use the following classical result to compute $[T^n, P]_\Ztwo$.
\begin{theorem}[{\cite[Theorem II.3.17]{Die87}; see also \cite[Chapter~II]{MayCP96}}]\label{lem:Eilenberg-MacLane}
  Let $\eml$ be a $\Ztwo$-CW complex which is an Eilenberg--MacLane space whose unique non-trivial homotopy group is $\pi_i(\eml)$ (we assume that $\pi_1(\eml)$ is abelian if $i=1$).
  Then, for every $\Ztwo$-CW complex $X$ such that there is a $\Ztwo$-map $X\to \eml$, the set $[X, \eml]_\Ztwo$ of $\Ztwo$-equivariant homotopy classes of equivariant maps is in bijection with $H^i_\Ztwo (X; \pi_n(\eml))$.
\end{theorem}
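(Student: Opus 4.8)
This is a standard fact of equivariant obstruction theory, quoted here from \cite{Die87}; the plan is to reconstruct the argument in the form in which we use it (group $\Ztwo$, and in our application $i=2$). The first point to record is that the action on $\eml$ is \emph{free}: the base $S^2$ carries the free antipodal action, and the inductive construction in Lemma~\ref{lem:Postnikov} attaches all higher cells in antipodal pairs, so no fixed point is ever created. Hence any $\Ztwo$-space $X$ that admits an equivariant map $X\to\eml$ is itself free (a fixed point of $X$ would map to a fixed point of $\eml$). In the free case, Bredon cohomology with coefficients in the $\grLambda$-module $N=\pi_i(\eml)$ is exactly the cohomology of the cochain complex $\Hom_\grLambda(C^{\Ztwo}_\bullet(X),N)$ of Definition~\ref{def:bredon-cohomology} (equivalently, the cohomology of $X/\Ztwo$ with the local system twisted by the $\Ztwo$-action on $N$); this identification is the technical backbone of the whole argument.

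Next I would fix a reference equivariant map $f_0\colon X\to\eml$ (one exists by hypothesis) and construct a \emph{difference cochain} invariant. Since $\pi_k(\eml)=0$ for $k<i$, the space $\eml$ is $(i-1)$-connected, so equivariant obstruction theory over the $(i-1)$-skeleton is vacuous and any equivariant $f\colon X\to\eml$ may be equivariantly homotoped to agree with $f_0$ on $X_{i-1}$. Having arranged this, on each $\Ztwo$-orbit of $i$-cells of $X$ the maps $f$ and $f_0$ differ by an element of $\pi_i(\eml)$; equivariance of both maps makes this assignment a $\grLambda$-linear cochain $d(f,f_0)\in C^i_{\Ztwo}(X;\pi_i(\eml))$. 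The same cell-by-cell computation as in the non-equivariant case shows $d(f,f_0)$ is a cocycle, that its class $[d(f,f_0)]\in H^i_{\Ztwo}(X;\pi_i(\eml))$ is independent of the chosen preliminary homotopy, and that it depends only on the equivariant homotopy class of $f$. This defines a map $[X,\eml]_\Ztwo\to H^i_{\Ztwo}(X;\pi_i(\eml))$.

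It remains to show this map is a bijection. For surjectivity: given a cocycle, modify $f_0$ orbit by orbit over the $i$-cells of $X$ by the prescribed elements of $\pi_i(\eml)$; the cocycle condition is precisely what makes the modified map extend over the $(i+1)$-cells, and triviality of $\pi_k(\eml)$ for $k>i$ lets it extend over all remaining skeleta with no further obstruction. For injectivity: if $d(f,f_0)$ and $d(g,f_0)$ differ by the coboundary of an $(i-1)$-cochain, that cochain prescribes an equivariant homotopy on $X_{i-1}$ after which $f$ and $g$ agree on the whole $i$-skeleton; since $\pi_k(\eml)=0$ for $k>i$, this partial homotopy extends over all of $X$, so $[f]=[g]$.

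The only genuinely nontrivial point — the rest being routine bookkeeping — is setting up the equivariant obstruction theory correctly: one must know that, for equivariant maps into a free $\Ztwo$-CW complex, obstructions and difference cochains live in the Bredon groups $H^k_{\Ztwo}(X;\pi_k(\eml))$ as defined here, and that a $\grLambda$-equivariant choice of element of $\pi_i(\eml)$ on each orbit assembles into a genuine $\grLambda$-linear cochain on $C^{\Ztwo}_\bullet(X)$. Once the coefficient system is pinned down (in the free case it is concentrated on the single free orbit type, with value $\pi_i(\eml)$ as a $\grLambda$-module), the argument is the verbatim equivariant translation of the classical proof that Eilenberg--MacLane spaces represent ordinary cohomology \cite[Theorem~4.57]{Hat02}; the abelianness hypothesis for $i=1$ is only the usual basepoint subtlety and is moot in our application, where $i=2$. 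A more structural alternative would be to invoke the general equivariant representability theorem, producing an equivariant Eilenberg--MacLane space $K_{\Ztwo}(\underline N,i)$ with $[X,K_{\Ztwo}(\underline N,i)]_\Ztwo\cong H^i_\Ztwo(X;\underline N)$, and then to check via the equivariant Whitehead theorem that our $\eml$ is such a space for the coefficient system determined by $N=\pi_i(S^2)$; this trades the hands-on obstruction theory for the machinery behind that representability statement (as in \cite[Chapter~II]{MayCP96}).
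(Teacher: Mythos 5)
The paper does not prove this statement at all: it is quoted verbatim from \citet[Theorem II.3.17]{Die87} (see also \cite[Chapter~II]{MayCP96}), so there is no in-paper argument to compare against. Your reconstruction is the standard equivariant obstruction-theory proof that underlies the cited result, and it is sound: reduce to agreement with a reference map $f_0$ on the $(i-1)$-skeleton using $(i-1)$-connectedness, form the $\grLambda$-linear difference cocycle, and get bijectivity from the cocycle/coboundary calculus together with vanishing of $\pi_k(\eml)$ for $k>i$. Two remarks on scope rather than correctness: you prove only the free case (freeness of $\eml$ from Lemma~\ref{lem:Postnikov} forcing freeness of $X$), whereas the theorem as stated allows arbitrary $\Ztwo$-CW complexes and, in general, requires genuine Bredon cohomology with a coefficient system rather than the naive $\Hom_\grLambda(C^{\Ztwo}_\bullet(X),N)$ complex; however, this restriction is exactly consistent with Definition~\ref{def:bredon-cohomology} as given in the paper and with the application to $X=T^n$ and the $\eml$ of Lemma~\ref{lem:Postnikov}, so nothing needed here is missing. (Also note the statement's ``$\pi_n(\eml)$'' should read $\pi_i(\eml)$, as your argument implicitly assumes.)
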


We will apply Theorem~\ref{lem:Eilenberg-MacLane} in the case where $X = T^n$ (with the diagonal action) and $\eml$ is the Eilenberg--MacLane space constructed in Lemma~\ref{lem:Postnikov}. The last remaining ingredient for the proof of Lemma~\ref{lem:number-homotopy-classes} is the following result on the Bredon cohomology of the torus $T^n$, which we will prove in Section~\ref{sec:computing_cohomology_torus} below:

\begin{restatable}{proposition}{cohomologytorus} \label{prop:cohomology_torus}
  For all $n, d\geq 1$,
  \[
    H_{\Z_2}^d\left(T^n; \pi_2(S^2)\right) \cong \Z_2^{\binom{n-1}{d-1}}.
  \]
\end{restatable}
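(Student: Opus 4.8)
The plan is to use freeness of the $\Z_2$-action to replace the Bredon cohomology of $T^n$ by ordinary cohomology of the quotient torus with a rank-one local coefficient system, and then evaluate that by a short group-cohomology computation. Throughout, $T^n=\geom{\Gamma_L^n}$ carries the diagonal antipodal involution, which under $S^1\cong\R/\Z$ (sending $\Z_L$ to $\tfrac1L\Z/\Z$) is translation by $\boldsymbol v=(\tfrac12,\dots,\tfrac12)$, hence free.

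\textbf{Step 1: reduce to local coefficients on the quotient.} Since the action is free, every orbit of $d$-cells of $\Gamma_L^n$ is free, so $C^{\Z_2}_d(T^n)$ is a free $\grLambda$-module on the $d$-cells of the quotient CW complex $T^n/\Z_2$. Choosing a lift of each cell therefore identifies the equivariant cochain complex $\mathrm{Hom}_\grLambda\bigl(C^{\Z_2}_\bullet(T^n),M\bigr)$ with the cochain complex of $T^n/\Z_2$ with coefficients in the local system $\underline M$ determined by the composite $\pi_1(T^n/\Z_2)\to\Z_2\to\mathrm{Aut}(M)$, the first arrow classifying the double cover $T^n\to T^n/\Z_2$. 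Thus $H^d_{\Z_2}(T^n;M)\cong H^d(T^n/\Z_2;\underline M)$. We take $M=\zminus=\pi_2(S^2)$; the point I would verify carefully is that the antipodal involution of $S^2$ has degree $(-1)^3=-1$, so it acts on $\pi_2(S^2)\cong H_2(S^2)\cong\Z$ by $-1$, i.e.\ $\pi_2(S^2)$ is the \emph{sign} representation $\zminus$ (not the trivial one) — this is precisely what makes the final answer $2$-torsion.

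\textbf{Step 2: identify the quotient and the twist.} Writing $T^n=\R^n/\Z^n$, the involution is translation by $\boldsymbol v$, so $T^n/\Z_2=\R^n/\Gamma$ with $\Gamma=\Z^n+\Z\boldsymbol v$ a lattice of covolume $\tfrac12$; hence $T^n/\Z_2$ is again an $n$-torus and $\pi_1(T^n/\Z_2)=\Gamma\cong\Z^n$. The double cover corresponds to $\Z^n\hookrightarrow\Gamma$ of index $2$, i.e.\ to the character $\epsilon\colon\Gamma\to\Gamma/\Z^n\cong\Z_2$. In the basis $f_1=\boldsymbol v$, $f_i=e_i$ for $2\le i\le n$ of $\Gamma$ one checks $\Z^n=\langle 2f_1,f_2,\dots,f_n\rangle$, so $\epsilon$ is reduction of the first coordinate mod $2$; concretely $\underline{\zminus}$ is the $\Z^n$-module $\Z$ on which $f_1$ acts by $-1$ and $f_2,\dots,f_n$ act trivially. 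Therefore $H^d_{\Z_2}(T^n;\pi_2(S^2))\cong H^d(\Z^n;\Z_\epsilon)$.

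\textbf{Step 3: the group-cohomology computation.} Split $\Z^n=\langle f_1\rangle\times\langle f_2,\dots,f_n\rangle\cong\Z\times\Z^{n-1}$, the second factor acting trivially, and run the Lyndon--Hochschild--Serre (equivalently Künneth) spectral sequence $E_2^{p,q}=H^p(\Z^{n-1};H^q(\Z;\Z_\epsilon))\Rightarrow H^{p+q}(\Z^n;\Z_\epsilon)$. Since $f_1$ acts by $-1$ we get $H^0(\Z;\Z_\epsilon)=(\Z_\epsilon)^{f_1}=0$ and $H^1(\Z;\Z_\epsilon)=\mathrm{coker}(-2\colon\Z\to\Z)\cong\Z_2$, with $H^q=0$ for $q\ge 2$ as $\mathrm{cd}\,\Z=1$. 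Hence $E_2$ is concentrated in the row $q=1$, where $E_2^{p,1}=H^p(\Z^{n-1};\Z_2)\cong\Z_2^{\binom{n-1}{p}}$; the spectral sequence degenerates and $H^d(\Z^n;\Z_\epsilon)\cong\Z_2^{\binom{n-1}{d-1}}$, which is the claim. The computation itself is routine; the genuine care is in Step 1 — fixing conventions so that Bredon cohomology of the free complex really is local-coefficient cohomology of the quotient, and correctly pinning down the sign action on $\pi_2(S^2)$ (a sign slip here would replace the mod-$2$ answer by the torsion-free $\Z^{\binom nd}$). An alternative, more self-contained route works directly with $C^{\Z_2}_\bullet(T^n)\cong C^{\Z_2}_\bullet(S^1)^{\otimes_\Z n}$ (diagonal action) and peels off one circle at a time via the shearing isomorphism $\grLambda\otimes_\Z N\cong\grLambda\otimes_\Z N_0$; this avoids local systems but is clumsier because $\grLambda$ is not a PID, so I would prefer the quotient argument above.
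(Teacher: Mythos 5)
Your argument is correct, but it follows a genuinely different route from the paper's. You use the standard fact that for a free action Bredon cohomology coincides with the cohomology of the quotient with local coefficients, observe that $T^n/\Ztwo$ is again an $n$-torus, hence a $K(\Z^n,1)$ with fundamental group the lattice $\Gamma=\Z^n+\Z\boldsymbol v$, and then compute the group cohomology $H^d(\Z^n;\Z_\epsilon)$ of the sign character via the Lyndon--Hochschild--Serre spectral sequence, which collapses onto the row $q=1$ because $H^0(\Z;\Z_\epsilon)=0$ and $H^1(\Z;\Z_\epsilon)\cong\Z_2$. The paper instead stays inside ordinary integral cohomology: it uses the coefficient short exact sequence $0\to\frk I\to\grLambda\to\zminus\to 0$, identifies the Bredon cochain complexes with coefficients $\frk I$ and $\grLambda$ with $C^\bullet(T^n/\Ztwo;\Z)$ and $C^\bullet(T^n;\Z)$ respectively (Lemma~\ref{lem:key_lemma_cohomology}), so that the long exact sequence exhibits $H^d_{\Ztwo}(T^n;\zminus)$ as $\coker p^*_d$, and then computes $p^*$ explicitly on the cup-product basis $x^I$ after replacing the diagonal action by one rotating only the first coordinate (Lemma~\ref{thm:simple_torus}). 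Both arguments rest on the same geometric inputs --- freeness of the action, the quotient being a torus, and $\pi_2(S^2)\cong\zminus$, which you correctly pin down via the degree $-1$ of the antipodal map --- and your collapsed spectral sequence is the exact analogue of the paper's cokernel computation in the monomial basis (the factor $H^{d-1}(\Z^{n-1};\Z_2)$ matches the $\binom{n-1}{d-1}$ basis elements $x^I$ with $1\in I$ on which $p^*$ is multiplication by $2$). What your route buys is that it avoids the ad hoc coefficient sequence and the chain-level commuting diagram, at the price of importing the (standard but convention-sensitive) identification of Bredon cohomology of a free complex with local-coefficient cohomology of the quotient and of $K(\pi,1)$ cohomology with group cohomology --- precisely the point you flag; the paper in effect re-derives the needed fragment of that identification by hand and otherwise uses only elementary exact-sequence and cup-product arguments already set up in its appendix.
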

\begin{proof}[Proof of Lemma~\ref{lem:number-homotopy-classes}]
By combining Theorem~\ref{lem:Eilenberg-MacLane} and Proposition~\ref{prop:cohomology_torus} and specializing to $d=2$, we get the following bijection:
\[
  [T^n,\eml]_\Ztwo \cong H^2_\Ztwo (T^n;\pi_2(S^2)) \cong \Z_2^{n-1}
\]
Thus, $  [T^n,\eml]_\Ztwo$ has $2^{n-1}$ elements, as we wanted to show.
\end{proof}  

\subsection{The equivariant cohomology of the torus}
\label{sec:computing_cohomology_torus}

The remainder of this appendix is devoted to proving Proposition~\ref{prop:cohomology_torus}. We begin with two technical lemmas that are useful for computing the equivariant cohomology of spaces with a free action.

\begin{lemma}
  \label{thm:free_module}
  If the action on $X$ is free and cellular, then $C_\bullet^{\Z_2}(X)$ is a chain complex of free $\grLambda$-modules.
\end{lemma}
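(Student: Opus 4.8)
The statement to prove is Lemma~\ref{thm:free_module}: if a group $\Ztwo$ acts freely and cellularly on a CW complex $X$, then $C_\bullet^{\Ztwo}(X)$ is a chain complex of free $\grLambda$-modules, where $\grLambda = \Z[\Ztwo]$. The plan is to exhibit an explicit $\grLambda$-basis of each chain group $C_d^{\Ztwo}(X)$ by choosing one cell from each free orbit. First I would recall that, by definition, $C_d(X)$ is the free abelian group on the set of $d$-dimensional cells of $X$, and the $\grLambda$-module structure is given on a cell $\sigma$ by $(n_0 + n_1\gromega)\cdot\sigma = n_0\sigma + n_1(\gromega\cdot\sigma)$, extended $\Z$-linearly. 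The $\Ztwo$-action permutes the set of $d$-cells, and since the action is free, the involution $\gromega$ has no fixed cells, so the set of $d$-cells decomposes into two-element orbits $\{\sigma, \gromega\cdot\sigma\}$.

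\textbf{The construction.} For each $d$, pick a set $S_d$ of $d$-cells containing exactly one representative from each orbit; thus the set of all $d$-cells is the disjoint union $S_d \sqcup \gromega\cdot S_d$. I claim $S_d$ is a free $\grLambda$-basis of $C_d^{\Ztwo}(X)$. For spanning: any $d$-chain is a $\Z$-linear combination $\sum_\sigma a_\sigma \sigma$ over all $d$-cells; grouping each orbit pair, the contribution $a_\sigma\sigma + a_{\gromega\cdot\sigma}(\gromega\cdot\sigma)$ equals $(a_\sigma + a_{\gromega\cdot\sigma}\gromega)\cdot\sigma$, so the chain lies in the $\grLambda$-span of $S_d$. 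For linear independence over $\grLambda$: suppose $\sum_{\sigma\in S_d}(n_0^\sigma + n_1^\sigma\gromega)\cdot\sigma = 0$. Expanding, this is $\sum_{\sigma\in S_d}\big(n_0^\sigma\sigma + n_1^\sigma(\gromega\cdot\sigma)\big) = 0$ in the free abelian group $C_d(X)$. Since the cells $\sigma$ (for $\sigma\in S_d$) and $\gromega\cdot\sigma$ (for $\sigma\in S_d$) are pairwise distinct — here freeness is used again, to guarantee $\sigma\neq\gromega\cdot\sigma$ and that distinct representatives give distinct orbits — all coefficients vanish: $n_0^\sigma = n_1^\sigma = 0$ for every $\sigma$. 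Hence $C_d^{\Ztwo}(X)$ is free on $S_d$, and this holds for every $d$.

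\textbf{Conclusion and the one thing to watch.} Since each $C_d^{\Ztwo}(X)$ is a free $\grLambda$-module and the boundary maps $\partial\colon C_d^{\Ztwo}(X)\to C_{d-1}^{\Ztwo}(X)$ are $\grLambda$-module homomorphisms (as already noted in Definition~\ref{def:bredon-cohomology}, because the attaching maps, and hence $\partial$, commute with the action), $C_\bullet^{\Ztwo}(X)$ is a chain complex of free $\grLambda$-modules, as claimed. There is essentially no obstacle here; the only point requiring care is the repeated appeal to freeness of the action — it is what ensures every cell orbit has exactly two elements (so that $\grLambda\cdot\sigma\cong\grLambda$ as a free rank-one module rather than a quotient like $\grLambda/(\gromega-1)$), and hence that the chosen representatives form a genuine free basis rather than merely a generating set.
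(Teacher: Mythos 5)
Your proposal is correct and follows essentially the same route as the paper: the paper's proof likewise chooses one representative from each orbit of $d$-cells and observes that these representatives freely generate $C_d(X)$ over $\grLambda$; you have simply spelled out the spanning and independence verifications, and correctly identified freeness of the action as the point that makes each orbit contribute a free rank-one summand.
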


\begin{proof}
  For every orbit of $d$-cells in $X$ choose a representative $\sigma$, and observe that the module $C_d(X)$ is freely generated by the set of these representatives.
\end{proof}

The above lemma implies that the functor $\Hom_\grLambda(C_d^{\Z_2}(X), {-})$ is exact for all free $\Z_2$-CW complexes $X$ and $d\geq 0$. Therefore, if we have a short exact sequence of $\grLambda$-modules
\[
  \begin{tikzcd}[cramped, sep=small]
    0\ar[r] & \frk N \ar[r, "g"] &\frk M \ar[r, "f"]&\frk Q\ar[r]&0
  \end{tikzcd}
\]
there is a corresponding short exact sequence of cochain complexes
\[
  \begin{tikzcd}[cramped, sep=small]
    0\ar[r] & C^\bullet_{\Z_2}\left(X;\frk N\right)\ar[r, "g_*"] & C_{\Z_2}^\bullet\left(X;\frk M\right) \ar[r, "f_*"]& C^\bullet_{\Z_2}\left(X;\frk Q\right)\ar[r]&0
  \end{tikzcd}
\]
and thus a long exact sequence in cohomology
\[
    \begin{tikzcd}[cramped, sep=small]
      \cdots \ar[r]
      & H_{\Z_2}^i(X; \frk N) \arrow[r]
      & H_{\Z_2}^i(X;\frk M) \arrow[r]
      & H_{\Z_2}^i(X;\frk Q)
      \arrow[d, phantom, ""{coordinate, name=Z}]
      \arrow[dl, rounded corners, to path={
        -- ([xshift=2ex]\tikztostart.east)
        |- (Z) [near end] \tikztonodes
        -| ([xshift=-2ex]\tikztotarget.west)
      -- (\tikztotarget)}] \\
      && H_{\Z_2}^{i+1}(X;\frk N) \arrow[r]
      & H^{i+1}_{\Z_2}(X; \frk M) \arrow[r]
      & \cdots
    \end{tikzcd}
\]

Note that $\Z$ admits exactly two non-isomorphic structures of $\grLambda$-module: either $\nu\cdot 1 = 1$ (in which case the action is trivial and we denote the module as $\zplus$) or $\nu\cdot 1 = -1$ (in which case the action is non trivial and we denote the module as $\zminus$).
We have the following diagram of $\grLambda$-modules, where $\frk I = (1+\gromega)\grLambda$ is the ideal generated by $1+\gromega$ and $\frk P$ is the module $\Z \oplus \Z$ where the action flips the two coordinates, i.e., $\gromega(n_0, n_1) = (n_1, n_0)$:
\[
\begin{tikzcd}
  \zplus \arrow[d, "d"', hook] \arrow[r, "\phi_1"] & \frk I \arrow[d, "\iota", hook] \\
  \frk P \arrow[r, "\phi_2"]          & {\grLambda}                  
\end{tikzcd}
\]
The maps in this diagram are defined as: $d(n) = (n, n)$, $\iota$ is the inclusion, $\phi_1(n) = n(1 + \gromega)$, and $\phi_2(n_0, n_1) = n_0 + n_1 \omega$. Moreover, note that $\phi_1$ and $\phi_2$ are isomorphisms of $\grLambda$-modules, hence the corresponding induced cochain maps
\begin{align*}
  (\phi_1)_*\colon &C_{\Z_2}^\bullet(X; \zplus)\to C_{\Z_2}^\bullet(X; \frk I)&\hfill
    & (\phi_1)_*\colon \alpha \mapsto \phi_1\circ \alpha \\
  (\phi_2)_*\colon &C_{\Z_2}^\bullet(X; \frk P) \to C_{\Z_2}^\bullet(X; \grLambda)&\hfill
    & (\phi_2)_*\colon \alpha \mapsto \phi_2\circ \alpha
\end{align*}
are isomorphisms of cochain complexes.

Assume now that $X$ is a CW complex with a free cellular $\Z_2$-action, and let $p\colon X\to {X} / {\Z_2}$ be the projection map that maps each element of $X$ to its orbit.
We have two isomorphisms of chain complexes of Abelian groups:
\begin{align*}
  h_1\colon C^\bullet(\faktor{X}{\Z_2}; \Z) &\to C^\bullet_{\Z_2}(X; \zplus)
    & h_1(\alpha)\colon \sigma &\mapsto \alpha(p(\sigma)) \\
  h_2\colon C^\bullet(X; \Z) &\to C^\bullet_{\Z_2}(X; \frk P)
    & h_2 (\alpha)\colon \sigma &\mapsto \bigl(\alpha(\sigma), \alpha(\gromega\cdot \sigma)\bigr).
\end{align*}

\begin{lemma} \label{lem:key_lemma_cohomology}
  Let $X$ be a CW complex with a free $\Z_2$-action, then the following diagram commutes
  \[
    \begin{tikzcd}
      C^\bullet(\faktor{X}{\Z_2}; \Z) \ar[d, "p^*"] \ar[r, "h_1"] & C^\bullet_{\Z_2}(X; \zplus) \ar[d, "d_*"]\ar[r, "(\phi_1)_*"] & C^\bullet_{\Z_2}(X; \frk I) \ar[d, "\iota_*"]\\
      C^\bullet(X; \Z) \ar[r, "h_2"] &C^\bullet_{\Z_2}(X; \frk P) \ar[r, "(\phi_2)_*"] & C^\bullet_{\Z_2}(X; \grLambda)
    \end{tikzcd}
  \]
  Moreover, as already noted, all the horizontal homomorphisms (i.e. $h_i$ and $(\phi_i)_*$) are isomorphisms of cochain complexes.
\end{lemma}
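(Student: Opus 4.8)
The plan is to verify commutativity of the two squares by a direct diagram chase: take an arbitrary cochain in the relevant corner, push it around the square both ways, and evaluate the resulting $\grLambda$-module homomorphism on an arbitrary cell of $X$. Since every arrow in the diagram is given by an explicit pointwise formula, each square collapses to a one-line identity in the appropriate $\grLambda$-module, so the argument is essentially bookkeeping rather than a genuine difficulty. Before starting I would record two simplifications. First, the vertical maps $p^*$, $d_*$, $\iota_*$ are the cochain maps induced respectively by a cellular map and by $\grLambda$-module homomorphisms, hence they commute with the coboundary operators automatically; so it suffices to check commutativity of each square degreewise on cochains, and the coboundary maps never enter. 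Second, since the $\Z_2$-action on $X$ is free, $p\colon X\to\faktor{X}{\Z_2}$ is a covering map, so (for the compatible cell structures implicit in the definitions of $h_1$ and $h_2$) every cell $\sigma$ of $X$ is carried homeomorphically onto the cell $p(\sigma)$ and $p^*(\alpha)(\sigma)=\alpha(p(\sigma))$.

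For the left square, take $\alpha\in C^i(\faktor{X}{\Z_2};\Z)$ and a cell $\sigma$ of $X$. Going right then down gives $d_*\bigl(h_1(\alpha)\bigr)(\sigma)=d\bigl(\alpha(p(\sigma))\bigr)=\bigl(\alpha(p(\sigma)),\alpha(p(\sigma))\bigr)$, while going down then right gives $h_2\bigl(p^*(\alpha)\bigr)(\sigma)=\bigl(p^*(\alpha)(\sigma),\,p^*(\alpha)(\gromega\cdot\sigma)\bigr)=\bigl(\alpha(p(\sigma)),\,\alpha(p(\gromega\cdot\sigma))\bigr)$, and the two agree since $p(\gromega\cdot\sigma)=p(\sigma)$. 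For the right square, take $\beta\in C^i_{\Z_2}(X;\zplus)$ and a cell $\sigma$. Going right then down gives $\iota_*\bigl((\phi_1)_*(\beta)\bigr)(\sigma)=\beta(\sigma)(1+\gromega)$, viewed as an element of $\grLambda$, while going down then right gives $(\phi_2)_*\bigl(d_*(\beta)\bigr)(\sigma)=\phi_2\bigl(\beta(\sigma),\beta(\sigma)\bigr)=\beta(\sigma)+\beta(\sigma)\gromega$, the same element of $\grLambda$. This establishes that both squares commute.

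It remains to recall why the horizontal maps are isomorphisms of cochain complexes. The maps $(\phi_1)_*$ and $(\phi_2)_*$ are induced by the $\grLambda$-module isomorphisms $\phi_1\colon\zplus\to\frk I$ and $\phi_2\colon\frk P\to\grLambda$ noted before the lemma, and post-composing with a module isomorphism induces an isomorphism on each $\Hom_\grLambda(C_i^{\Z_2}(X),-)$ that commutes with coboundaries. For $h_1$: a $\grLambda$-equivariant cochain on $X$ with values in $\zplus$ is nothing but an integer-valued function on the $i$-cells of $X$ that is constant on $\Z_2$-orbits, i.e.\ a cellular cochain on $\faktor{X}{\Z_2}$, and the coboundary is read off from $\partial$ identically upstairs and downstairs. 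For $h_2$: using that $C_\bullet(X)$ is a complex of free $\grLambda$-modules (Lemma~\ref{thm:free_module}) together with $\gromega^2=1$, a $\grLambda$-equivariant cochain valued in $\frk P$ is determined freely by its first coordinate, which ranges over all integral cellular cochains of $X$, so $h_2$ is a bijection, and it commutes with the coboundary maps because these are dual to the $\grLambda$-equivariant boundary operator $\partial$. The only step that needs a moment of care — and this is as close to an obstacle as the proof gets — is checking that the pointwise prescriptions for $h_1(\alpha)$ and $h_2(\alpha)$ genuinely define $\grLambda$-linear maps on $C_\bullet(X)$, and not merely $\Z$-linear ones; this is exactly where freeness over $\grLambda$ and the relation $\gromega^2=1$ are used.
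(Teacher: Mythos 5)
Your proof is correct and follows essentially the same route as the paper: the left square is verified by exactly the same pointwise computation using $p(\gromega\cdot\sigma)=p(\sigma)$, and your explicit check of the right square is just the unwound form of the paper's one-line appeal to functoriality of $C^\bullet_{\Z_2}(X;-)$ applied to the commuting square of coefficient modules $\iota\circ\phi_1=\phi_2\circ d$. Your extra remarks on why $h_1$, $h_2$, $(\phi_1)_*$, $(\phi_2)_*$ are isomorphisms merely spell out what the paper records before the lemma, so there is no substantive difference.
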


\begin{proof}
  The right square commutes by functoriality of $C_{\Z_2}^\bullet(X; {-})$, therefore it is enough to show that the left square commutes, i.e., that, for any $d\geq 0$ and any $\alpha\in C^d(X/\Ztwo; \Z)$, $h_2(p^*(\alpha)) = d_*(h_1(\alpha))$:
  We have, for all $\sigma\in C_d^\Ztwo(X)$,
  \begin{multline*}
  h_2(p^*(\alpha))(\sigma)
    = (p^*(\alpha)(\sigma), p^*(\alpha)(\nu\cdot\sigma))
    = \bigl(\alpha(p(\sigma)), \alpha(p(\nu\cdot\sigma))\bigr)
   \\
    = \bigl(\alpha(p(\sigma)), \alpha(p(\sigma))\bigr)
    = (h_1(\alpha)(\sigma), h_1(\alpha)(\sigma))
    = d_*(h_1(\alpha))(\sigma)
  \end{multline*}
as claimed.
\end{proof}

The next ingredient is the following well known result about the action of antipodality on $\pi_2(S^2)$.

\begin{lemma}
  The second homotopy group $\pi_2(S^2)$ is isomorphic as a $\grLambda$-module to $\zminus=\Z$ where the multiplication by $\gromega$ is $\gromega\cdot n = -n$.
\end{lemma}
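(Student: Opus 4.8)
The statement is that $\pi_2(S^2)$, with the $\grLambda$-module structure induced by the antipodal involution on $S^2$, is isomorphic to $\zminus$. The plan is to compute the action of $\gromega$ on a generator of $\pi_2(S^2)\cong\Z$ directly, by identifying what the antipodal map does in degree $2$. First I would recall that $\pi_2(S^2)\cong\Z$ is generated by the homotopy class of the identity map $1_{S^2}\colon S^2\to S^2$, and that for any map $g\colon S^2\to S^2$, the induced endomorphism $g_*$ of $\pi_2(S^2)$ is multiplication by $\deg(g)$ (this is the Hurewicz isomorphism together with the fact that $H_2(S^2)\cong\Z$ and $g_*$ on $H_2$ is multiplication by the degree; see \cite[\S 4.1]{Hat02}). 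Since the $\grLambda$-module structure is by definition $\gromega\cdot[\varphi] = [\nu\circ\varphi]$ where $\nu\colon S^2\to S^2$ is the antipodal map, it suffices to determine $\nu_*$ on $\pi_2(S^2)$, which equals multiplication by $\deg(\nu)$.

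The key computation is thus $\deg(\nu) = -1$, which is the classical fact that the antipodal map on $S^n$ has degree $(-1)^{n+1}$ (\cite[Proposition 2.31]{Hat02}), specialised to $n=2$. Concretely, the antipodal map $x\mapsto -x$ on $S^2\subset\R^3$ is the composition of the three reflections $x_i\mapsto -x_i$, each of which has degree $-1$, so $\deg(\nu)=(-1)^3=-1$. Therefore $\gromega\cdot[\varphi] = \deg(\nu)\cdot[\varphi] = -[\varphi]$ for every $[\varphi]\in\pi_2(S^2)$, which is exactly the defining relation of the $\grLambda$-module $\zminus$. Fixing an isomorphism of abelian groups $\pi_2(S^2)\cong\Z$ (say, sending $[1_{S^2}]$ to $1$) then gives the claimed isomorphism of $\grLambda$-modules $\pi_2(S^2)\cong\zminus$.

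There is no real obstacle here; the only point requiring care is to make sure the $\grLambda$-action used in Definition~\ref{def:bredon-cohomology} and the surrounding discussion is indeed the one induced by $\nu_\#$ on homotopy classes (equivalently $\nu_*$ on $H_2$), rather than some other normalisation, and to invoke the degree-of-the-antipodal-map computation for the correct sphere. Since $\dim S^2$ is even, $(-1)^{2+1}=-1\neq 1$, so the action is genuinely non-trivial, confirming that the relevant module is $\zminus$ and not $\zplus$.
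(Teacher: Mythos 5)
Your proposal is correct and follows essentially the same route as the paper: both arguments evaluate the $\gromega$-action on the generator $[1_{S^2}]$ and reduce the claim to the classical fact that the antipodal map on $S^2$ has degree $-1$ (being a composition of three reflections, or homotopic to a reflection). The only cosmetic difference is that you invoke the general statement that $\nu_*$ acts on $\pi_2(S^2)$ as multiplication by $\deg(\nu)$ via Hurewicz, whereas the paper simply notes that the only two possible module structures on $\Z$ are $\zplus$ and $\zminus$ and rules out the trivial one because $\nu$ is not homotopic to the identity.
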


\begin{proof}
  The statement follows from the fact that the degree of the antipodal map on $S^2$ is $-1$ which may be shown by observing that antipodality is homotopic to the reflection (see \cite[Section 2.2]{Hat02}).
  In more detail, since the action must be bijective, there are two options: either $\gromega \cdot n = n$, or $\gromega \cdot n = -n$. We can distinguish these two cases by evaluating on a generator of $\pi_2(S^2)$, i.e., on the homotopy class of $1_{S^2}$:
  \[
    \gromega\cdot [1_{S^2}] = [\gromega \circ 1_{S^2}] = [\gromega]
  \]
  Finally, the degree argument shows that $\gromega$ is not homotopic to the identity, and hence $\gromega\cdot n = -n$.
\end{proof}

The module $\zminus$ has some very useful properties that will allow us to use Lemma~\ref{lem:key_lemma_cohomology}. In particular, we have the following lemma.

\begin{lemma}\label{thm:coefficient_structure}
  The module $\zminus$ is generated by a single element and $\Ann(\zminus) = \frk I$ the ideal generated by $1+\gromega$.
\end{lemma}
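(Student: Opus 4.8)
The plan is to verify both assertions by a direct computation in the group ring $\grLambda = \Z[\gromega]/(\gromega^2-1)$, whose elements I write uniquely as $a + b\gromega$ with $a, b \in \Z$, using only that $\zminus$ is the abelian group $\Z$ on which $\gromega$ acts by $\gromega\cdot n = -n$.

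For the first claim, I would simply observe that $\Z$ sits inside $\grLambda$ as the subring of elements $a + 0\cdot\gromega$, and the restriction of the $\grLambda$-action on $\zminus$ to this subring is the standard $\Z$-module structure on $\Z$. Hence the element $1 \in \zminus$ generates $\zminus$ already as a $\Z$-module, and a fortiori as a $\grLambda$-module, since every $m \in \zminus$ equals $m\cdot 1$.

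For the second claim, I would first note that, because $\zminus$ is generated by $1$, an element $\lambda \in \grLambda$ lies in $\Ann(\zminus)$ if and only if $\lambda \cdot 1 = 0$. Writing $\lambda = a + b\gromega$ and using $\gromega\cdot 1 = -1$ gives $\lambda\cdot 1 = a - b$, so $\Ann(\zminus) = \{\, a(1+\gromega) : a \in \Z \,\} = \Z\cdot(1+\gromega)$. It then remains to identify this with the ideal $\frk I = (1+\gromega)\grLambda$, which follows from the identity $(a+b\gromega)(1+\gromega) = (a+b)(1+\gromega)$ (using $\gromega^2 = 1$); this shows $\grLambda\cdot(1+\gromega) = \Z\cdot(1+\gromega)$, and therefore $\Ann(\zminus) = \frk I$, as claimed.

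There is no real obstacle here; the computation is entirely routine. The only point that requires a little care is to keep the two distinct $\grLambda$-module structures on $\Z$ straight — the module $\zminus$, where $\gromega$ acts by $-1$, as opposed to $\zplus$ — so that the sign in $\gromega\cdot 1 = -1$ is used, which is exactly what forces the annihilator to be $\frk I$ rather than $0$.
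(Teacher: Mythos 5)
Your proof is correct and follows essentially the same route as the paper's: since $\zminus$ is generated by $1$, the annihilator is computed by evaluating $\lambda = a + b\gromega$ on $1$, giving $a - b = 0$, i.e.\ $\Ann(\zminus) = \frk I$. The only difference is that you spell out the identification $\Z\cdot(1+\gromega) = (1+\gromega)\grLambda$, which the paper leaves implicit.
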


\begin{proof}
  The module $\zminus$ is generated by $1$ therefore $\lambda \in \Ann(\zminus)$ if and only if $\lambda \cdot 1 = 0$; hence, if $\lambda = n_0 + n_1 \gromega $, then
  \[
    \lambda \cdot 1 = n_0 - n_1
  \]
  Therefore, $\lambda\cdot 1 = 0$ if and only if $n_0 = n_1$ if and only if $\lambda \in (1+\gromega)$.
\end{proof}

The last ingredient we will need is to determine what the projection map $p^*$ does on the level of cohomology.
While it is possible to compute $p^*$ directly, it is easier to view the action on the torus from a different perspective to simplify the calculations:

\begin{lemma}
  \label{thm:simple_torus}
  Let $X$ be the torus $T^n\subseteq \mathbb C^n$ with the diagonal $\Z_2$-action given by the multiplication with $-1$ (i.e., $\gromega\cdot(z_1, \dots, z_n) = (-z_1, \dots, - z_n))$.
  Let $Y$ be the same torus but with $\Z_2$ acting only on the first coordinate (i.e., $\gromega\cdot(z_1, \dots, z_n) = (- z_1, z_2, \dots, z_n)$).
  Then there is a $\Z_2$-equivariant homeomorphism $X\to Y$.
\end{lemma}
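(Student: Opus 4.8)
The plan is to exhibit an explicit $\Z_2$-equivariant homeomorphism $\Phi\colon X \to Y$ by ``untwisting'' the diagonal action one coordinate at a time, using the fact that the involution in each coordinate is multiplication by $-1$, which is the same as $z\mapsto e^{i\pi}z$, i.e.\ a rotation. Concretely, the first thing I would do is recall that on $X$ the involution is $(z_1,\dots,z_n)\mapsto (-z_1,\dots,-z_n)$, while on $Y$ it is $(z_1,\dots,z_n)\mapsto (-z_1,z_2,\dots,z_n)$. The point is that on the subtorus $\{z_1\}\times(S^1)^{n-1}$ we must kill the sign in coordinates $2,\dots,n$, and the sign of $z_1$ ``tells us'' whether we are in the fundamental domain or its antipodal copy. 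So I would define
\[
  \Phi(z_1,\dots,z_n) = \bigl(z_1,\; w(z_1)\,z_2,\; \dots,\; w(z_1)\,z_n\bigr),
\]
where $w\colon S^1 \to S^1$ is a fixed continuous map with the property that $w(-z_1) = -w(z_1)$ for all $z_1 \in S^1$; for instance, writing $z_1 = e^{i\theta}$ with $\theta\in[0,2\pi)$, one may take $w(e^{i\theta}) = e^{i\theta/2}$ — this is genuinely continuous on $S^1$ viewed as the quotient $\mathbb R/2\pi\mathbb Z$ only up to the usual sign ambiguity, so the honest choice is to pick $w$ to be any continuous lift of ``half-angle'' on the double cover, which exists because $S^1\to S^1$, $z\mapsto z^2$, is a covering and the identity lifts after removing a point; more simply, take $w(e^{i\theta}) = e^{i\theta/2}$ for $\theta\in[0,2\pi)$ and check continuity directly, noting the jump at $\theta=0$ is absorbed because $w(1)=1$ and $\lim_{\theta\to 2\pi^-} w(e^{i\theta}) = e^{i\pi} = -1$ — which is exactly the point where the $Y$-action and $X$-action differ by a sign in the first coordinate. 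I will therefore set it up so that the discontinuity of $w$ is matched by the discontinuity of the coordinate change, making $\Phi$ continuous.

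The cleaner route, which I would actually carry out, is to pass to the universal cover. Write $T^n = \mathbb R^n/\mathbb Z^n$ (identifying $S^1 = \mathbb R/\mathbb Z$ with $z = e^{2\pi i x}$), so that the diagonal action on $X$ is $x \mapsto x + (\tfrac12,\dots,\tfrac12)$ and the action on $Y$ is $x \mapsto x + (\tfrac12,0,\dots,0)$. Now define the linear map $\Psi\colon \mathbb R^n \to \mathbb R^n$ by $\Psi(x_1,x_2,\dots,x_n) = (x_1,\; x_2 - x_1,\; x_3 - x_1,\; \dots,\; x_n - x_1)$. This is invertible over $\mathbb Z$ (its matrix is unipotent triangular), hence descends to a homeomorphism $\bar\Psi\colon T^n \to T^n$. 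One checks $\Psi\bigl(x + (\tfrac12,\dots,\tfrac12)\bigr) = \Psi(x) + (\tfrac12, 0,\dots,0)$, which says precisely that $\bar\Psi$ intertwines the two $\Z_2$-actions, i.e.\ $\bar\Psi$ is the desired equivariant homeomorphism $X \to Y$. I would include the one-line verification of this identity and the remark that $\Psi \in GL_n(\mathbb Z)$ so $\bar\Psi$ and $\bar\Psi^{-1}$ are both well-defined continuous maps on the torus.

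The only subtlety — and the step I would be most careful about — is bookkeeping the identification $S^1 \cong \mathbb R/\mathbb Z$ and making sure the ``multiplication by $-1$'' on the complex circle corresponds to ``translation by $\tfrac12$'' on $\mathbb R/\mathbb Z$, and that the linear change of coordinates genuinely preserves the integer lattice (so that it is a homeomorphism of the torus, not merely of $\mathbb R^n$). Both are routine: $e^{2\pi i(x+1/2)} = -e^{2\pi i x}$, and the matrix of $\Psi$ has determinant $\pm 1$ with integer entries, as does its inverse. Everything else — continuity, bijectivity, equivariance — is immediate from these observations, so no further obstacle arises.
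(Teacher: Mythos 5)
Your universal-cover argument is correct and is essentially the paper's proof: the unipotent matrix $\Psi(x_1,\dots,x_n)=(x_1,x_2-x_1,\dots,x_n-x_1)$ descends precisely to the paper's explicit homeomorphism $h(z_1,\dots,z_n)=(z_1,z_1^{-1}z_2,\dots,z_1^{-1}z_n)$, with the same one-line equivariance check. (The half-angle detour in your first paragraph is both unsound --- there is no continuous square root on $S^1$ --- and unnecessary, since any continuous odd map such as $w(z)=z$ already works there; but you discard that route, so the proof you actually carry out stands.)
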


\begin{proof}
  The maps $h\colon X \to Y$ and $h'\colon Y \to X$ defined by
  \begin{align*}
    h\colon (z_1, \dots, z_n) &\mapsto (z_1, z_1^{-1} z_2, \dots, z_1^{-1} z_n) \\
    h'\colon (z_1, \dots, z_n) &\mapsto (z_1, z_1 z_2, \dots, z_1 z_n)
  \end{align*}
  are clearly continuous and mutually inverse.
  We will show that $h$ preserve the actions involved, and hence that $h$ is an equivariant homeomorphism:
  \begin{multline*}
    h(\gromega\cdot_X (z_1, \dots, z_n)) = h(- z_1, \dots, - z_n) = (- z_1, z_1^{-1} z_2, \dots, z_n z_1^{-1})\\
     = \gromega\cdot_Y h(z_1, \dots, z_n) \qedhere
  \end{multline*}
\end{proof}

\begin{remark}
  If we view the torus $T^n$ as the quotient of $\R^n$ by the standard lattice $\Z^n$, then Lemma~\ref{thm:simple_torus} shows that factoring out the action is the same as factoring out the lattice generated by $\{\frac{1}{2} e_1, e_2, \dots, e_n\}$.
 Hence, topologically, the quotient is still a torus.
\end{remark}

Thus, for the remaining (co)homological calculations, we can assume that $\Z_2$ acts on $T^n$ by changing only on the first coordinate.
Using this simplified action on the torus it is much easier to compute the quotient map $p^*\colon H^\bullet(T^n/\Z_2)\to H^\bullet(T^n)$.
To achieve this objective, it is necessary to fix a basis for the cohomology of the torus. 
The ideal choice would be a basis that is ``easy'' to evaluate on homology classes in order to compute easily the image of $p^*$.

In the case of the torus, a direct application of the universal coefficient theorem \cite[Section 3.1]{Hat02} show that homology and cohomology in dimension $1$ are dual to each other; hence we can choose as basis for the first cohomology group the dual of a suitable basis for the first homology group.
In particular, let $\{x_i\}$ be the basis for $H_1(T^n)$ corresponding to the standard coordinate cycles in $C_1(T^n)$ (i.e., $x_i$ corresponds to the (non-equivariant) inclusion $S^1 \hookrightarrow T^n$, $z\mapsto (0,\dots,0,z,0,\dots,0)$ in the $i$th coordinate, $1\leq i\leq n$), and denote by $\{x^i\}$ the dual basis in $H^1(T^n)\cong \Hom\left(H_1(T^n), \Z\right)$; analogously, define bases $\{q_i\}$ of $H_1({T^n}/{\Z_2})$ and $\{q^i\}$ of $H^1({T^n}/{\Z_2})$.
Then
\[
  p^*_1(q^i) =
  \begin{cases}
    2x^1 & \text{if $i=1$}\\
    x^i & \text{otherwise.}
  \end{cases}
\]

The ring structure on cohomology (see \cite[Section 3.2]{Hat02}) of the torus allows us to build a convenient basis for all the other cohomology groups out of $\{x^i\}$.
In fact, elements of the form $x^I = x^{i_1}\smile \dots\smile x^{i_d}$, where $I = (i_1, \dots, i_d)$ and $i_1 < \dots < i_d$, form a basis for $H^d(T^n)$. Let $q^I$ denote the analogous basis for $H^d({T^n}/{\Z_2})$.
Since $p^*$ is a ring map, it commutes with the cup product, hence it can be explicitly computed on such a basis. We have that, for all $d$,
\[
  p^*_d(q^I) = p^*_1(q^{i_1})\smile \dots \smile p^*_1(q^{i_d})=
  \begin{cases}
    2x^I & \mbox{ if }i_1 = 1\\
     x^I & \mbox{ else.}
  \end{cases}
\]
In particular, $p^*_d$ is injective for all $d\geq 0$ and, in this choice of basis, $p^*$ is the diagonal matrix with $\binom{n-1}{d-1}$ 2's and $\binom {n-1}{d}$ 1's on the diagonal.

We are finally ready to compute the equivariant cohomology group of the torus $T^n$ and prove Proposition~\ref{prop:cohomology_torus}.

\begin{proof}[Proof of Proposition~\ref{prop:cohomology_torus}]
  Fix $n\geq 2$.
  By Lemma~\ref{thm:coefficient_structure}, we have a short exact sequence
  \[
    \begin{tikzcd}[cramped, sep=small]
      0\ar[r] & \frk I \ar[r] & \grLambda \ar[r]& \zminus\ar[r]&0
    \end{tikzcd}
  \]
  which induces short exact sequence of cochain complexes
  \[
    \begin{tikzcd}[cramped, sep=small]
      0\ar[r] & C^\bullet_{\Z_2}\left(T^n; I\right)\ar[r] & C_{\Z_2}^\bullet\left(T^n; \grLambda\right) \ar[r]& C^\bullet_{\Z_2}\left(T^n; \zminus\right)\ar[r]&0
    \end{tikzcd}
  \]
  Using Lemma~\ref{lem:key_lemma_cohomology}, we get that the following short sequence is also exact
  \[
  \begin{tikzcd}[cramped, sep=small]
    0\ar[r] & C^\bullet\left(\faktor{T^n}{\Z_2}\right)\ar[r, "p^*"] & C^\bullet\left(T^n\right) \ar[r]& C^\bullet_{\Z_2}\left(T^n; \zminus\right)\ar[r]&0
  \end{tikzcd}
  \]
  This short exact sequence induces the following long exact sequence in cohomology
  \[
    \begin{tikzcd}[cramped, sep=small]
      \cdots \ar[r]
      & H^d\left(\faktor{T^n}{\Z_2}\right) \ar[r, "p^*_d"]
      & H^d\left(T^n\right) \ar[r]
      \arrow[d, phantom, ""{coordinate, name=Z}]
      & H_{\Z_2}^d\left(T^n; \zminus\right)
      \arrow[dl, rounded corners, to path={
        -- ([xshift=2ex]\tikztostart.east)
        |- (Z) [near end] \tikztonodes
        -| ([xshift=-2ex]\tikztotarget.west)
      -- (\tikztotarget)}] \\
      & & H^{d+1}\left(\faktor{T^n}{\Z_2}\right) \ar[r, "p^*_{d+1}"] & H^{d+1}\left(T^n\right) \ar[r] & \cdots
    \end{tikzcd}
  \]
  % JAKUB: Another snake-like long exact sequence?
  Since $p^*_d$ is injective for any $d\geq 1$, by exactness we have that
  $H_{\Z_2}^d\left(T^n; \zminus\right) \cong \coker p^*_d$.
  Finally,
  \[
    \coker p^*_d = \Z^{\binom nd} / \operatorname{im} p^*_d \simeq \Z_2^{\binom{n-1}{d-1}}
  \]
  which yields the desired result.
\end{proof}

\balance

\bibliographystyle{ACM-Reference-Format}
\bibliography{promise}

\end{document}